\let\ten\natural
\newcommand{\half}{\tfrac12}
\newcommand{\fg}{\mathfrak{g}}
\newcommand{\fk}{\mathfrak{k}}
\newcommand{\fq}{\mathfrak{q}}
\newcommand{\fgl}{\mathfrak{gl}}
\newcommand{\fh}{\mathfrak{h}}
\newcommand{\fm}{\mathfrak{m}}
\newcommand{\fn}{\mathfrak{n}}
\newcommand{\fp}{\mathfrak{p}}
\newcommand{\fso}{\mathfrak{so}}
\newcommand{\fosp}{\mathfrak{osp}}
\newcommand{\fsp}{\mathfrak{sp}}
\newcommand{\fsu}{\mathfrak{su}}
\newcommand{\fu}{\mathfrak{u}}
\newcommand{\SO}{\mathrm{SO}}
\newcommand{\Spin}{\mathrm{Spin}}
\newcommand{\Sp}{\mathrm{Sp}}
\newcommand{\SU}{\mathrm{SU}}
\newcommand{\GL}{\mathrm{GL}}
\newcommand{\RR}{\mathbb{R}}
\newcommand{\CC}{\mathbb{C}}
\newcommand{\ZZ}{\mathbb{Z}}
\newcommand{\eD}{\mathscr{D}}
\newcommand{\be}{\mathbold{e}}
\newcommand{\bx}{\mathbold{x}}
\newcommand{\by}{\mathbold{y}}
\newcommand{\nablat}{\widetilde \nabla}
\newcommand{\etati}{\widetilde \eta}
\newcommand{\Pti}{\widetilde P}
\newcommand{\Sti}{\widetilde S}
\newcommand{\gti}{\widetilde g}
\renewcommand{\Re}{\mathrm{Re}}
\DeclareMathOperator{\AdS}{AdS}
\DeclareMathOperator{\dS}{dS}
\DeclareMathOperator{\Aut}{Aut}
\DeclareMathOperator{\dvol}{dvol}
\DeclareMathOperator{\ad}{ad}
\DeclareMathOperator{\Ric}{Ric}
\DeclareMathOperator{\id}{id}
\definecolor{gray}{rgb}{0.5,0.5,0.5}
\theoremstyle{plain}
\newtheorem*{lem}{Lemma}
\theoremstyle{definition}
\newcommand{\MUNCH}[1]{\relax}
\newcommand{\interitemtext}[1]{%
\begin{list}{}
{\itemindent=0mm\labelsep=0mm
\labelwidth=0mm\leftmargin=0mm
\addtolength{\leftmargin}{-\@totalleftmargin}}
\item #1
\end{list}}
\newcommand{\subalg}[2]{\genfrac{}{}{0pt}{0}{#1}{\left<#2\right>}}
\newcommand{\Hdim}[1]{{\color{gray}\scriptstyle #1 \ar@{.}[r]}}
\begin{document}
\title{Homogeneous M2 duals}
\author[Figueroa-O'Farrill]{José Figueroa-O'Farrill}
\address[JMF]{School of Mathematics and Maxwell Institute for Mathematical
  Sciences, The University of Edinburgh, James Clerk Maxwell Building,
  The King's Buildings, Peter Guthrie Tait Road, Edinburgh EH9 3FD,
  Scotland, UK}
\author[Ungureanu]{Mara Ungureanu}
\address[MU]{Humboldt-Universität zu Berlin, Institut für Mathematik, Unter den Linden 6, 10099 Berlin, Germany}
\email{J.M.Figueroa@ed.ac.uk, ungurean@math.hu-berlin.de}
\thanks{EMPG-15-19}
\begin{abstract}
  Motivated by the search for new gravity duals to M2 branes with
  $N>4$ supersymmetry --- equivalently, M-theory backgrounds with
  Killing superalgebra $\fosp(N|4)$ for $N>4$ --- we classify
  homogeneous M-theory backgrounds with symmetry Lie algebra
  $\fso(n) \oplus \fso(3,2)$ for $n=5,6,7$.  We find that there are no
  new backgrounds with $n=6,7$ but we do find a number of new (to us)
  backgrounds with $n=5$.  All backgrounds are metrically products of
  the form $\AdS_4 \times P^7$, with $P$ riemannian and homogeneous
  under the action of $\SO(5)$, or $S^4 \times Q^7$ with $Q$
  lorentzian and homogeneous under the action of $\SO(3,2)$.  At least
  one of the new backgrounds is supersymmetric (albeit with only
  $N=2$) and we show that it can be constructed from a supersymmetric
  Freund--Rubin background via a Wick rotation.  Two of the new
  backgrounds have only been approximated numerically.  (The second
  version of this paper includes an appendix by Alexander~S.~Haupt,
  closing a gap in our original analysis.)
\end{abstract}
\maketitle
\tableofcontents

\section{Introduction}
\label{sec:introduction}

The gauge/gravity correspondence for M2 branes \cite{Malda} suggests
that with every three-dimensional superconformal field theory, there
should be associated a supersymmetric background of eleven-dimensional
supergravity, whose Killing (or more generally, symmetry) superalgebra
is isomorphic to the superconformal algebra of the field theory.  It
follows from Nahm's classification \cite{Nahm} that the
three-dimensional conformal superalgebra is isomorphic to $\fosp(N|4)$
for some $N\leq 8$.  The even subalgebra of $\fosp(N|4)$ is
$\fso(N) \oplus \fsp(4;\RR)$, where $\fsp(4;\RR) \cong \fso(3,2)$ is
the conformal algebra of three-dimensional Minkowski spacetime.  This
Lie algebra is also isomorphic to the isometry algebra of $\AdS_4$ of
which (the conformal compactification of) Minkowski spacetime is the
conformal boundary.

The original observation in \cite{Malda} makes use of the fact that
the near-horizon geometry of the elementary M2-brane solution of
eleven-dimensional supergravity \cite{DS2brane} is isometric to
$\AdS_4 \times S^7$ \cite{GibbonsTownsend}.  This is a maximally
supersymmetric background of eleven-dimensional supergravity
\cite{FreundRubin}, its Killing superalgebra is isomorphic to
$\fosp(8|4)$ and hence the dual superconformal field theory has $N=8$
supersymmetry.  One can replace $S^7$ by other manifolds admitting
real Killing spinors and in this way obtain backgrounds with Killing
superalgebra $\fosp(N|4)$ for lower values of $N$
\cite{AFHS,MorrisonPlesser}.  Recently the classification of smooth
Freund--Rubin backgrounds of the form $\AdS_4 \times X^7$ with
$N\geq 4$ has been achieved \cite{deMedeiros:2009pp}: they are
necessarily such that $X = S^7/\Gamma$, where $\Gamma < \Spin(8)$ is a
discrete group acting freely on $S^7$ and described as the image of a
twisted embeddings of an ADE subgroup of quaternions.  A
classification of singular quotients with $N\geq 4$ has also recently
been obtained \cite{deMedeiros:2010dn}, this time in terms of fibered
products of ADE subgroups.

The question remains whether there are any eleven-dimensional
supergravity backgrounds with Killing superalgebra isomorphic to
$\fosp(N|4)$ but which are \emph{not} Freund--Rubin backgrounds of the
form $\AdS_4 \times X^7$.  Classifying such backgrounds would complete
the determination of possible dual geometries to three-dimensional
superconformal field theories.  The purpose of this paper is to
investigate their existence.

It has recently been shown \cite{HomogThm} that backgrounds preserving
more than half of the supersymmetry --- i.e., $N>4$ in the present
context --- are (locally) homogeneous and moreover that it is the
group whose Lie algebra is generated by the Killing spinors of the
background which already acts transitively.  This allows us to
restrict ourselves to backgrounds which are homogeneous under a
prescribed group.  Homogeneous lorentzian manifolds can be described
locally by a pair $(\fg,\fh)$, where $\fh$ is a Lie subalgebra of
$\fg$ preserving a lorentzian inner product on the representation
$\fg/\fh$ induced by the restriction to $\fh$ of the adjoint
representation of $\fg$.  In addition, the group corresponding to
$\fh$ must be a closed subgroup of the group corresponding to $\fg$.
In this paper we are interested in the particular case where
$\fg = \fso(n) \oplus \fso(3,2)$ and $\fh$ has dimension
$\binom{n}{2} - 1$ for $n>4$.  Given the huge number of such
subalgebras, this task seems at first to be impractical or at the very
least, very tiresome.  Luckily, the fact that $\fg$ is semisimple,
allows us to exploit a wonderful theorem by Nadine Kowalsky
\cite{MR1426887}, generalised by Deffaf, Melnick and Zeghib
\cite{MR2421545}, which characterises those homogeneous lorentzian
manifolds of semisimple Lie groups.  Such lorentzian manifolds come in
two flavours: either the action is proper, in which case $\fh$ is the
Lie algebra of a compact group, or else the manifold is locally
isometric to the product of (anti) de Sitter space with a riemannian
homogeneous manifold.  In either case, we can essentially restrict to
compact subalgebras $\fh$, which are much better known, not to mention
much fewer in number.

We therefore set ourselves two tasks in this paper.  The first is the
classification (up to local isometry) of homogeneous backgrounds with
an effective and locally transitive action of
$\fg = \fso(n) \oplus \fso(3,2)$ for $n>4$, where the geometry is not
of the form $\AdS \times X$.  To this end we will first determine the
compact Lie subalgebras of $\fg$ (of the right dimension), up to the
action of automorphisms.  Lie subalgebras can be found by iterating
the simpler problem of finding maximal compact subalgebras.  Since
$\fg$ is a product, this requires the Lie algebra version of Goursat's
Lemma characterising the subgroups of a direct product of groups,
which curiously plays such a crucial rôle in the results of
\cite{deMedeiros:2009pp,deMedeiros:2010dn}.  The second task is the
classification (again up to local isometry) of homogeneous (anti) de
Sitter backgrounds with a locally transitive action of
$\fso(n) \oplus \fso(3,2)$ for $n>4$, but which are \emph{not} of
Freund--Rubin type; that is, where the flux is not just equal to the
volume form of a four-dimensional factor in the geometry.

The paper is organised as follows.  In
Section~\ref{sec:homog-lorentz-manif} we review the geometry of
homogeneous lorentzian manifolds.  In Section
\ref{sec:basic-notions-about} we settle the notation and discuss the
basics of homogeneous geometry, specialising at the end on the
lorentzian case and review briefly the results of Kowalsky and of
Deffaf, Melnick and Zeghib.  In Section \ref{sec:comp-homog-spac} we
record the necessary formulae to perform geometric computations on
homogeneous lorentzian manifolds.  In
Section~\ref{sec:lie-subalg-prod} we prove a Goursat-type lemma for
Lie algebras, characterising the Lie subalgebras of a direct product
of two Lie algebras in terms of fibered products of Lie subalgebras of
each of them.  In Section~\ref{sec:lie-subalgebras-fson-2} we record
the Lie subalgebras of $\fso(n)$ for low values of $n$.  This is
well-known material (covered, for example, in \cite{Slansky}) but our
purpose here is to have concrete formulae for the generators of the
subalgebras in terms of the standard basis of $\fso(n)$.  In
Section~\ref{sec:supergr-field-equat} we write down the field
equations for $d=11$ supergravity in a homogeneous Ansatz, which
become a system of algebraic equations in the parameters for the
metric and the 4-form.  This allows us to describe our methodology in
some detail in Section~\ref{sec:methodology}.  Section
\ref{sec:non-AdS-backgrounds} contains our solution of the first of
the above two tasks: the determination of homogeneous backgrounds of
$\SO(n)\times\SO(3,2)$ for $n>4$, which are not of anti-de Sitter
type.  We will show that there are no (new) $n>5$ backgrounds, but we
will exhibit a number of new (at least to us) backgrounds for $n=5$,
at least one of which is supersymmetric, albeit with only $N=2$.  We
will explore its geometry in more detail in
Section~\ref{sec:freund-rubin-backgr}, where we show that it is a
Freund--Rubin background with underlying geometry $S^4 \times P^7$,
where $P$ is seven-dimensional, lorentzian Sasaki--Einstein.  We also
show that the background can be obtained by a ``Wick rotation'' from a
known homogeneous $\AdS_4$ Freund--Rubin background.  In
Section~\ref{sec:homogeneous-anti-de} we tackle the second of the two
tasks above, namely: the determination of $\AdS_4$ backgrounds which
are not of Freund--Rubin type.  In Section~\ref{sec:there-are-no} we
show that there are no de Sitter backgrounds, and we exhibit a number
of new (to us) backgrounds for $n=5$ in addition to recovering some
well-known backgrounds with $n=7$ (Englert), $n=6$ (Pope--Warner) and
$n=5$ (Castellani--Romans--Warner).  In the first version of this
paper we were unable to fully analyse one case.  In this version we
include an appendix written by Alexander~S.~Haupt which closes that
gap, albeit without finding any new backgrounds.  Finally, in
Section~\ref{sec:geometry-n=5} we discuss the geometry of some of the
$n=5$ backgrounds found above: some of the backgrounds can only be
approximated numerically, and we will have little else to say about
them beyond their existence.  In particular, using the method
described in Appendix~\ref{sec:isom-homog-space}, we determine the
actual isometry group of the backgrounds, which in some cases is
slightly larger than $\SO(n) \times \SO(3,2)$.  The paper ends with
two appendices: an appendix on the determination of the full isometry
algebra of a homogeneous riemannian manifold and the appendix by Haupt
mentioned above.

\section{Homogeneous lorentzian manifolds}
\label{sec:homog-lorentz-manif}

In this section we review the basic notions concerning homogeneous
spaces and the useful formulae for reducing their differential
geometry to Lie algebraic data.

\subsection{Basic notions about homogeneous spaces}
\label{sec:basic-notions-about}

A lorentzian manifold $(M,g)$ is homogeneous if it admits a transitive
action of some Lie group by isometries.  In other words, $(M,g)$ is
homogeneous if there is a Lie group $G$ acting on $M$ smoothly,
preserving the metric and such that any two points of $M$ are related
by some element of $G$.

Let us unpack this definition.  First of all, we have an action of $G$
on $M$.  This is a smooth map $\alpha: G \times M \to M$, which we
will denote simply by $(x,p) \mapsto x \cdot p$, such that for all
$x_i \in G$ and $p \in M$, $x_1\cdot (x_2 \cdot p) = (x_1 x_2) \cdot
p$ and $e \cdot p = p$, where $e$ denotes the identity element of $G$.
The action is transitive if for some (and hence all) $p \in M$, the
map $\alpha_p: G \to M$, defined by $\alpha_p(x) = x\cdot p$ is
surjective.  The group acts by isometries if the diffeomorphisms
$\alpha_x : M \to M$, defined by $\alpha_x(p) = x \cdot p$, preserve
the metric; that is, $\alpha_x^* g = g$ for all $x \in G$.

In a homogeneous space, every point is equivalent to any other point.
Let us choose a point $o \in M$ and let us think of it as the
\emph{origin} of $M$.  Let $H_o$ denote the subgroup of $G$ which
fixes the point $o$.  Since $H_o = \alpha_o^{-1}(\{o\})$ is the
inverse image of a point under a continuous map, it is a closed
subgroup of $G$.  We call $H_o$ the \textbf{stabiliser} (subgroup) of
$o$.  Then $M$ is diffeomorphic to the space $G/H_o$ of right
$H_o$-cosets in $G$.  The point $o\in M$ corresponds to the identity
coset, whereas the point $x\cdot o$ corresponds to the coset $xH_o$,
since any one of the group elements in the coset $xH_o$ takes $o$ to
$x \cdot o$.  The differential $(\alpha_x)_*$ defines a family of
linear maps $T_o M \to T_{x \cdot o}M$.  If $x \in H_o$, then the
differential at $o$ is an invertible linear transformation of $T_o M$.
This is called the \textbf{isotropy representation} of $H_o$ on $T_o
M$.

The metric $g$ defines a lorentzian inner product $g_o$ on each
tangent space $T_o M$.  The condition $\alpha_x^* g = g$ becomes that
for all $o \in M$, $\alpha_x^* g_{x\cdot o} = g_o$.  In particular, if
$x \in H_o$, $\alpha_x^* g_o = g_o$, whence the isotropy
representation of $H_o$ is orthogonal with respect to $g_o$.

Let $\fg$ denote the Lie algebra of $G$, whose underlying vector space
we take to be the tangent space $T_eG$ at the identity in $G$.  Let
$\fh_o$ denote the Lie subalgebra of $G$ corresponding to the
stabiliser subgroup $H_o$ of $o$.  The differential at $e\in G$ of the
map $\alpha_o : G \to M$ is a linear map $\fg \to T_o M$ which is
surjective because the action is transitive and has kernel precisely
$\fh_o$.  In other words, we have an exact sequence
\begin{equation}
  \label{eq:SESaction}
  \begin{CD}
    0 @>>> \fh_o @>>> \fg @>(\alpha_o)_*>> T_o M @>>> 0~,
  \end{CD}
\end{equation}
not just of vector spaces, but in fact of $\fh_o$-modules.  Indeed,
$\fh_o$ acts on $\fg$ by restricting the adjoint representation of
$\fg$ to $\fh_o$, and $\fh_o$ is a submodule precisely because $\fh_o$
is a Lie subalgebra.  Then $T_o M$ is isomorphic as an $\fh_o$-module
to $\fg/\fh_o$.  This representation is none other than the
linearisation of the isotropy representation of $H_o$ on $T_o M$.  Let
us prove this.

Let $h(t)$ be a regular curve in $H_o$ such that $h(0)=e$.  Then
$\alpha_{h(t)}: T_o M \to T_o M$ and the action of $h'(0) \in \fh_o$
on $T_o M$ is obtained by differentiating at $t=0$.  Indeed, let $v
\in T_o M$ and choose any regular curve $\gamma(s)$ on $M$ with
$\gamma(0)=o$ and $\gamma'(0)=v$.  Then
\begin{equation}
  \begin{aligned}[m]
    h'(0) \cdot v &= \frac{d}{dt}\biggr|_{t=0} (\alpha_{h(t)})_* v\\
    &= \frac{d}{dt}\biggr|_{t=0} \frac{d}{ds}\biggr|_{s=0} \alpha_{h(t)}(\gamma(s))\\
    &= \frac{d}{dt}\biggr|_{t=0} \frac{d}{ds}\biggr|_{s=0} h(t) \cdot \gamma(s)~.
  \end{aligned}
\end{equation}
Now let $g(s)$ be a regular curve in $G$ with $g(s) \cdot o = \gamma(s)$ and $g(0)=e$.  Then
\begin{equation}
  \begin{aligned}[m]
    h'(0) \cdot v &= \frac{d}{dt}\biggr|_{t=0} \frac{d}{ds}\biggr|_{s=0} h(t) \cdot (g(s) \cdot o)\\
    &= \frac{d}{dt}\biggr|_{t=0} \frac{d}{ds}\biggr|_{s=0} (h(t) g(s)) \cdot o\\
    &= \frac{d}{dt}\biggr|_{t=0} \frac{d}{ds}\biggr|_{s=0} (h(t) g(s) h(t)^{-1})) \cdot o\\
    &= \frac{d}{dt}\biggr|_{t=0} \frac{d}{ds}\biggr|_{s=0} \alpha_o(h(t) g(s) h(t)^{-1}))\\
    &= \frac{d}{dt}\biggr|_{t=0} (\alpha_o)_* (h(t) g'(0) h(t)^{-1}))\\
    &= (\alpha_o)_* ([h'(0), g'(0)])~,
 \end{aligned}
\end{equation}
where $(\alpha_o)_* g'(0) = v$.  Choosing a different curve $\tilde
g(s)$ with $\tilde g(s)\cdot o = g(s)\cdot o$, then $\tilde h(s) =
g(s)^{-1}g(s)$ is a curve in $H_o$ with $\tilde h(0)=e$.  This means
$\tilde g(s) = g(s)\tilde h(s)$, whence ${\tilde g}'(0) = g'(0) +
{\tilde h}'(0)$, but ${\tilde h}'(0) \in \fh_o$ and hence
$(\alpha_o)_*([h'(0),{\tilde h}'(0)])=0$, so that $h'(0) \cdot v$ is
unchanged.  In other words, in order to compute the action of $X \in
\fh_o$ on $v \in T_o M$, we choose $Y \in \fg$ with $(\alpha_o)_* Y =
v$ and then compute $(\alpha_o)_*([X,Y])$, which is independent of the
lift $Y$ of $v$.

When $H_o$ is connected, the isotropy representation of $H_o$ is
determined by the above representation of $\fh_o$.   In practice we
will assume without loss of generality that $M$ is simply connected
and then the exact homotopy sequence of the principal $H_o$-bundle $G
\to M$ will imply that $H_o$ is connected.

We can realise the linear isotropy representation explicitly by
choosing a complement $\fm$ of $\fh_o$ in $\fg$, so that $\fg = \fh_o
\oplus \fm$, and defining the action of $X\in \fh_o$ on $Y \in \fm$ by
\begin{equation}
  X \cdot Y =  [X, Y]_\fm~,
\end{equation}
where, here and in the following, the subscript $\fm$ indicates the
projection onto $\fm$ along $\fh_o$; that is, we simply discard the
$\fh_o$-component of $[X, Y]$.  If $\fm$ is stable under $\ad(\fh_o)$,
so that the projection is superfluous, we say that $\fg = \fh_o\oplus
\fm$ is a \textbf{reductive split}, and the pair $(\fg,\fh_o)$ is said
to be \textbf{reductive}.  This is equivalent to the splitting (in the
sense of homological algebra) of the exact sequence
\eqref{eq:SESaction} in the category of $\fh_o$-modules.  In this
case, one often says that $(M,g)$ is reductive; although this is an
abuse of notation in that reductivity is not an intrinsic property of
the homogeneous space, but of its description as an orbit of $G$.  Not
all lorentzian homogeneous manifolds need admit a reductive
description; although it is known to be the case in dimension $\leq 4$
as a consequence of the classifications \cite{MR2287304,Komrakov}.

Different points of $M$ can have different stabilisers, but these are
conjugate in $G$, hence in particular they are isomorphic.  This is
why one often abbreviates homogeneous spaces as $G/H$, where $H$
denotes one of the $H_o$ subgroups of $G$.  Let $\fg$ denote the Lie
algebra of $G$ and let $\fh$ denote the Lie subalgebra corresponding
to the subgroup $H$.  Then a lorentzian homogeneous manifold is
described locally by a pair $(\fg,\fh)$ and an $\fh$-invariant
lorentzian inner product on $\fg/\fh$, with the proviso that $\fh$ is
the Lie algebra of a \emph{closed} subgroup of $G$.

We are interested in classifying (simply-connected) eleven-dimensional
homogeneous lorentzian manifolds with a transitive action of the
universal covering group $G$ of $\SO(n) \times \SO(3,2)$ for $n>4$.
The above discussion might suggest the problem of classifying those
Lie subalgebras $\fh$ of $\fg = \fso(n) \oplus \fso(3,2)$ of the right
dimension: namely, $\dim \fh = \binom{n}{2} -1$, which are the Lie
algebras of a closed subgroup $H$ of $G$.  Even in the relatively low
dimension we are working in, the classification of Lie subalgebras of
a semisimple Lie algebra can be a daunting task (see, e.g.,
\cite{MR0455998} for the low-dimensional (anti) de Sitter algebras).
Luckily, since $G$ is semisimple we may appeal to results of Nadine
Kowalsky \cite{MR1426887} and Deffaf, Melnick and Zeghib
\cite{MR2421545}, which reduce the task at hand considerably by
allowing us to focus on Lie algebras $\fh$ of compact subgroups of
$G$.  We will highlight the main results, which we learnt from the
recent paper \cite{Alekseevsky:2011p9355} by Dmitri Alekseevsky.

Let us recall that a continuous map between topological spaces is
called proper if the inverse image of a compact set is compact.  If
$G$ is a Lie group acting on a manifold $M$, we say that the action is
\emph{proper} if the map $f: G \times M \to M \times M$, defined by
$f(a,x) = (ax,x)$, is proper.  Given a proper action of $G$ on $M$, we
notice that $f^{-1}(x,x) = \left\{(a,x)\middle | ax =x \right\} = H
\times \{ x \}$, where $H$ is the stabiliser of $x$.  Since the action
is proper and $\{(x,x)\}$ is a compact set, so is $H$.  Now suppose
that $G$ acts properly and transitively on $M$, so that $M \cong G/H$
with $H$ compact.  Then by averaging over $H$, we can assume that the
linear isotropy representation of $H$ on $\fm$ leaves invariant a
positive-definite inner product.  In particular, $M = G/H$ is a
reductive homogeneous space.  It is proved in
\cite[Prop.~4]{Alekseevsky:2011p9355} that $M$ admits a $G$-invariant
lorentzian metric if and only if the linear isotropy representation of
$H$ leaves a line $\ell \subset \fm$ invariant.  Then letting $h$
denote the positive-definite inner product on $\fm$ and $\alpha \in
\fm^*$ such that $\ker \alpha = \ell^\perp$, where $\ell^\perp$ is the
$h$-perpendicular complement of $\ell$ in $\fm$, the $G$-invariant
lorentzian metrics on $M$ are obtained from the inner products
\begin{equation}
  h - \lambda \alpha \otimes \alpha~,
\end{equation}
which are lorentzian for $\lambda \gg 0$.

What about if the action of $G$ on $M$ is \emph{not} proper?  It is a
remarkable result \cite{MR1426887} of Nadine Kowalsky's that if a
simple Lie group $G$ acts transitively by isometries on a lorentzian
manifold $M$ in such a way that the action is not proper, then $M$ is
locally isometric to (anti) de Sitter spacetime.  Deffaf, Melnick and
Zeghib \cite{MR2421545} extended this result to the case of $G$
semisimple, with the conclusion that $M$ is now locally isometric to
the product of (anti) de Sitter spacetime with a riemannian
homogeneous space.  Notice that in either case, we can always describe
$M$ as a reductive homogeneous space.

These results will allow us to consider either $\AdS_d \times
M^{11-d}$ backgrounds (one can show that there are no de Sitter
backgrounds) or else restrict ourselves to the case of compact $H$.
Supersymmetric Freund--Rubin backgrounds with $N\geq 4$ of the form
$\AdS \times M$ have been classified --- see \cite{deMedeiros:2009pp}
for the smooth case and \cite{deMedeiros:2010dn} for orbifolds --- but
we still need to investigate more general anti de Sitter backgrounds
with flux along the internal manifold $M$.  This problem was studied
in the early Kaluza--Klein supergravity literature, albeit not
exhaustively (see, e.g., \cite{DNP} and references therein, for the
progress on this problem circa 1985).  We will re-examine such
backgrounds, recover the known ones and exhibit ones which to our
knowledge are new.  Concerning the latter class of backgrounds, those
with compact $H$, we must in principle distinguish between two cases:
when $\SO(3,2)$ acts effectively and when it acts trivially; although
the latter case is of dubious relevance to the AdS/CFT correspondence
and will be ignored in this paper.  In the former case we must look
for compact Lie subalgebras $\fh$ of $\fso(n) \oplus \fso(3) \oplus
\fso(2)$, which is the maximally compact subalgebra of $\fg$, whereas
in the latter case we must look for Lie subalgebras $\fk$ of
$\fso(n)$, with then $\fh = \fk \oplus \fso(3,2)$.  We will classify
all such Lie subalgebras admitting an $\fh$-invariant lorentzian inner
product on $\fm$.  Since $\fg$ is a direct product, this will require
us to learn how to determine the Lie subalgebras of a direct product
of Lie algebras.  This will be explained in Section
\ref{sec:lie-subalg-prod}, but not before collecting some useful
formulae to do calculations in lorentzian homogeneous spaces.

\subsection{Computations in homogeneous spaces}
\label{sec:comp-homog-spac}

The purpose of this section, which overlaps with \cite[§2.3]{FMPHomPL}
somewhat, is to record some useful formulae for doing calculations in
reductive homogeneous spaces in terms of Lie algebraic data.  For more
details one can consult, for example, the book \cite{Besse}.

Let $M=G/H$ be a reductive homogeneous space with $H$ a closed
connected subgroup of $G$ and let $\fg = \fh \oplus \fm$ be a
reductive split.  The isotropy representation of $\fh$ on $\fm$ is the
restriction of the adjoint action: $X \cdot Y  = [X,Y]$, for $X \in
\fh$ and $Y \in \fm$.  Let $\left<-,-\right>$ denote an inner product
on $\fm$ which is invariant under the isotropy representation; that
is, for all $X,Y \in \fm$ and $Z \in \fh$,
\begin{equation}
  \left<[Z,X], Y \right> +   \left<X, [Z,Y] \right>~.
\end{equation}
This defines a $G$-invariant metric on $M$.

More generally, there is a one-to-one correspondence between
$\fh$-invariant tensors on $\fm$ and $G$-invariant tensor fields on
$M$.  If $F$ is a $G$-invariant tensor field on $M$, its evaluation at
$o$ together with the identification of $T_oM$ with $\fm$ defines a
tensor $F_o$ on $\fm$.  Since $F$ is $G$-invariant, its Lie derivative
at $o$ along any Killing vector vanishes.  Now let $X$ be a Killing
vector coming from $\fh$.  Since its value at $o$ vanishes, the Lie
derivative along $X$ is the action of the corresponding element of
$\fh$ under the linear isotropy representation.  Therefore $F_o$ is
$\fh$-invariant.  Conversely, let $F_o$ be an $\fh$-invariant tensor
on $\fm$.  We define a tensor field $F$ on $M$ by the condition $F(x)
= a \cdot F_o$, where $a \in G$ is such that $a\cdot o = x$, which
exists since $G$ acts transitively.  This is actually well defined
because $F_0$ is $H$-invariant.  Indeed, let $b\in G$ be such that
$b\cdot o = x$.  Then $b^{-1}a \cdot o = o$, whence $b^{-1}a \in H$.
Therefore $b \cdot F_o = b\cdot b^{-1}a\cdot F_o  = a \cdot F_o$.  The
tensor field $F$ so defined is clearly $G$-invariant, since for all
$a\in G$ and $x\in M$, $F(a\cdot x) = a \cdot F(x)$, since both sides
equal $ab \cdot F_o$, where $b \in G$ is any element such that $b\cdot
o = x$.

Let $X,Y,Z$ be Killing vectors on $M = G/H$.  The Koszul formula for
the Levi-Civita connection reads
\begin{equation}
  \label{eq:Koszul}
  2 g(\nabla_X Y, Z) = g([X,Y],Z) + g([X,Z],Y) + g(X,[Y,Z])~.
\end{equation}
At the identity coset $o\in M$ and assuming that $X,Y,Z$ are Killing
vectors in $\fm$, the chosen complement of $\fh$ in $\fg$, then
\begin{equation}
  \label{eq:levicivita}
  \nabla_X Y\bigr|_o = -\half [X,Y]_{\fm} + U(X,Y)~,
\end{equation}
where $U:\fm \times \fm \to \fm$ is a symmetric tensor given
by\footnote{The apparent difference in sign between equation
  \eqref{eq:Koszul} and equations \eqref{eq:levicivita} and
  \eqref{eq:Utensor} stems from the fact that Killing vectors on $G/H$
  generate left translations on $G$, whence they are
  right-invariant. Thus the map $\fg \to \text{Killing vector fields}$
  is an anti-homomorphism.}
\begin{equation}
  \label{eq:Utensor}
  2\left<U(X,Y),Z\right> = \left<[Z,X]_{\fm},Y\right> + \left<[Z,Y]_{\fm},X\right>~,
\end{equation}
for all $Z\in\fm$ and where the subscript denotes the projection of
$[Z,X]\in\fg$ to $\fm$.   It should be remarked that
\eqref{eq:levicivita} is only valid at $o\in M$, since $\nabla_X Y$ is
not generally a Killing vector.  Of course, since $\nabla$ is
G-invariant, then one can determine $\nabla_X Y\bigr|_p$ at any other
point by acting with any isometry relating $o$ and $p$.

For a reductive homogeneous space, the $U$-tensor is invariant under
the linear isotropy representation.  The vanishing of the $U$-tensor
characterises the class of homogeneous spaces known as
\textbf{naturally reductive}.  In those spaces, the geodesics of the
invariant connection and the Levi-Civita connection agree.

The Riemann curvature tensor is $G$-invariant and it can be computed
at $o$.  One obtains, for $X,Y,Z,W$ vectors in $\fm$, the curvature
tensor at $o$ is given by
\begin{multline}
  \label{eq:riemann}
  R(X,Y,Z,W) = \left<U(X,W),U(Y,Z)\right> - \left<U(X,Z),U(Y,W)\right>\\
  + \tfrac1{12} \left<[X,[Y,Z]]_{\fm},W\right> - \tfrac1{12} \left<[X,[Y,W]]_{\fm},Z\right>\\
  - \tfrac16 \left<[X,[Z,W]]_{\fm},Y\right> - \tfrac1{12} \left<[Y,[X,Z]]_{\fm},W\right>\\
  + \tfrac1{12} \left<[Y,[X,W]]_{\fm},Z\right> + \tfrac16 \left<[Y,[Z,W]]_{\fm},X\right>\\
  - \tfrac16 \left<[Z,[X,Y]]_{\fm},W\right> - \tfrac1{12} \left<[Z,[X,W]]_{\fm},Y\right>\\
  + \tfrac1{12} \left<[Z,[Y,W]]_{\fm},X\right> + \tfrac16 \left<[W,[X,Y]]_{\fm},Z\right>\\
  + \tfrac1{12} \left<[W,[X,Z]]_{\fm},Y\right> - \tfrac1{12} \left<[W,[Y,Z]]_{\fm},X\right>\\
  - \half \left<[X,Y]_{\fm},[Z,W]_{\fm}\right> - \tfrac14 \left<[X,Z]_{\fm},[Y,W]_{\fm}\right> + \tfrac14 \left<[X,W]_{\fm},[Y,Z]_{\fm}\right>~,
\end{multline}
which can be obtained by polarisation from the simpler expression for
$K(X,Y):=\left<R(X,Y)X,Y\right>$, which is also easier to derive.
Indeed, and for completeness, one has
\begin{multline}
 6 R(X,Y,Z,W) = K(X+Z,Y+W) - K(Y+Z,X+W)\\
  - K(Y+W,X) + K(Y+Z,X) - K(X+Z,Y) + K(X+W,Y)\\
  - K(Y+W,Z) + K(X+W,Z) - K(X+Z,W) + K(Y+Z,W)\\
  + K(X,W) - K(X,W) - K(Y,W) + K(Y,Z) - K(X,Z)~,
\end{multline}
where
\begin{equation}
  K(X,Y) = - \tfrac34 |[X,Y]_{\fm}|^2 - \half \left<[X,[X,Y]]_{\fm},Y\right> - \half \left<[Y,[Y,X]]_{\fm},X\right>
  + |U(X,Y)|^2 - \left<U(X,X),U(Y,Y)\right>
\end{equation}
and where $|-|^2$ is the (indefinite) norm associated to
$\left<-,-\right>$.

Similarly, we can obtain the Ricci tensor by polarisation from
\begin{multline}
  \label{eq:ricci}
  \Ric(X,X) = - \half \sum_i \left<[X,X_i]_\fm,[X,X^i]_\fm\right> - \half \sum_i \left<[X,[X,X_i]_\fm]_\fm,X^i\right> - \sum_i \left<[X,[X,X_i]_\fh],X^i\right>\\
  - \sum_i \left<[U(X_i,X^i),X]_\fm,X\right> + \tfrac14 \sum_{i,j} \left<[X_i,X_j]_\fm,X\right>\left<[X^i,X^j]_\fm,X\right>~,
\end{multline}
where $X_i$ is a pseudo-orthonormal basis with
$\left<X_i,X^j\right>=\delta_i^j$.  The Ricci scalar is given by $R =
\sum_i \Ric(X_i,X^i)$.

It is convenient to write down the expression for the Ricci tensor in
terms of a local frame, since this is what is used in computations.
So let $Y_i$ denote a basis for $\fm$ with $\left<Y_i,Y_j\right>=
g_{ij}$ and let $X_a$ denote a basis for $\fh$.  The structure
constants are $[X_a,Y_i] = f_{ai}{}^j Y_j$ (assumed reductive) and
$[Y_i, Y_j] = f_{ij}{}^k Y_k + f_{ij}{}^a X_a$.  We can raise and
lower $\fm$-indices using $g$.  In this notation, we find that the
Ricci tensor is given by:
\begin{equation}
  \label{eq:riccicomps}
  R_{ij} = -\half f_{i}{}^{k\ell} f_{jk\ell} - \half f_{ik}{}^\ell f_{j\ell}{}^k + \half f_{ik}{}^a f_{aj}{}^k + \half f_{jk}{}^a f_{ai}{}^k - \half f_{k\ell}{}^\ell f^k{}_{ij} - \half f_{k\ell}{}^\ell f^k{}_{ji} + \tfrac14 f_{k\ell i} f^{k\ell}{}_j~.
\end{equation}

Let $\Omega^\bullet(M)$ denote the de~Rham complex on $M$ and
$\Omega^\bullet(M)^G$ the subcomplex of $G$-invariant differential
forms.  The value at $o\in M$ of a $G$-invariant differential $k$-form
$\omega$ on $M$ is an $H$-invariant element of $\Lambda^k \fm^*$.  Its
exterior derivative and its codifferential can be expressed purely in
terms of the Lie algebraic data defining the homogeneous space.  If
$X_i$ are Killing vectors in $\fm$, then the exterior derivative of
$\omega$ is given by
\begin{equation}
  \label{eq:d}
  d\omega(X_1,\dots,X_{k+1}) = \sum_{1\leq i < j \leq k+1} (-1)^{i+j} \omega([X_i,X_j]_\fm, X_1,\dots,\widehat{X_i},\dots,\widehat{X_j},\dots,X_{k+1})~,
\end{equation}
where a hat adorning a symbol denotes its omission.  Perhaps the
simplest proof of this statement is to localise the complex
$\Omega^\bullet(M)^G$ as the subcomplex of left-invariant differential
forms on $G$ which are \emph{basic}.  In other words, we view $M$ as
the base of a principal $H$-bundle with total space $G$.  A
$G$-invariant differential form $\omega$ on $M$ pulls back to a
left-invariant form on $G$ whose value $\omega_e$ at the identity is
both \emph{horizontal}: $\imath_X \omega_e = 0$ for all $X \in \fh$,
and \emph{invariant} under the adjoint action of $\fh$.  We then use
the standard formulae (see, e.g., \cite{ChevalleyEilenberg}) for the
differential of a left-invariant form on $G$, after checking that the
basic forms indeed form a subcomplex.

In computations, a more convenient way to compute the exterior
derivative of an invariant form is the following.  Let $(Y_i)$ be a
basis for $\fm$ such that $[Y_i,Y_j]_\fm = \sum_k f_{ij}{}^k Y_k$.
Then let $(\theta^i)$ be the canonically dual basis for $\fm^*$.  Then
it follows from equation \eqref{eq:d} that
\begin{equation}
  d\theta^k = -\half \sum_{i,j} f_{ij}{}^k \theta^i\wedge\theta^j~.
\end{equation}
We then extend $d$ as a derivation to a general invariant form.
Therefore, if $F$ is an invariant $4$-form, so that $F = \tfrac1{4!}
F_{ijkl} \theta^{ijkl}$ (with the Einstein summation convention in
force), then
\begin{equation}
  \label{eq:dF}
  dF = -\tfrac1{12}f_{mn}{}^i F_{ijkl}  \theta^{jklmn}~,
\end{equation}
or explicitly,
\begin{multline}
  \label{eq:dFcomps}
  (dF)_{jklmn} = f_{jk}{}^i F_{ilmn} - f_{jl}{}^i F_{ikmn} +
  f_{jm}{}^i F_{ikln} - f_{jn}{}^i F_{iklm}  + f_{kl}{}^i F_{ijmn}\\
  - f_{km}{}^i F_{ijln}  + f_{kn}{}^i F_{ijlm}  + f_{lm}{}^i F_{ijkn}
  - f_{ln}{}^i F_{ijkm}  + f_{mn}{}^i F_{ijkl}~.
\end{multline}

To describe the codifferential, let us introduce dual bases $(Y_i)$
and $(Y^i)$ for $\fm$ such that $\left<Y_i,Y^j\right> = \delta_i^j$.
Then we have
\begin{multline}
  \label{eq:delta}
  \delta\omega(X_1,\dots,X_{k-1}) = \sum_{i=1}^{\dim M} \sum_{j=1}^{k-1} \omega(Y^i, X_1,\dots,-\half[Y_i,X_j]_\fm -U(Y_i,X_j),\dots,X_{k-1})\\
  - \sum_{i=1}^{\dim M} \omega(U(Y^i,Y_i),X_1,\dots,X_{k-1})~.
\end{multline}

We can write this in terms of a frame $Y_i$ for $\fm$, which is
perhaps more useful in computations.  For $F$ an invariant 4-form, we
have
\begin{equation}
  \label{eq:deltaF}
  (\delta F)_{ijk} = - (\half f_{mi}{}^n + U_{mi}{}^n) F^m{}_{njk} - (\half f_{mj}{}^n + U_{mj}{}^n) F^m{}_{ink} - (\half f_{mk}{}^n + U_{mk}{}^n) F^m{}_{ijn} - U_m{}^{mn} F_{nijk}~.
\end{equation}

\section{Lie subalgebras of a direct product}
\label{sec:lie-subalg-prod}

In this section we prove a result characterising Lie subalgebras of
the direct product of two Lie algebras. This result is necessary for
the determination of the Lie subalgebras of $\fso(n)\oplus \fso(3,2)$.
It is by no means original, but we know of no good reference.

Let $\fg_L$ and $\fg_R$ be two real Lie algebras and let $\fg = \fg_L
\oplus \fg_R$ be their product.  Elements of $\fg$ are pairs
$(X_L,X_R)$ with $X_L \in \fg_L$ and $X_R \in \fg_R$.  The Lie bracket
in $\fg$ of two such elements $(X_L,X_R)$ and $(Y_L,Y_R)$ is given by
the pair $([X_L,Y_L], [X_R,Y_R])$.

We are interested in Lie subalgebras $\fh$ of $\fg$.  This is
analogous to the determination of subgroups of a product group, which
is solved by Goursat's Lemma \cite{Goursat}.  As a result we will also
call this the Goursat Lemma for Lie algebras.

Let $\pi_L : \fg \to \fg_L$ and $\pi_R : \fg \to \fg_R$ denote the
projections onto each factor: they are Lie algebra homomorphisms.  Let
$\fh_L$ and $\fh_R$ denote, respectively, the image of the subalgebra
$\fh$ under $\pi_L$ and $\pi_R$.  They are Lie subalgebras of $\fg_L$
and $\fg_R$, respectively.  Let us define $\fh^0_L := \pi_L(\ker \pi_R
\cap \fh)$ and $\fh^0_R := \pi_R(\ker \pi_L \cap \fh)$.  One checks
that they are ideals of $\fh_L$ and $\fh_R$, respectively.  This means
that on $\fh_L/\fh^0_L$ and $\fh_R/\fh^0_R$ we can define Lie algebra
structures.  Goursat's Lemma says that these two Lie algebras are
isomorphic.  Let us understand this.

The Lie algebra $\fh_L$ consists of those $X_L \in \fg_L$ such that
there is some $X_R \in \fg_R$ with $X_L + X_R \in \fh$, and similarly
$\fh_R$ consists of those $X_R \in \fg_R$ such that there is some $X_L
\in \fg_L$ with $X_L + X_R \in \fh$.  At the same time, $\fh^0_L$
consists of those $X_L \in \fg_L$ which are also in $\fh$, whereas
$\fh^0_R$ consists of those $X_R \in \fg_R$ which are also in $\fh$.
Let us define a linear map $\varphi: \fh_L \to \fh_R/\fh^0_R$ as
follows.  Let $X_L \in \fh_L$.  Then this means that there is some
$X_R \in \fh_R$ such that $X_L + X_R \in \fh$.  Define $\varphi(X_L) =
X_R \mod \fh^0_R$.  This map is well defined because if both $X_L +
X_R$ and $X_L + X'_R$ are in $\fh$, so is their difference, whence
$X_R - X'_R \in \fh^0_R$.  Now $\varphi$ is surjective, since for
every $X_R \in \fh_R$, there is some $X_L \in \fh_L$ with $X_L + X_R
\in \fh$, whence $\varphi(X_L) = X_R \mod \fh^0_R$.  Finally, the
kernel of $\varphi$ consists of those $X_L \in \fh_L$ such that there
is some $X_R \in \fh^0_R$ such that $X_L + X_R \in \fh$.  But $X_R \in
\fh$, whence $X_L \in \fh$ and hence $X_L \in \fh^0_L$.  Conversely if
$X_L \in \fh^0_L$, $X_L \in \fh$ so that $\varphi(X_L) = 0 \mod
\fh^0_R$, hence $\ker \varphi = \fh^0_L$.  In summary, $\varphi$
defines an isomorphism $\fh_L/\fh^0_L \cong \fh_R/\fh^0_R$.

Notice that the dimension of $\fh$ obeys
\begin{equation}
  \label{eq:dimh}
  \dim \fh = \dim \fh_L + \dim \fh^0_R = \dim \fh_R + \dim \fh^0_L~,
\end{equation}
as a consequence of the Euler--Poincaré principle applied to the exact
sequences
\begin{equation}
  \begin{CD}
    0 @>>> \fh^0_R @>>> \fh @>\pi_L>> \fh_L @>>> 0
  \end{CD}
\end{equation}
and
\begin{equation}
  \begin{CD}
    0 @>>> \fh^0_L @>>> \fh @>\pi_R>> \fh_R @>>> 0~.
  \end{CD}
\end{equation}

Goursat's Lemma suggests a systematic approach to the determination of
the Lie subalgebras of $\fg_L \oplus \fg_R$, which is particularly
feasible when $\fg_L$ and $\fg_R$ have low dimension.

Namely, we look for Lie subalgebras $\fh_L \subset \fg_L$ and $\fh_R
\subset \fg_R$ which have quotients isomorphic to $\fq$, say. Let
$f_L:\fh_L \to \fq$ and $f_R:\fh_R \to \fq$ be the corresponding
surjections.  Let $\varphi \in \Aut\fq$ denote an automorphism of
$\fq$.  Then we may form the Lie subalgebra $\fh_L
\oplus_{(\fq,\varphi)}\fh_R$ of $\fh_L \oplus \fh_R$ defined by
\begin{equation}
 \fh_L \oplus_{(\fq,\varphi)}\fh_R := \left\{(X_L,X_R) \in \fh_L \oplus \fh_R \middle | f_L(X_L) = \varphi(f_R(X_R))\right\}~.
\end{equation}
Of course, we need only consider automorphisms $\varphi$ which are not
induced by automorphisms of $\fh_L$ or $\fh_R$.  We record here the
following useful dimension formula which follows from equation
\eqref{eq:dimh}:
\begin{equation}
  \label{eq:dimgoursat}
 \dim \left(\fh_L \oplus_{(\fq,\varphi)}\fh_R\right) = \dim \fh_L +
 \dim \fh_R - \dim \fq~.
\end{equation}

A commonly occurring special case is when one of $\fh_L \to \fq$ or
$\fh_R \to \fq$ is an isomorphism.  For definiteness let us assume
that it is $\fh_R \to \fq$ which is an isomorphism.  Then we get a Lie
algebra homomorphism $\fh_L \to \fh_R$ obtained by composing $\fh_L
\to \fq$ with the inverse of $\fh_R \to \fq$.  In fact, we get a
family of such homomorphisms labelled by the automorphisms of $\fq$
or, equivalently, of $\fh_R$.  The fibered product which Goursat's
Lemma describes is now the graph in $\fh_L \oplus \fh_R$ of such a
homomorphism $\fh_L \to \fh_R$.  The resulting Lie algebra is
abstractly isomorphic to $\fh_L$.

\section{Lie subalgebras of $\fso(n)$}
\label{sec:lie-subalgebras-fson-2}

We first consider the Lie subalgebras of $\fso(n)$.  We will be
interested in $n\leq 7$, since the maximally supersymmetric
backgrounds have been classified \cite{FOPMax} and there are precisely
two such classes of backgrounds with $\fosp(8|4)$ Killing
superalgebra: namely, $\AdS_4 \times S^7$ and $\AdS_4 \times
S^7/\ZZ_2$.  For backgrounds of the form $\AdS_4 \times X^7$, it is
known that $N>6$ implies maximal supersymmetry, but this has no been
shown for more general backgrounds.  Let us work our way to $n=7$.

Let us say that a Lie subalgebra is \textbf{maximal} if it is proper
and is not properly contained in a proper Lie subalgebra.  Clearly, it
is enough to determine the maximal subalgebras and iterate in order to
determine all the proper subalgebras.  The maximal subalgebras of the
simple Lie algebras we shall be interested in have been tabulated in
\cite{Slansky} using methods introduced by Dynkin.

For us, the Lie algebra $\fso(n)$ is the real span of $L_{ab}$, for
$1\leq a<b\leq n$, with Lie brackets
\begin{equation}
  [L_{ab},L_{cd}] = \delta_{bc} L_{ad} - \delta_{ac} L_{bd} - \delta_{bd} L_{ac} + \delta_{ad} L_{bc}~.
\end{equation}
Notice that for any $k < n$, the subspace spanned by $L_{ab}$ where we
restrict $1 \leq a < b \leq k$ is a Lie subalgebra isomorphic to
$\fso(k)$.  We will attempt to label Lie algebras in such a way that
$\fso(k)$ will always denote this subalgebra.  Other subalgebras
isomorphic to $\fso(k)$ will be adorned in various ways in order to
distinguish them.  Hopefully this will not be too confusing.

\subsection{Lie subalgebras of $\fso(2)$}
\label{sec:so(2)}

First of all, it is clear that $\fso(2) = \RR\left<L_{12}\right>$ has
no proper subalgebras.

\subsection{Lie subalgebras of $\fso(3)$}
\label{sec:so(3)}

Next we consider $\fso(3) = \RR\left<L_{12}, L_{13}, L_{23}\right>$.
There is only one proper Lie subalgebra of $\fso(3)$ up to equivalence
and it is one-dimensional.  Indeed, $\fso(3)$ can be identified with
$\RR^3$ with the Lie bracket given by the vector cross product.  Hence
if a Lie subalgebra $\fh \subset \fso(3)$ has dimension greater than
$1$ it means that there are two linearly independent vectors $\bx$ and
$\by$ in $\fh$, but then their cross product $\bx \times \by$ is in
$\fh$ but is linearly independent from $\bx$ and $\by$, whence $\fh =
\fso(3)$.  We will choose the unique (up to equivalence) Lie
subalgebra of $\fso(3)$ to be $\fso(2)$, spanned by $L_{12}$.

\subsection{Lie subalgebras of $\fso(4)$}
\label{sec:so(4)}

Unlike $\fso(n)$ for all other $n\geq 3$, $\fso(4)$ is not simple: it
is isomorphic to two copies of $\fso(3)$, which we will call
$\fso(3)_\pm$ since they correspond to the $\pm1$ eigenspaces of the
Hodge star acting on $\Lambda^2\RR^4$ to which $\fso(4)$ is isomorphic
as a vector space and indeed as a representation.  More precisely, let
us define $L_i^\pm := \mp \half \left( L_{i4} \pm \half
  \varepsilon_{ijk}L_{jk}\right)$, for $i=1,2,3$, where
$\varepsilon_{123}=1$ in our conventions.  In other words,
\begin{equation}
  L_1^\pm = \mp \half (L_{14} \pm L_{23}) \qquad
  L_2^\pm = \mp \half (L_{24} \mp L_{13}) \qquad
  L_3^\pm = \mp \half (L_{34} \pm L_{12})~,
\end{equation}
which obey the following Lie brackets $[L_i^\pm, L_j^\pm] =
\varepsilon_{ijk} L_k^\pm$ and $[L_i^+, L_j^-] = 0$.

There are two inequivalent maximal subalgebras of $\fso(4)$: namely,
$\fso(3)_+ \oplus \fso(2)_-$, with generators
$\left(L_i^+,L_3^-\right)$, for $i=1,2,3$, and the diagonal subalgebra
of $\fso(3)_+ \oplus \fso(3)_-$, with generators $\left(L_i^+ +
  L_i^-\right)$, for $i=1,2,3$, which is thus precisely $\fso(3)$ as
defined above.  One might expect also a subalgebra $\fso(2)_+ \oplus
\fso(3)_-$, but this is related to $\fso(3)_+ \oplus \fso (2)-$ via an
automorphism of $\fso(4)$: namely, $L_i^\pm \mapsto L_i^\mp$.
Geometrically it corresponds to orientation reversal in $\RR^4$.  The
maximal subalgebras of $\fso(3)$ have been determined above, so it
remains to determine the maximal subalgebras of $\fso(3)_+ \oplus
\fso(2)_-$.

There are two inequivalent maximal subalgebras of $\fso(3)_+ \oplus
\fso(2)_-$: namely, $\fso(3)_+$, spanned by $\left(L_i^+\right)$ for
$i=1,2,3$ and $\fso(2)_+\oplus \fso(2)_-$, spanned by
$\left(L_3^+,L_3^-\right)$.  All proper subalgebras of
$\fso(2)_+\oplus \fso(2)_-$ are one-dimensional and hence maximal.
There is a pencil of such subalgebras, corresponding to the span of
$\alpha L_3^+ + \beta L_3^-$, for fixed $\alpha,\beta$, where the pair
$(\alpha,\beta)$ is defined up to multiplication by a nonzero real
number: that is, $(\alpha,\beta) \sim (\lambda\alpha,\lambda\beta)$
for some $\lambda\neq 0$.  Notice that the automorphism corresponding
to orientation reversal on $\RR^4$ exchanges $\alpha$ and $\beta$,
whence one must impose the condition $\alpha \geq \beta$, say, in
order not to over-count.  We can set $\alpha =1$ without loss of
generality and parametrise the subalgebras by a real number $\beta\in
[0,1]$.  Thus we let $\fso(2)_\beta$ denote the span of $L_3^+ + \beta
L_3^-$.  Notice that $\fso(2)_{\beta=0} = \fso(2)_+$ and
$\fso(2)_{\beta =1} = \fso(2)$, whence the need to impose
$0<\beta<1$.

At this moment we should point out a generic fact.  We are interested
in manifolds $G/H$, whence $H$ is a closed subgroup of $G$.  This
condition typically translates into the rationality of the parameters
defining the Lie subalgebra.  For example, the Lie subalgebra
$\fso(2)_\beta$ of $\fso(4)$ is the Lie algebra of a subgroup which is
dense in a torus if $\beta$ is irrational, hence for it to correspond
to a closed subgroup, we must impose that $\beta$ be rational.

Putting all this together we get the following Hasse diagram of
nontrivial subalgebras of $\fso(4)$ up to equivalence.  Following an
edge upwards denotes inclusion of a maximal subalgebra and subalgebras
at the same height have the same dimension, as indicated in the
left-hand column.
\begin{equation}
  \label{eq:so4}
  \xymatrix{
   \Hdim{6} &&& \subalg{\fso(4)}{L_i^+,L_i^-} \ar@{-}[d] \ar@{-}[ddr] & \\
   \Hdim{4} &&& \subalg{\fso(3)_+ \oplus \fso(2)_-}{L_i^+, L_3^-} \ar@{-}[dd] \ar@{-}[dl] & \\
   \Hdim{3} & & \subalg{\fso(3)_+}{L_i^+} \ar@{-}[dd] && \subalg{\fso(3)}{L_i^++L_i^-} \ar@{-}[dd]\\
   \Hdim{2} &&& \subalg{\fso(2)_+ \oplus \fso(2)_-}{L_3^+,L_3^-} \ar@{-}[dr] \ar@{-}[d] \ar@{-}[dl] & \\
   \Hdim{1} && \subalg{\fso(2)_+}{L_3^+} & \subalg{\fso(2)_{0<\beta<1}}{L_3^+ + \beta L_3^-} & \subalg{\fso(2)}{L_3^++L_3^-}
  }
\end{equation}

\subsection{Lie subalgebras of $\fso(5)$}
\label{sec:so(5)}

The Lie algebra $\fso(5)$ has three inequivalent maximal subalgebras.
Two of them decompose the $5$-dimensional real representation:
$\fso(4)$, which leaves invariant a line, and $\fso(3) \oplus
\fso(2)_{45}$, spanned by $\left(L_{12}, L_{13},L_{23},L_{45}\right)$.
The third maximal subalgebra, isomorphic to $\fso(3)$, acts
irreducibly both on the vector and spinor representations.  We denote
it $\fso(3)_{\text{irr}}$ and an explicit basis is given by
\begin{equation}
  \label{eq:so3irrinso5}
  \fso(3)_{\text{irr}} = \RR \left<L_{15}+ 2 L_{24}, \sqrt{3} L_{35} + L_{12} - L_{45}, \sqrt{3}L_{13}+ L_{14}+L_{25}\right>~.
\end{equation}
Any $\fso(2)$ subalgebra of $\fso(3)_{\text{irr}}$ leaves invariant
precisely a line in $\RR^5$.  This means that it is contained in the
maximal $\fso(4)$ subalgebra.  In fact, comparing characteristic
polynomials of the resulting linear transformations of $\RR^5$ shows
that it is equivalent to an $\fso(2)_{\beta = \frac13}$ subalgebra,
hence already included under the subalgebras of $\fso(4)$.  There are
two maximal subalgebras of $\fso(3)\oplus \fso(2)_{45}$.  One of them
is of course $\fso(3)$, whereas the other is equivalent to $\fso(2)_+
\oplus \fso(2)_-$.  This allows us to determine the Hasse diagram of
nontrivial subalgebras of $\fso(5)$ from that of $\fso(4)$.

\begin{equation}
  \label{eq:so5}
  \xymatrix{
   \Hdim{10} &&  & \fso(5) \ar@{-}[dddrr] \ar@{-}[ddl]  \ar@{-}[d] & &\\
   \Hdim{6} &&   & \fso(4) \ar@{-}[d] \ar@{-}[ddl] & & \\
   \Hdim{4} && \fso(3)\oplus \fso(2)_{45} \ar@{-}[d]  & \fso(3)_+\oplus\fso(2)_- \ar@{-}[dd] \ar@{-}[dr]& & \\
   \Hdim{3} &&  \fso(3) \ar@{-}[dd] & & \fso(3)_+ \ar@{-}[dd] & \fso(3)_{\text{irr}}\\
   \Hdim{2} &&  & \fso(2)_+ \oplus \fso(2)_- \ar@{-}[dl] \ar@{-}[d] \ar@{-}[dr] & & \\
   \Hdim{1} &&  \fso(2) & \fso(2)_{1>\beta>0} & \fso(2)_+ & 
  }
\end{equation}

We have omitted some subalgebras of $\fso(3)\oplus\fso(2)_{45}$ and of
$\fso(3)_{\text{irr}}$ since as explained above, they are equivalent
to (albeit not the same as) subalgebras already included in the
diagram.

\subsection{Lie subalgebras of $\fso(6)$}
\label{sec:so(6)}

The Lie algebra $\fso(6)$ has four inequivalent maximal subalgebras.
Three of them decompose the $6$-dimensional representation: namely,
$\fso(5)$, $\fso(4)\oplus \fso(2)_{56}$, $\fso(3)\oplus
\fso(3)_{456}$; whereas one acts irreducibly on this representation:
namely, $\fu(3) \cong \fsu(3) \oplus \fu(1)$.  The top of the Hasse
diagram is given below.
\begin{equation}
  \label{eq:so6}
  \xymatrix{
   \Hdim{15} & & & \fso(6) \ar@{-}[dl] \ar@{-}[ddl]  \ar@{-}[ddd] \ar@{-}[ddddr] & \\
   \Hdim{10} & & \fso(5) & & \\
   \Hdim{9} & & \fsu(3)\oplus \fu(1) & & \\
   \Hdim{7} & &  & \fso(4)\oplus\fso(2)_{56} &\\
   \Hdim{6} & & & & \fso(3)\oplus\fso(3)_{456}
  }
\end{equation}

The subalgebra $\fsu(3)\oplus \fu(1)$ can be described explicitly as
the centraliser of $L_{12} + L_{34} + L_{56}$, which spans the
$\fu(1)$ subalgebra of $\fsu(3)\oplus \fu(1)$.  A basis for the
$\fsu(3)$ subalgebra is given by
\begin{equation}
  \label{eq:su3inso6}
  \begin{split}
    &L_{13} + L_{24} \quad L_{14} - L_{23} \quad L_{15} + L_{26} \quad L_{16} - L_{25}\\
    &L_{35} + L_{46} \quad L_{36} - L_{45} \quad L_{12} - L_{34} \quad L_{34} - L_{56}~.
  \end{split}
\end{equation}

The Lie algebra $\fsu(3)$ has two inequivalent maximal subalgebras.
First we have a regular subalgebra isomorphic to $\fsu(2) \oplus
\fu(1)$.  Up to equivalence, we may choose it to lie inside $\fso(4)$
and corresponds to $\fso(3)_- \oplus \fso(2)_+$, which is itself
equivalent to $\fso(3)_+ \oplus \fso(2)_-$.  The second inequivalent
maximal subalgebra of $\fsu(3)$ is a singular subalgebra isomorphic to
$\fso(3)$ and denoted $\fso(3)_S$.  This subalgebra acts irreducibly
on the fundamental $3$-dimensional representation and in fact consists
of the real matrices in that representation.  It follows that any of
its proper subalgebras decomposes the fundamental representation of
$\fsu(3)$ and this is why it is already contained in the other maximal
subalgebra.  The corresponding Hasse diagram is given by
\begin{equation}
  \label{eq:su3}
  \xymatrix{
   \Hdim{8} & & & \fsu(3) \ar@{-}[dl] \ar@{-}[ddr] & \\
   \Hdim{4} & & \subalg{\fso(3)_- \oplus\fso(2)_+}{L_i^-,L_3^+}  \ar@{-}[dd] \ar@{-}[dr]& & \\
   \Hdim{3} & & & \subalg{\fso(3)_-}{L_i^-} \ar@{-}[dd] & \subalg{\fso(3)_S}{L_{14}-L_{23}, L_{15}+L_{26}, L_{36}-L_{45}} \\
   \Hdim{2} & & \subalg{\fso(2)_-\oplus\fso(2)_+}{L_3^-,L_3^+} \ar@{-}[dr] \ar@{-}[d] & & & \\
   \Hdim{1} & & \subalg{\fso(2) _{\beta \neq 0}}{L_3^+ + \beta L_3^-}  & \subalg{\fso(2)_+}{L_3^+} &
  }
\end{equation}
where we have omitted the $\fso(2)$ subalgebra of $\fso(3)_S$ since it
does not coincide with any of the $\fso(2)_\beta$ subalgebras, but
only equivalent to $\fso(2)_-$.  In summary, the only subalgebra of
$\fsu(3)$ which is not already (equivalent to) a subalgebra of
$\fso(4)$ is $\fso(3)_S$.

However it is $\fsu(3) \oplus \fu(1)$ which is the maximal subalgebra
of $\fso(6)$ and it behoves us to classify its subalgebras.  Goursat's
Lemma guarantees that such subalgebras are fibered products
$\fh_L\oplus_\fq \fh_R$, where $\fh_L < \fsu(3)$ and $\fh_R < \fu(1)$
and where $\dim\fq = 0$ or $1$.  In the former case, the subalgebras
are direct products, whereas in the latter they are graphs of nonzero
homomorphisms $\fh_L \to \fso(2)$, where $\fh_L < \fsu(3)$ is one of
the subalgebras admitting such homomorphisms.  A compact Lie algebra
admits a nonzero homomorphism to $\fso(2)$ if and only if it has
itself an $\fso(2)$ factor.  Of the subalgebras of $\fsu(3)$ with this
property, all are contained in $\fso(4)$ and hence they will be
counted among the subalgebras of $\fso(4)\oplus \fso(2)_{56}$.  The
reason is that if $\fh_L<\fso(4)$ then $\fh_L\oplus \fu(1)$ will be
equivalent to a subalgebra of $\fso(4)\oplus\fso(2)_{56}$.  Of the
direct product subalgebras all except for $\fso(3)_S$ itself and
$\fso(3)_S \oplus \fu(1)$ are subalgebras of $\fso(4)\oplus
\fso(2)_{56}$.  An explicit basis for $\fso(3)_S \oplus \fu(1)$ is
given by
\begin{equation}
  \fso(3)_S \oplus \fu(1) = \RR \left<L_{14}-L_{23}, L_{15}+L_{26}, L_{36}-L_{45},L_{12}+L_{34}+L_{56}\right>~.
\end{equation}

It thus remains to determine the subalgebras of $\fso(4)\oplus
\fso(2)_{56}$.  Goursat's Lemma says that they are products of
subalgebras of $\fso(4)$ and $\fso(2)_{56}$ fibered over some Lie
algebra $\fq$.  Since $\dim \fso(2)_{56} =1$, $\dim \fq \leq 1$ and we
have two cases to consider: $\dim \fq =0$, which corresponds to the
case of direct products of subalgebras, and $\dim \fq = 1$. In this
latter case, the map $\fso(2)_{56} \to \fq$ is an isomorphism, and
thus the subalgebras are graphs of nonzero homomorphisms $\fh_L \to
\fso(2)_{56}$, where $\fh_L < \fso(4)$ is a subalgebra admitting such
homomorphisms.  A quick glance at the Hasse diagram \eqref{eq:so4} for
$\fso(4)$ identifies such $\fh_L$ as one of
$\fso(3)_+\oplus\fso(2)_-$, $\fso(2)_+\oplus\fso(2)_-$ or
$\fso(2)_\beta$.  The resulting subalgebras of $\fso(4) \oplus
\fso(2)_{56}$ are explicitly given as follows:
\begin{itemize}
\item $\left(\fso(3)_+ \oplus \fso(2)_- \right)\oplus_{\fso(2)} \fso(2)_{56} = \RR \left<L_i^+, L_3^-+\alpha L_{56}\right>$, $\alpha \neq 0$;
\item $\left(\fso(2)_+ \oplus \fso(2)_- \right)\oplus_{\fso(2)} \fso(2)_{56} = \RR \left<L_3^+ + \beta L_{56}, L_3^-+\alpha L_{56}\right>$, $(\alpha,\beta) \in \RR^2$ not both zero; and
\item $\fso(2)_\beta \oplus_{\fso(2)} \fso(2)_{56} = \left<L_3^+ + \beta L_3^- + \alpha L_{56}\right>$, $\beta \in (0,1)$ and $\alpha \neq 0$.
\end{itemize}
Among the product subalgebras, those which are contained in $\fso(4)$
are already included inside $\fso(5)$, so we must consider those of
the form $\fh \oplus \fso(2)_{56}$, with $\fh < \fso(4)$, but only
those which are not contained inside $\fso(5)$; that is, those which
do not leave any nonzero vector invariant in $\RR^6$.  A quick glance
at the Hasse diagram \eqref{eq:so4} of subalgebras of $\fso(4)$
reveals that the following product subalgebras of $\fso(4)\oplus
\fso(2)_{56}$ have not appeared before:
\begin{itemize}
\item $\fso(3)_+\oplus \fso(2)_- \oplus \fso(2)_{56} = \RR\left<L_i^+,L_3^-,L_{56}\right>$;
\item $\fso(3)_+\oplus \fso(2)_{56} = \RR\left<L_i^+,L_{56}\right>$;
\item $\fso(2)_+\oplus \fso(2)_- \oplus \fso(2)_{56} = \RR\left<L_3^+,L_3^-,L_{56}\right>$; and
\item $\fso(2)_{0\leq \beta <1} \oplus \fso(2)_{56} = \RR\left<L_3^+ + \beta L_3^-,L_{56}\right>$.
\end{itemize}

Finally, we consider the maximal subalgebra $\fso(3)\oplus
\fso(3)_{456}$, which is isomorphic to $\fso(4)$, but embedded in a
different way in $\fso(6)$.  Being isomorphic to $\fso(4)$, its
subalgebras can be read (after some translation) from the Hasse
diagram \eqref{eq:so4} for $\fso(4)$.  It is not hard to see that all
subalgebras are already contained in at least one of the other maximal
subalgebras of $\fso(6)$.  Indeed, the Hasse diagram of subalgebras
for $\fso(3)\oplus \fso(3)_{456}$ is given by
\begin{equation}
  \xymatrix{
    \Hdim{6} &&& \fso(3)\oplus\fso(3)_{456} \ar@{-}[d] \ar@{-}[ddr] & \\
    \Hdim{4} &&& \fso(3) \oplus \fso(2)_{56} \ar@{-}[dd] \ar@{-}[dl] & \\
    \Hdim{3} && \fso(3) \ar@{-}[dd] && \subalg{\fso(3)_\Delta}{L_{23}+L_{46},L_{13}+L_{45},L_{12}+L_{56}} \ar@{-}[dd]\\
    \Hdim{2} &&& \fso(2) \oplus \fso(2)_{56} \ar@{-}[dr] \ar@{-}[d] \ar@{-}[dl] & \\
    \Hdim{1} && \subalg{\fso(2)}{L_{12}} & \subalg{\fso(2)'_{0<\beta<1}}{L_{12} + \beta L_{56}} & \subalg{\fso(2)_\Delta}{L_{12}+L_{56}}
  }
\end{equation}
Hence we see that all but $\fso(3)_\Delta$ are contained in
$\fso(4)\oplus \fso(2)_{56}$, whereas it is not hard to see that
$\fso(3)_\Delta$ preserves a symplectic structure in $\RR^6$ and hence
it is contained in a $\fu(3)$ subalgebra of $\fso(6)$.  In fact, it
equivalent to the singular subalgebra $\fso(3)_S$ of $\fsu(3)$.

In summary, a proper Lie subalgebra of $\fso(6)$ is one of the
following subalgebras, which have been described explicitly above:
\begin{itemize}
\item $\fso(5)$ or one of its subalgebras, described in diagram
  \eqref{eq:so5},
\item $\fso(3)\oplus \fso(3)_{456} =
  \RR\left<L_{12},L_{13},L_{23},L_{45},L_{46},L_{56}\right>$,
\item $\fsu(3)\oplus \fu(1)$ or one of the subalgebras:
  \begin{itemize}
  \item $\fso(3)_S \oplus \fu(1)$, or
  \item $\fso(3)_S$,
  \end{itemize}
\item $\fso(4)\oplus \fso(2)_{56}$ or one of the subalgebras:
  \begin{itemize}
  \item $\fso(3)_+\oplus \fso(2)_- \oplus \fso(2)_{56}$,
  \item $\fso(3)_+\oplus \fso(2)_{56}$,
  \item $\fso(2)_+\oplus \fso(2)_- \oplus \fso(2)_{56}$,
  \item $\fso(2)_{0\leq \beta <1} \oplus \fso(2)_{56}$,
  \item $\left(\fso(3)_+ \oplus \fso(2)_- \right)\oplus_{\fso(2)} \fso(2)_{56}$,
  \item $\left(\fso(2)_+ \oplus \fso(2)_- \right)\oplus_{\fso(2)} \fso(2)_{56}$, or
  \item $\fso(2)_\beta \oplus_{\fso(2)} \fso(2)_{56}$.
 \end{itemize}
\end{itemize}

It is satisfying to find among these subalgebras precisely the four
inequivalent $\fso(3)$ subalgebras of $\fso(6)$: $\fso(3)$ and
$\fso(3)_+$ inside $\fso(4)$, $\fso(3)_{\text{irr}}$ inside $\fso(5)$
and $\fso(3)_S$ inside $\fu(3)$.

\subsection{Lie subalgebras of $\fso(7)$}
\label{sec:so(7)}

The Lie algebra $\fso(7)$ too has four inequivalent maximal
subalgebras.  Three of them decompose the $7$-dimensional
representation: namely, $\fso(6)$, $\fso(5)\oplus \fso(2)_{67}$ and
$\fso(4)\oplus \fso(3)_{567}$; whereas one acts irreducibly: namely,
$\fg_2$.  The Lie algebra $\fg_2$ has three inequivalent maximal
subalgebras: $\fsu(3)$ and $\fsu(2)\oplus \fsu(2)$, which decompose
the 7-dimensional irreducible representation, and one acting
irreducibly there: namely, $\fsu(2)_{\text{irr}}$.  This yields the
following subdiagram of the Hasse diagram of subalgebras of
$\fso(7)$.
\begin{equation}
  \label{eq:so7}  
  \xymatrix{
   \Hdim{21} && & & \fso(7) \ar@{-}[dd] \ar@{-}[dddr] \ar@{-}[ddddrr] \ar@{-}[dl] & &\\
   \Hdim{15} && & \fso(6) & & &\\
   \Hdim{14} && & & \fg_2 \ar@{-}[dddl] \ar@{-}[dddd]  \ar@{-}[dddddr]& & \\
   \Hdim{11} && & & & \fso(5)\oplus\fso(2)_{67} &\\
   \Hdim{9} && & & & & \fso(4)\oplus\fso(3)_{567}\\
   \Hdim{8} && & \fsu(3) & & &\\
   \Hdim{6} && & & \fsu(2)\oplus\fsu(2) & & \\
   \Hdim{3} && & & & \fsu(2)_{\text{irr}} &
  }
\end{equation}
Although in order to fully specify the Hasse diagram for $\fso(7)$ we
would have to determine the subalgebras of $\fso(5)\oplus \fso(2)$ and
$\fso(4)\oplus \fso(3)$, and as tempting as that is, it is also
unnecessary for what follows.  We record here an explicit basis for
the $\fg_2$ subalgebra of $\fso(7)$:
\begin{equation}
  \label{eq:g2inso7}
  \begin{split}
    &L_{14} - L_{23} \quad  L_{13} + L_{24} \quad  L_{17} + L_{26} \quad  L_{16} - L_{27} \quad  L_{12} - L_{34} \quad  L_{17} - L_{35} \quad  L_{25} - L_{36} \\
    &L_{15} + L_{37} \quad  L_{16} + L_{45} \quad  L_{15} - L_{46} \quad  L_{25} - L_{47} \quad  L_{14} + L_{56} \quad  L_{13} - L_{57} \quad  L_{12} + L_{67} ~.
 \end{split}
\end{equation}

\section{The supergravity field equations for homogeneous backgrounds}
\label{sec:supergr-field-equat}

The above results allow us in principle to determine all
eleven-dimensional lorentzian homogeneous spaces with a transitive
action of a group $G$ locally isomorphic to $\SO(n) \times \SO(3,2)$.
For each such lorentzian manifold, we wish to investigate whether
there are any solutions to the supergravity field equations.  The
field equations are partial differential equations but they become
algebraic in a homogeneous Ansatz, by which we mean that the 4-form is
also $G$-invariant.  In this section we will write down the field
equations in a homogeneous Ansatz.

\subsection{The field equations of eleven-dimensional supergravity}
\label{sec:field-equat-elev}

Following the conventions of \cite{DS2brane}, the bosonic part of the
action of $d=11$ supergravity is (setting Newton's constant to $1$)
\begin{equation}
  \label{eq:lag}
  \int_M \left( \half R \dvol - \tfrac14 F\wedge\star F + \tfrac1{12} F \wedge F \wedge A \right)~,
\end{equation}
where $F=dA$ locally, $R$ is the scalar curvature of $g$ and $\dvol$
is the (signed) volume element
\begin{equation}
  \dvol := \sqrt{|g|}\, dx^0 \wedge dx^1 \wedge \cdots \wedge
  dx^{10}~.
\end{equation}
The Euler-Lagrange equations following from \eqref{eq:lag} are
\begin{equation}
  \label{eq:EL}
  \begin{aligned}
    d \star F &= \half F\wedge F\\
    \Ric(X,Y) &= \half \left< \iota_X F, \iota_Y F \right> - \tfrac16 g(X,Y) |F|^2~,
  \end{aligned}
\end{equation}
for all vector fields $X,Y$ on $M$.  In this equation we have
introduced the inner product $\left<-,-\right>$ on differential forms,
defined by
\begin{equation}
   \left<\theta,\omega\right> \, \dvol = \theta \wedge \star \omega~,
\end{equation}
and the associated norm
\begin{equation}
  |\theta|^2 = \left<\theta, \theta\right>~,
\end{equation}
which in a lorentzian manifold is \emph{not} positive-definite.

The field equations \eqref{eq:EL} are invariant under the homothetic
action of $\RR^+$: $(g,F) \mapsto (e^{2t}g, e^{3t}F)$, where $t\in
\RR$.  Indeed, under $g \mapsto e^{2t} g$, the Levi-Civita connection,
consisting of terms of the form $g^{-1}dg$, does not change.  This
means that the $(3,1)$ Riemann curvature tensor is similarly
invariant, and so is any contraction such as the Ricci tensor.  Under
$F \mapsto e^{3t}F$, the tensor in the right-hand side of the Einstein
equation is similarly invariant, since the $e^{6t}$ coming from the
two $F$s cancels the $e^{-6t}$ coming from the three $g^{-1}$s.  On
the other hand, the Bianchi identity $dF=0$ is clearly invariant under
homotheties and the Maxwell-like equation is as well.  Indeed, using
that the Hodge $\star$ acting on $p$-forms in a $D$-dimensional
manifold, scales like $e^{(D-2p)t}$ under $g \mapsto e^{2t}g$, we see
that $\star$ acting on $4$-forms in $11$-dimensions scales like
$e^{3t}$, just like $F$, whence both sides of the Maxwell-like
equation scale in the same way: namely, $e^{6t}$.  This means that the
moduli spaces of solutions of the field equations are always cones.
It is possible to extend this to a homothetic action of $\RR^\times$
(the nonzero real numbers) if we take the point of view that the
vielbeins scale by $\lambda \neq 0$, whence if $\lambda < 0$ the
orientation changes.  The particular homothety where $\lambda = -1$,
which is just orientation reversal, is known as ``skew-whiffing'' in
the early supergravity literature, as described for example in
\cite{DNP}.

\subsection{The equivalent algebraic equations}
\label{sec:equiv-algebr-equat}

Let us assume that we are looking for homogeneous supergravity
backgrounds.  This means that the spacetime is a homogeneous
eleven-dimensional lorentzian manifold $G/H$ and that the 4-form $F$
is $G$-invariant.  Algebraically, such a background is determined by a
split $\fg = \fh \oplus \fm$ of the Lie algebra of $G$ into the Lie
algebra of $H$ and a complement $\fm$.  As explained in Section
\ref{sec:basic-notions-about}, for $G$ semisimple we may restrict
ourselves to the case where $\fg = \fh \oplus \fm$ is a reductive
split.

Let us introduce bases $Y_i$ for $\fm$ and $X_a$ for $\fh$, relative
to which the Lie brackets are given by
\begin{equation}
  [X_a, X_b] =  f_{ab}{}^c X_c \qquad [X_a,Y_i]= f_{ai}{}^j Y_j \qquad\text{and}\qquad [Y_i,Y_j] = f_{ij}{}^k Y_k + f_{ij}{}^a X_a~.
\end{equation}

The metric is given by a lorentzian inner product, denoted $g$, on
$\fm$ which is invariant under the linear isotropy representation of
$\fh$ and with components $\left<Y_i,Y_j\right> = g_{ij}$ relative to
the chosen basis.  The 4-form is given by an element $F \in
\Lambda^4\fm^*$ which is similarly invariant and has components
$F(Y_i,Y_j,Y_k,Y_l)=F_{ijkl}$.

The data $(\fg,\fh,\fm,g,F)$ defines a homogeneous background of
eleven-dimensional supergravity if and only if the following equations
are satisfied:
\begin{itemize}
\item the Bianchi identity $dF=0$, which relative to the basis is
  given explicitly by setting expression \eqref{eq:dFcomps} to zero;
\item the nonlinear Maxwell equation
  \begin{equation}
    \label{eq:maxwell}
    \delta F = -\star \half F \wedge F~;
  \end{equation}
\item and the Einstein equation
  \begin{equation}
    \label{eq:einstein}
    R_{ij} = \tfrac1{12} F_{iklm} F_{j}{}^{klm} - \tfrac1{144} g_{ij} F_{klmn}F^{klmn}~,
  \end{equation}
where $R_{ij}$ is given by equation \eqref{eq:riccicomps}
\end{itemize}

\subsection{The methodology}
\label{sec:methodology}

Let us now explain the method by which we search for homogeneous
backgrounds.  Having chosen $\fg = \fh \oplus \fm$ we first determine
whether there is an $\fh$-invariant lorentzian inner product on $\fm$.
As mentioned in Section \ref{sec:basic-notions-about}, for the case
when $\fh$ is compact, this will be the case if and only if $\fh$
leaves invariant some nonzero vector in $\fm$; in other words, if
$\fm^\fh \neq 0$, where $\fm^\fh$ denotes the subspace of $\fm$ which
is fixed pointwise by $\fh$.  If $\fm$ admits an $\fh$-invariant
lorentzian inner product we say that $(\fg,\fh,\fm)$ is
\emph{admissible}.

Let $(\fg,\fh,\fm)$ be admissible.  Then next step is to determine the
(nontrivial) vector space $(S^2\fm^*)^\fh$ of $\fh$-invariant
symmetric bilinear forms and the subset consisting of invariant
lorentzian inner products.  This subset will be an open subset of
$(S^2\fm^*)^\fh$ and will thus be parametrised by $\dim
(S^2\fm^*)^\fh$ parameters subject to some inequalities to ensure that
the symmetric bilinear form is nondegenerate and has lorentzian
signature.  Clearly, it is a cone, since rescaling a lorentzian inner
product by a positive real number yields another lorentzian inner
product.   Let $\{\gamma_\alpha\}$ denote the parameters associated to
the inner product.  Similarly we determine the vector space
$(\Lambda^4\fm^*)^\fh$ of $\fh$-invariant 4-forms on $\fm$ and the
subspace consisting of closed 4-forms; namely, those obeying equation
\eqref{eq:dFcomps}.  Choosing a basis for the closed invariant
4-forms, we can specify every such form by some parameters
$\{\varphi_\alpha\}$.  The Maxwell and Einstein equations then give a
set of algebraic equations for the parameters $\gamma_\alpha$ and
$\varphi_\alpha$ which we must solve.  (They are in fact polynomial in
$\varphi_\alpha$ and in $\sqrt{\gamma_\alpha}$.)

Two small simplifications can be made to reduce the number of free
parameters.  First of all, the homothety invariance of the equations
allows us to eliminate one of the $\gamma_\alpha$: if one of the
$\gamma_\alpha$ is known to be different from zero, then we can assume
that it has magnitude $1$ via a homothety.  Typically we will choose
the $\gamma_\alpha$ corresponding to the timelike direction and set it
equal to $-1$.

The second simplification is a little more subtle and consists of
exploiting the normaliser of $\fh$ in $\fg$ in order to eliminate one
or more of the $\gamma_\alpha$ parameters.  Let $\fn$ denote the
normaliser of $\fh$ in $\fg$: that is, the largest subalgebra of $\fg$
which contains $\fh$ as an ideal.  More formally, we say that
\begin{equation}
  X \in \fn \iff [X,Y] \in \fh\quad \forall Y \in \fh~.
\end{equation}
Let $X \in \fn$.  Since $\fg = \fh \oplus \fm$, we may decompose $X =
X_\fh + X_\fm$ uniquely, where $X_\fh \in \fh$ and $X_\fm \in \fm$.
Since $X \in \fn$, it obeys $[X,Y] \in \fh$ for all $Y\in \fh$, or
equivalently $[X_\fh + X_\fm, Y] \in \fh$ for all $Y\in\fh$.  Since
$\fh$ is a subalgebra, $[X_\fh, Y] \in \fh$ and hence the only
condition rests on $X_\fm$: $[X_\fm, Y] \in \fh$ for all $Y \in \fh$.
However since the split is reductive, $[X_\fm, Y] \in \fm$ for all $Y
\in \fh$ and hence it must happen that $[X_\fm, Y] = 0$ for all $Y \in
\fh$; in other words, $X_\fm \in \fm^\fh$.  That is to say, the
normaliser of $\fh$ in $\fg$ is given by $\fn = \fh \oplus \fm^\fh$.
Let $N$ be the normaliser of $H$ in $G$, so that
\begin{equation}
  x \in N \iff  x y x^{-1} \in H \quad \forall y \in H~.
\end{equation}
Then $N$ is a subgroup of $G$ with Lie algebra $\fn$.  It is
convenient to define the abstract group $W = N/H$, which is a group
because $H$ is normal in $N$ by definition.  The Lie algebra of $W$ is
precisely $\fm^\fh$.  Indeed, suppose that $X = X_\fh + X_\fm \in \fg$
belongs to the normaliser of $\fh$ in $\fg$.  Then for all $Y \in
\fh$, $[X,Y] \in \fh$.  This is equivalent to $[X_\fm, Y] \in \fh$ for
all $Y \in \fh$, but since the split is reductive, $[X_\fm, Y] = 0$
for all $Y \in \fh$, whence $X_\fm \in \fm^\fh$.  In other words, the
Lie algebra of the normaliser of $H$ in $G$ is $\fh \oplus \fm^\fh$,
from where the claim follows.

We saw above that in the case where $H$ is compact, $\fm^\fh$ is
nonzero if $G/H$ is to admit a homogeneous lorentzian metric, whence
in that case $W$ is a Lie group of dimension at least one.

It turns out that $W$ may be used to reduce the number of parameters
defining the lorentzian metrics in $G/H$.

The idea is the following.  Let $o \in G/H$ be the origin; that is,
any point with stability subgroup $H$; that is,
\begin{equation}
  x \in H \iff x \cdot o = o~.
\end{equation}
Let $x \in N$ and consider the point $o' = x \cdot o$.  We claim that
$o'$ also has stability subgroup $H$.  Indeed,
\begin{equation}
  y \cdot o' = o'  \iff yx \cdot o = x \cdot o \iff x^{-1} y x \cdot o = o \iff x^{-1} y x \in H \iff y \in x H x^{-1} = H~.
\end{equation}
Now suppose that $\Theta$ is a $G$-invariant tensor field on $G/H$.
As explained in Section \ref{sec:basic-notions-about}, $\Theta$ is
determined uniquely by its value $\Theta_o$ at $o$ (or indeed at any
point).  Now $\Theta_o$ is a tensor in $\fm$ invariant under the
linear isotropy representation of $\fh$.  Now consider the value of
$\Theta$ at the point $o'$ defined above.  Since $\Theta$ is a
$G$-invariant tensor, $\Theta_{o'} = \Theta_{x \cdot o} = x \cdot
\Theta_o $, which is again an $\fh$-invariant tensor in $\fm$, since
$o'$ has stability subgroup $H$.  In other words, the group $N$ acts
on the space of $\fh$-invariant tensors in $\fm$.  In fact, since the
subgroup $H$ (assumed connected) of $N$ acts trivially, what we have
is actually an action of $W = N/H$ on the $\fh$-invariant tensors.  It
is this action which we can use to bring the invariant tensor to a
simpler form.  The idea is that at another point $o'$ with the same
stabiliser, our tensor will take a simpler form and we could have been
working at that point from the start.

We will now proceed to systematically explore the possible homogeneous
backgrounds of $G = \SO(n) \times \SO(3,2)$, for $n>4$.  We will only
consider the case where $G$ acts effectively; although one could also
study admissible $G/H = \SO(n)/K$, where $K$ is a closed subgroup of
$\SO(n)$, so that $H = \SO(3,2)\times K$.  By dimension these only
exist for $n\geq 6$.  One can rule out the existence of such
backgrounds for $n=7$, but we have not completed the analysis of the
$n=6$ backgrounds.  This is of questionable interest, though, since
(the dual of) a conformal field theory should have a nontrivial action
of the conformal group.

\section{Homogeneous non-AdS backgrounds}
\label{sec:non-AdS-backgrounds}

We now systematically explore the possible eleven-dimensional
homogeneous spaces with infinitesimal data $(\fg,\fh)$ with $\fg =
\fso(n) \oplus \fso(3,2)$, with $n=4,5,6,7$, and $\fh < \fg$ the Lie
algebra of a compact subgroup; that is $\fh < \fso(n) \oplus \fso(3)
\oplus \fso(2)$.

\subsection{Still no $n=7$ duals}
\label{sec:possible-new-n=7}

Here $\fg = \fso(7) \oplus \fso(3,2)$ has dimension $31$, whence we
are looking for subalgebras $\fh$ of dimension $20$.  There are
however none.  Indeed, by Goursat's Lemma, every such subalgebra is
given by Lie subalgebras $\fh_L < \fso(7)$ and $\fh_R < \fso(3) \oplus
\fso(2)$ fibered over a common quotient $\fq$.  By the dimension
formula \eqref{eq:dimgoursat}, we have that
\begin{equation}
  \dim \fh = \dim \fh_L + \dim \fh_R - \dim \fq \leq \dim \fh_L + \dim \fh_R \leq \dim\fh_L + 4~,
\end{equation}
but also $\dim\fh \geq \dim\fh_L$ from formula \eqref{eq:dimh}.  Since
$\dim\fso(7)=21$, we have to take a proper subalgebra $\fh_L <
\fso(7)$.  It follows from the Hasse diagram \eqref{eq:so7} of maximal
subalgebras of $\fso(7)$, that $\dim \fh_L \leq 15$, whence from the
first of the above inequalities $\dim \fh \leq 19$.

\subsection{No new $n=6$ duals}
\label{sec:possible-new-n=6}

Here $\fg = \fso(6) \oplus \fso(3,2)$ has dimension $25$, whence we
are looking for subalgebras $\fh$ of dimension $14$.  By Goursat's
Lemma, $\fh$ is given by subalgebras $\fh_L < \fso(6)$ and $\fh_R <
\fso(3) \oplus \fso(2)$ fibered over a common quotient $\fq$.  The
dimension formula \eqref{eq:dimgoursat} says that
\begin{equation}
  \dim \fh \leq \dim \fh_L + \dim \fh_R~,
\end{equation}
but as before we cannot take $\fh_L = \fso(6)$ since $\dim \fso(6) =
15 > \dim \fh$, violating equation \eqref{eq:dimh}.  So we have to
take a proper subalgebra $\fh_L < \fso(6)$.  From the Hasse diagram
\eqref{eq:so6} we see that the largest dimension of a proper
subalgebra is $10$, corresponding to $\fso(5)$.  By the above
inequality, this is also the smallest dimension we could take, hence
there is precisely one such subalgebra, with $\fq =0$ and hence a
direct product: $\fh = \fso(5) \oplus \fso(3) \oplus \fso(2)$.  Being
a product, the geometry is also a product, and we have a homogeneous
space locally isometric to $\SO(6)/\SO(5) \times
(\SO(3,2)/\SO(3)\times\SO(2))$.  However this homogeneous space does
not admit an invariant lorentzian metric.  Indeed, in the first factor
$\SO(6)/\SO(5)$ the linear isotropy representation is irreducible and
in fact $\SO(6)/\SO(5)$ is locally isometric to the round $S^5$.  As
for the second factor, $\fso(3,2) = \fso(3) \oplus \fso(2) \oplus
\fm$, where $\fm = \textbf{3}\otimes\textbf{2}$ is the tensor product
of the fundamental vectorial representations of $\fso(3)$ and
$\fso(2)$.  Since there is no invariant line, there is no $\fso(3)
\oplus \fso(2)$-invariant lorentzian inner product on $\fm$.

\subsection{Possible new $n=5$ duals}
\label{sec:possible-new-n=5}

Here $\fg = \fso(5) \oplus \fso(3,2)$ has dimension $20$, whence we
are looking for subalgebras $\fh$ of dimension $9$.  From Goursat's
Lemma, such a subalgebra will be given by two subalgebras $\fh_L
\subset \fso(5)$ and $\fh_R \subset \fso(3) \oplus \fso(2)$ (the
maximal compact subalgebra of $\fso(3,2)$) fibered over a common
quotient $\fq$.  Again we have to take a proper subalgebra $\fh_L <
\fso(5)$, since $\dim \fso(5) > 9$.  The Hasse diagram \eqref{eq:so5}
of subalgebras of $\fso(5)$ identifies precisely one such possible
$\fh_L$ which obeys the inequality $9 \leq \dim\fh_L + \dim\fh_R \leq
\dim\fh_L + 4$: namely, $\fh_L= \fso(4)$.  This means that $\fq=0$
since although $\fso(4)$ is not simple, the smallest nonzero quotient
has dimension $3$ and that results in $\fh$ not of enough dimension.
This in turn forces $\dim\fh_R=3$, whence $\fh_R$ is isomorphic to an
$\fso(3)$ subalgebra of $\fso(3,2)$.  Therefore, up to equivalence,
there is precisely one choice for $\fh$: namely, $\fso(4)\oplus
\fso(3)$.  The geometry will also therefore be locally isometric to a
product: $\SO(5)/\SO(4) \times \SO(3,2)/\SO(3)$.  The first factor is
locally isometric to the round $S^4$ and the second factor now does
possess an invariant lorentzian metric.  Indeed, $\fso(3,2) = \fso(3)
\oplus \fp$, where $\fp \cong \RR^3 \oplus \RR^3 \oplus \RR$
decomposes under the linear isotropy representation as two copies of
the three-dimensional vector representation of $\fso(3)$ and a
one-dimensional trivial representation.

Let $L_{ab},L_{a5}$ denote the standard generators of $\fso(5)$, where
$a,b=1,2,3,4$ and let $J_{ij},J_{iA},J_{45}$ denote the standard
generators of $\fso(3,2)$, where $i,j=1,2,3$ and $A=4,5$.  Then $\fh$
is spanned by $L_{ab},J_{ij}$ and $\fm$ by $L_{a5},J_{iA},J_{45}$.
The $L_{a5}$ transform as a vector of
$\fso(4)=\RR\left<L_{ab}\right>$, whereas $J_{iA}$ transform as two
copies of the vector representation of
$\fso(3)=\RR\left<J_{ij}\right>$.  The index $A$ is a vector of the
$\fso(2)$ with generator $J_{45}$ which is the nontrivial part of the
normaliser of $\fh$ in $\fg$.  There is a 5-parameter family of
invariant lorentzian inner products on $\fm$:
\begin{equation}
  \left<J_{45},J_{45}\right> = \gamma_0~,\qquad \left<L_{a5},L_{b5}\right> = \gamma_1 \delta_{ab}~,\qquad \left<J_{iA},J_{jB}\right> = \delta_{ij} \Omega_{AB}~,
\end{equation}
where $\gamma_0<0$, $\gamma_1>0$ and $\Omega_{AB}$ is a
positive-definite symmetric $2\times 2$ matrix.  The $\SO(2)$ subgroup
generated by $J_{45}$ acts by rotating the basis $J_{iA}$.  Let
$R_\vartheta \in \SO(2)$ denote the rotation by an angle $\vartheta$.
Then the matrix $\Omega$ transforms as $\Omega \mapsto R_\vartheta^T
\Omega R_\vartheta$.  The off-diagonal component $\Omega_{12}$
transforms as
\begin{equation}
  \Omega_{12} \mapsto \half(\Omega_{11}-\Omega_{22})\sin2\vartheta + \Omega_{12} \cos2\vartheta~.
\end{equation}
If $\Omega_{12} \neq 0$, simply let $\vartheta \in (0,\pi/2)$ be given by
\begin{equation}
  \vartheta = \half \cot^{-1}\left(\frac{\Omega_{22}-\Omega_{11}}{2\Omega_{12}}\right).
\end{equation}
With this choice, the transformed $\Omega$ is diagonal.  Therefore,
without loss of generality, we can assume that
$\left<J_{i4},J_{j5}\right> =0$ and that
\begin{equation}
  \left<J_{i4},J_{j4}\right> = \gamma_2 \delta_{ij} \qquad\text{and}\qquad   \left<J_{i5},J_{j5}\right> = \gamma_3 \delta_{ij}~,
\end{equation}
where $\gamma_2$ and $\gamma_3$ are positive.  Furthermore, using the
homothety invariance of the equations of motion, we can set
$\gamma_0=-1$ without loss of generality.  This then leaves three
positive parameters $\gamma_{1,2,3}$ for the metric.

In order to compute the curvature, we need to compute the $U$ tensor.
Since $S^4$ is a symmetric space, the $U$ tensor has no legs along the
subspace spanned by the $L_{a5}$.  It is then not too hard to show,
using equation \eqref{eq:Utensor}, that all other components vanish
except for the following:
\begin{equation}
  \label{eq:utensorn=5}
  U(J_{i4},J_{j5}) = \half (\gamma_3 - \gamma_2) \delta_{ij} J_{45}~,\quad
  U(J_{45}, J_{i4}) = \frac{1 - \gamma_2}{2\gamma_3} J_{i5} \quad\text{and}\quad
  U(J_{45}, J_{i5}) = \frac{\gamma_3-1}{2\gamma_2} J_{i4}~.
\end{equation}

Defining $\Lambda = 1-(\gamma_2+\gamma_3)^2$, one computes the
following nonzero components of the Ricci tensor:
\begin{equation}
\label{eq:riccin=5}
  \begin{aligned}[b]
    \Ric(J_{45},J_{45}) &= 6 + \frac{3\Lambda}{2\gamma_2\gamma_3}\\
    \Ric(L_{a5},L_{b5}) &= 3 \delta_{ab}
  \end{aligned}
  \qquad\qquad
  \begin{aligned}[b]
    \Ric(J_{i4},J_{j4}) &= \left( (\gamma_2 + \gamma_3) + \frac{\Lambda}{2\gamma_3} - 3\right) \delta_{ij}\\
    \Ric(J_{i5},J_{j5}) &= \left( (\gamma_2 + \gamma_3) + \frac{\Lambda}{2\gamma_2} - 3\right) \delta_{ij}~,
  \end{aligned}
\end{equation}
whence the Ricci scalar becomes
\begin{equation}
  R = \left(6 + \frac{3\Lambda}{2\gamma_2\gamma_3} \right)(\gamma_2 + \gamma_3 - 1) - 6~.
\end{equation}

The space of invariant 4-forms is six-dimensional.  A possible basis
is given by the following 4-forms.  Firstly, we have the volume form
on the $S^4$, which is given algebraically by
\begin{equation}
  L_{15}^* \wedge L_{25}^* \wedge L_{35}^* \wedge L_{45}^*~.
\end{equation}
We then have an invariant 2-form
\begin{equation}
  \omega = \sum_{i=1}^3 J_{i4}^* \wedge J_{i5}^*
\end{equation}
and squaring it we get an invariant 4-form.  Finally we have
\begin{equation}
  \sum_{i,j,k=1}^3\sum_{A,B,C=4}^5 t_{ABC} \varepsilon_{ijk} J_{iA}^* \wedge J_{jB}^* \wedge J_{kC}^* \wedge J_{45}^*~,
\end{equation}
where $t_{ABC}$ is a symmetric 3-tensor, whence it has four
components.  It turns out that all invariant 4-forms are already
closed, so the Bianchi identity is identically satisfied in this
homogeneous Ansatz.  (This is not always the case, though.)  An
explicit basis for the space of invariant closed 4-forms is then given
by the following six 4-forms:
\begin{equation}
  \begin{aligned}
    \Phi_1 &= L_{15}^* \wedge L_{25}^* \wedge L_{35}^* \wedge L_{45}^*\\
    \Phi_2 &= J_{45}^* \wedge J_{14}^* \wedge J_{24}^* \wedge J_{34}^* \\
    \Phi_3 &= J_{45}^* \wedge J_{15}^* \wedge J_{25}^* \wedge J_{35}^* \\
    \Phi_4 &= J_{45}^* \wedge \left( J_{14}^* \wedge J_{24}^* \wedge J_{35}^* + J_{14}^* \wedge J_{25}^* \wedge J_{34}^*  + J_{15}^* \wedge J_{24}^* \wedge J_{34}^*\right)\\
    \Phi_5 &= J_{45}^* \wedge \left(J_{14}^* \wedge J_{25}^* \wedge J_{35}^* + J_{15}^* \wedge J_{24}^* \wedge J_{35}^* + J_{15}^* \wedge J_{25}^* \wedge J_{34}^* \right)\\
    \Phi_6 &= - J_{14}^* \wedge J_{15}^* \wedge J_{24}^* \wedge J_{25}^* - J_{14}^* \wedge J_{15}^* \wedge J_{34}^* \wedge J_{35}^* - J_{24}^* \wedge J_{25}^* \wedge J_{34}^* \wedge J_{35}^*~,
  \end{aligned}
\end{equation}
whence the most general invariant closed 4-form is $F =
\sum_{\alpha=1}^6 \varphi_\alpha \Phi_\alpha$.  The Maxwell and
Einstein equations now become algebraic equations on the 9 real
parameters $\gamma_{1,2,3}>0$ and $\varphi_{1,\ldots,6}$.

It is convenient to analyse these equations to choose an ordered basis
$(X_\mu)_{\mu=0,1,\dots,9,\ten}$ for $\fm$:
\begin{equation}
  X_\mu = \left(J_{45}, L_{15},\dots,L_{45},J_{14},\dots,J_{34},J_{15},\dots,J_{35} \right)~,
\end{equation}
with corresponding canonical dual basis $\theta^\mu$ for $\fm^*$.
Then the inner product is given by
\begin{equation}
  g = -(\theta^0)^2 + \gamma_1 \left((\theta^1)^2 + \dots +
    (\theta^4)^2\right) + \gamma_2 \left((\theta^5)^2 + \dots +
    (\theta^7)^2\right) + \gamma_3 \left((\theta^8)^2 + \dots + (\theta^\ten)^2\right)~,
\end{equation}
and the most general closed 4-form by
\begin{multline}
  F = \varphi_1 \theta^{1234} + \varphi_2 \theta^{0567} + \varphi_3 \theta^{089\ten} + \varphi_4 \left(\theta^{056\ten} - \theta^{0579} + \theta^{0678}\right)\\
  + \varphi_5 \left(\theta^{059\ten} - \theta^{068\ten} + \theta^{0789}\right) + \varphi_6 \left(\theta^{5689} + \theta^{578\ten} + \theta^{679\ten}\right)~.
\end{multline}
It follows that if we let $F = \varphi_1 \theta^{1234} + \overline F$,
then $\half F \wedge F = \varphi_1 \theta^{1234} \wedge \overline F$.
In addition, from equation \eqref{eq:deltaF}, one finds that
\begin{multline}
  \label{eq:deltaFN=5}
  \delta F = -\frac{3\gamma_2\varphi_4}{\gamma_3} \theta^{567} +
  \frac{3\gamma_3\varphi_5}{\gamma_2} \theta^{89\ten} +
  \frac{2\varphi_6}{\gamma_2\gamma_3} \left(\theta^{058} +
    \theta^{069} + \theta^{07\ten}\right)\\ +
  \left(\frac{\gamma_3\varphi_2}{\gamma_2} - \frac{2\gamma_2
      \varphi_5}{\gamma_3}\right) \left(\theta^{56\ten} - \theta^{579}
    + \theta^{678}\right) + \left(\frac{2\gamma_3\varphi_4}{\gamma_2}
    - \frac{\gamma_2 \varphi_3}{\gamma_3}\right) \left(\theta^{59\ten}
    - \theta^{68\ten} + \theta^{789}\right)~.
\end{multline}
We note \emph{en passant} that, as expected, the only invariant
harmonic 4-form is proportional to the volume form on $S^4$: namely,
$\varphi_1 \theta^{1234}$.

The nonlinear Maxwell equation is equation \eqref{eq:maxwell}.  In
order to compute the Hodge $\star$ it is perhaps better to work with
an orthonormal coframe $\overline\theta^\mu$, where
\begin{equation}
  \overline\theta^\mu =
  \begin{cases}
    \theta^0 & \mu = 0\\
    \frac{1}{\sqrt{\gamma_1}}\theta^\mu & \mu \in \{1,2,3,4\}~,\\
    \frac{1}{\sqrt{\gamma_2}} \theta^\mu & \mu \in \{5,6,7\}~,\\
    \frac{1}{\sqrt{\gamma_3}} \theta^\mu & \mu \in \left\{8,9,\ten
    \right\}~,
 \end{cases}
\end{equation}
where we choose the positive square roots of the positive quantities
$\gamma_i$.  A short calculation later, one finds that
\begin{multline}
  -\star\half F \wedge F =
  -\frac{\gamma_2^{3/2}\varphi_1\varphi_3}{\gamma_1^2 \gamma_3^{3/2}}
  \theta^{567} +
  \frac{\gamma_3^{3/2}\varphi_1\varphi_2}{\gamma_1^2\gamma_2^{3/2}}
  \theta^{89\ten} -
  \frac{\varphi_1\varphi_6}{\gamma_1^2\gamma_2^{1/2}\gamma_3^{1/2}}
  \left(\theta^{058} + \theta^{069} + \theta^{07\ten}\right)\\ +
  \frac{\gamma_2^{1/2}\varphi_1\varphi_5}{\gamma_1^2\gamma_3^{1/2}}
  \left(\theta^{56\ten} - \theta^{579} + \theta^{678}\right) -
  \frac{\gamma_3^{1/2}\varphi_1\varphi_4}{\gamma_1^2\gamma_2^{1/2}}
  \left(\theta^{59\ten} - \theta^{68\ten} + \theta^{789}\right)~,
\end{multline}
which can be readily compared with equation \eqref{eq:deltaFN=5} in
order to arrive at the algebraic Maxwell equations: 
\begin{equation}
  \label{eq:men=5}
  \begin{aligned}[t]
    \varphi_4 &= \frac{\gamma_2^{1/2}\varphi_1\varphi_3}{3\gamma_1^2
      \gamma_3^{1/2}}\\
    \varphi_5 &=
    \frac{\gamma_3^{1/2}\varphi_1\varphi_2}{3\gamma_1^2\gamma_2^{1/2}}\\
    0 &= \left(\frac{\varphi_1}{\gamma_1^2} +
      \frac{2}{\gamma_2^{1/2}\gamma_3^{1/2}}\right) \varphi_6~.
 \end{aligned}
  \qquad\qquad
  \begin{aligned}[t]
    \frac{\gamma_2 \varphi_3}{\gamma_3} &=
    \left(\frac{2\gamma_3}{\gamma_2} +
      \frac{\gamma_3^{1/2}\varphi_1}{\gamma_1^2\gamma_2^{1/2}}\right)
    \varphi_4\\
  \frac{\gamma_3\varphi_2}{\gamma_2} &=
  \left(\frac{2\gamma_2}{\gamma_3} +  \frac{\gamma_2^{1/2}
      \varphi_1}{\gamma_1^2\gamma_3^{1/2}}\right) \varphi_5
  \end{aligned}
\end{equation}
The bottom equation defines two main branches of solutions, depending
on whether $\varphi_6$ vanishes.  The first two equations express
$\varphi_4$ and $\varphi_5$ in terms of $\varphi_3$ and $\varphi_2$,
respectively; whereas the remaining equations become:
\begin{equation}
  \varphi_3 = \frac13 \left(\frac{2\gamma_3}{\gamma_2} + \frac{\gamma_3^{1/2}\varphi_1}{\gamma_1^2\gamma_2^{1/2}}\right) \frac{\gamma_3^{1/2}\varphi_1}{\gamma_1^2 \gamma_2^{1/2}} \varphi_3 \qquad\text{and}\qquad
 \varphi_2 = \frac13 \left(\frac{2\gamma_2}{\gamma_3} +  \frac{\gamma_2^{1/2} \varphi_1}{\gamma_1^2\gamma_3^{1/2}}\right) \frac{\gamma_2^{1/2}\varphi_1}{\gamma_1^2\gamma_3^{1/2}}\varphi_2~.
\end{equation}
Notice the invariance of the equations under the simultaneous
exchanges: $\gamma_2\leftrightarrow \gamma_3$ and $\varphi_2
\leftrightarrow \varphi_3$.  This is nothing but the remnant of the
action of the normaliser of $\fh$ in $\fg$, which our choice of
diagonal inner product broke down to a $\ZZ/2\ZZ$ exchanging the $4$
and $5$ labels in $\fso(3,2)$.  This discrete symmetry relates some of
the branches below.

Each of these equations also defines two branches, depending on
whether $\varphi_3$ and $\varphi_2$ vanish or not.  In all, we have 8
branches of solutions, two pairs of which are related by the remaining
$\ZZ/2\ZZ$ action mentioned above.  They are given as follows:

\begin{enumerate}
\item $\varphi_2 = \varphi_3 = \varphi_6 =0$.  This implies that
  $\varphi_{i\neq 1} = 0$ and $\varphi_1$ remains free.

\item $\varphi_2 = \varphi_6 = 0$, but $\varphi_3$ remains free.
  Then there are two sub-branches, distinguished by the choice of sign
  in 
  \begin{equation}
    \frac{\varphi_1}{\gamma_1^2} = - \sqrt{\frac{\gamma_3}{\gamma_2}}
    \pm \sqrt{\frac{\gamma_3}{\gamma_2} +
      3\frac{\gamma_2}{\gamma_3}}~.
  \end{equation}

\item $\varphi_3 = \varphi_6 = 0$, but $\varphi_2$ remains free, and
  again two sub-branches distinguished by the sign in
  \begin{equation}
    \frac{\varphi_1}{\gamma_1^2} = - \sqrt{\frac{\gamma_2}{\gamma_3}}
    \pm \sqrt{\frac{\gamma_2}{\gamma_3} +
      3\frac{\gamma_3}{\gamma_2}}~.
  \end{equation}

\item $\varphi_6=0$, but $\varphi_2,\varphi_3$ remain free.  In this
  case, symmetry says that $\gamma_2 = \gamma_3$, and hence we have
  two sub-branches:
  \begin{enumerate}
  \item $\varphi_1 = \gamma_1^2$, $\varphi_4 = \varphi_3/3$, $\varphi_5 = \varphi_2/3$, and
  \item $\varphi_1 = -3 \gamma_1^2$, $\varphi_4 = -\varphi_3$, $\varphi_5 = -\varphi_2$.
  \end{enumerate}

\item $\varphi_2 = \varphi_3 = 0$, but $\varphi_6$ remains free.  This
  implies that $\varphi_1 = -2\gamma_1^2/\sqrt{\gamma_2\gamma_3}$.

\item $\varphi_2=0$, but $\varphi_3,\varphi_6$ remain free.  Then
  $\varphi_1 = - 2\gamma_1^2/\sqrt{\gamma_2\gamma_3}$ and then
  $\gamma_3 = 1 - \frac34 \gamma^2_2$.

\item $\varphi_3=0$, but $\varphi_2,\varphi_6$ remain free.  Again
  $\varphi_1 = - 2\gamma_1^2/\sqrt{\gamma_2\gamma_3}$ and then
  $\gamma_2 = 1 - \frac34 \gamma^3_2$.

\item $\varphi_2,\varphi_3,\varphi_6$ remain free.  In this case,
  symmetry dictates (and one can also check) that $\gamma_2 = \gamma_3
  = \frac23$, whence $\varphi_1 = -3 \gamma_1^2$.
\end{enumerate}

The Einstein equations become five algebraic equations on the
$\gamma_i$ and the $\varphi_i$.  We may use two of the Maxwell
equations in \eqref{eq:men=5} to eliminate $\varphi_4$ and $\varphi_5$
from the equations and we may use that $\gamma_1\gamma_2\gamma_3\neq
0$ to clear denominators and arrive after some simplification at the
following (almost) polynomial equations:
\begin{equation}
  \begin{aligned}[t]
  \label{eq:een=5}
 0 &= \varphi_1 \varphi_2 \varphi_3 \left(3 \gamma_1^2(\gamma_2 + \gamma_3)+ 2\sqrt{\gamma_2\gamma_3}\varphi_1\right)\\
  0 &= 3 \gamma_1^4 \varphi_6^2 + \gamma_2^2 \gamma_3^2 (12 \gamma_1^3 - \varphi_1^2) + 3\gamma_1^4 \gamma_2 \gamma_3 \left((\gamma_2-\gamma_3)^2 -1 \right)\\
  0 &= \left(3 \gamma_2 \gamma_1^4 + \gamma_3\varphi_1^2\right) \gamma_2^2\varphi_3^2 + 
  \left(3 \gamma_3 \gamma_1^4 + \gamma_2\varphi_1^2\right)\gamma_3^2 \varphi_2^2 - 9\gamma_1^4\gamma_2\gamma_3 \varphi_6^2 + 6\gamma_2^3\gamma_3^3\varphi_1^2 - 54 \gamma_1^3 \gamma_2^3 \gamma_3^3\\
  0 &= 6\gamma_1^4\gamma_2^3 \varphi_3^2 + \left( \gamma_2\varphi_1^2 - 3\gamma_1^4\gamma_3 \right)\gamma_3^2 \varphi_2^2 - 9\gamma_1^4\gamma_2 \gamma_3 \varphi_6^2 + 3\gamma_2^3\gamma_3^3\varphi_1^2 + 9\gamma_1^4\gamma_2^2\gamma_3^2 \left(\gamma_2^2 - \gamma_3^2 - 6 \gamma_2 + 1\right)\\
  0 &= 6\gamma_1^4\gamma_3^3 \varphi_2^2 + \left( \gamma_3\varphi_1^2 - 3\gamma_1^4\gamma_2 \right)\gamma_2^2 \varphi_3^2 - 9\gamma_1^4\gamma_2 \gamma_3 \varphi_6^2 + 3\gamma_2^3\gamma_3^3\varphi_1^2 + 9\gamma_1^4\gamma_2^2\gamma_3^2 \left(\gamma_3^2 - \gamma_2^2 - 6 \gamma_3 + 1\right)~.
\end{aligned}
\end{equation}
Notice that the first three equations are invariant under the remnant
$\ZZ/2\ZZ$ symmetry, whereas the last two equations are mapped into
each other.

We now insert each of the solution branches of the Maxwell equations
in turn into the Einstein equations.  We have used a mixture of
symbolic and numerical computation to arrive at the following results,
where the enumeration coincides with that of the solutions of the
Maxwell equations.

\begin{enumerate}
\item We find one solution: $\gamma_2=\gamma_3 = \frac23$, $\gamma_1 =
  \frac49$ and $\varphi_1 = \pm \frac89$.  This is a Freund--Rubin
  background, since the 4 - form is proportional to the volume form on
  the $S^4$.  We will see below that the 7-dimensional geometry is
  that of a lorentzian Sasaki--Einstein manifold, whence this
  background is supersymmetric.  This geometry will be studied in
  detail in Section \ref{sec:freund-rubin-backgr}, where we will show
  that it is Wick-related to a known Freund--Rubin $\AdS_4$
  background.
  
\item There are two branches, distinguished by the sign of the root in
  $\varphi_1$.
  \begin{itemize}
  \item[($+$)] In the positive branch, we find the following numerical
    solution:
    \begin{equation}\label{eq:numericalS4bg}
      \begin{aligned}[m]
        \gamma_1 &= 0.22776420155467458\\
        \gamma_2 &= 0.4670546272324634\\
        \gamma_3 &= 0.12728016028858763
      \end{aligned}
      \qquad\text{whence}\qquad
      \begin{aligned}[m]
        \varphi_1 &= 0.14715771499261474\\
        \varphi_3 &= \pm 0.27380714065085027~.
      \end{aligned}
    \end{equation}
    We do not discard the possibility that one can do better and write
    this solution in some iterated quadratic extension of the
    rationals, but we have not been able to do it.  The source of the
    difficulty comes from the fact that the solutions are built out of
    roots of a sixth order integer polynomial and we do not know if
    its Galois group is solvable.
    
  \item[($-$)] In the negative branch, we find the following solution
    \begin{equation}
      \gamma_1 = \gamma_2 = \gamma_3 = \frac13 \qquad \varphi_1 =
      -\frac13 \qquad\text{and}\qquad \varphi_3 = \pm
      \frac1{\sqrt{3}}~.
    \end{equation}
    We will see that this is part of a more general solution.
  \end{itemize}
  
\item This is just the previous branch \emph{mutatis mutandis}:
  exchange $\varphi_2$ and $\varphi_3$ and similarly $\gamma_2$ and
  $\gamma_3$.
  
\item
  \begin{enumerate}
  \item There are no solutions.
  \item There is a one-parameter family of solutions:
    \begin{equation}
      \gamma_1 = \gamma_2 = \gamma_3 = \frac13, \quad \varphi_1 =
      -\frac13, \quad \varphi_2 = \frac{1}{\sqrt{3}}\cos\alpha
      \quad\text{and}\quad \varphi_3 = \frac{1}{\sqrt{3}}\sin\alpha~.
    \end{equation}
    This generalises two of the solutions mentioned above, to which it
    reduces when the angle $\alpha$ obeys $2\alpha \in \pi\ZZ$.  It
    will be studied in more detail in Section
    \ref{sec:circle-backgrounds}.
  \end{enumerate}
  
\item There are no solutions.
  
\item There are no solutions.
  
\item This is the previous branch \emph{mutatis mutandis}, hence there are no solutions.
  
\item There are no solutions.
\end{enumerate}

In summary, we have found three classes of homogeneous supergravity
backgrounds not of AdS type with symmetry group locally isomorphic to
$\SO(3,2)\times \SO(5)$, and which we describe in more detail in
Section~\ref{sec:other-backgrounds}.

\section{Homogeneous anti de Sitter backgrounds}
\label{sec:homogeneous-anti-de}

In this section we study the existence of homogeneous (anti) de Sitter
backgrounds $G/H$ with $G$ locally isomorphic to $\SO(n) \times
\SO(3,2)$, for $n>4$.  Our first result is that there are no de Sitter
backgrounds, which allows us to focus on backgrounds of the form
$\AdS_4 \times X^7$, where $X = \SO(n)/H$ for $n=5,6,7$.  This means
that $H$ is a closed Lie subgroup of dimension $3,8,14$,
respectively.

One could ask whether there are backgrounds of the type $\AdS_p \times
X^{11-p}$ for $p\neq4$ and still of the form $G/H$.  This would
require $\SO(3,2)$ acting locally transitively on $\AdS_p$.  By
dimension, and since $\SO(3,2)$ must act effectively, $p\geq 4$.  One
can easily show that $\SO(3,2)$ cannot act locally transitively on
$\AdS_5$.  This is done by comparing the possible subgroups of
$\SO(3,2)$ which admit an embedding into $\SO(4,1)$, as listed in
\cite{MR0455998}, and checking that the linear isotropy representation
of the unique such subgroup (with Lie algebra of type $A_{5,35}^{0,1}$
in that paper's notation) does not in fact lie in $\SO(4,1)$.  We do
not know whether $\AdS_p$ for  $p>5$ admits an isometric transitive
action of $\SO(3,2)$ \cite{MO78272}.

\subsection{There are no de Sitter backgrounds}
\label{sec:there-are-no}

It is probably the case that $\SO(3,2)$ does not act isometrically on
any de Sitter space, but let us in any case show that the Einstein
equations for homogeneous backgrounds rule out a de Sitter solution.
Let us consider a geometry of the form $\dS_p \times M^{11-d}$.  Since
the only invariant forms on $\dS_p$ are the constant 0-forms and
constant multiples of the volume $p$-form $\nu$, if $p>4$ $F$
cannot have legs along the de Sitter directions, whence the de Sitter
components of the Einstein equation are given by
\begin{equation}
  R_{\mu\nu} = -\tfrac16 g_{\mu\nu} |F|^2 \implies g^{\mu\nu}R_{\mu\nu} = - \tfrac{p}6 |F|^2 \leq 0~,
\end{equation}
contradicting that de Sitter space has positive scalar curvature.  If
$p\leq 4$, then the most general invariant $F$ has the form
\begin{equation}
  F = \nu \wedge \alpha + \varphi \qquad\text{for}\quad \alpha \in \Omega^{4-p}(M), \varphi \in \Omega^4(M)~.
\end{equation}
The de Sitter components of the Einstein equation are now given by
\begin{equation}
  R_{\mu\nu} = - \half g_{\mu\nu} |\alpha|^2 - \tfrac16 g_{\mu\nu} (-|\alpha|^2 + |\varphi|^2) \implies g^{\mu\nu}R_{\mu\nu} = -\tfrac{p}6 (2|\alpha|^2 + |\varphi|^2) \leq 0~,
\end{equation}
again yielding a contradiction.  In summary, there are no homogeneous
de Sitter backgrounds.

\subsection{No new $n=7$ $\AdS_4$ backgrounds}
\label{sec:n=7}

From the results of Section \ref{sec:so(7)}, we see that there is a
unique 14-dimensional Lie subalgebra of $\fso(7)$, namely $\fg_2$.
The reductive split $\fso(7) = \fg_2 \oplus \fm$ is such that $\fm$ is
the 7-dimensional irreducible representation of $\fg_2$, whence the
homogeneous space $\SO(7)/G_2$ is locally isometric to the round
7-sphere, which admits an isometric action of $\SO(8)$ with stabiliser
$\SO(7)$.  Now the only homogeneous background $\AdS_4 \times
\SO(8)/\SO(7)$ is a Freund--Rubin background, because there are no
$\SO(8)$-invariant 4-forms on $\SO(8)/\SO(7)$; however, there are
$\SO(7)$-invariant 4-forms on $\SO(7)/G_2$ and hence in principle one
can ask whether there are supergravity backgrounds on $\AdS_4 \times
\SO(7)/G_2$ which are not of Freund--Rubin type.  Metrically, of
course, such backgrounds are locally isometric to $\AdS_4 \times S^7$,
but where the radii of curvature of the two spaces are fixed by the
flux.  Recall that $\SO(7)$-invariant 4-forms on $X=\SO(7)/G_2$ are in
one-to-one correspondence with $G_2$-invariant elements of $\Lambda^4
\fm$.  It is well-known that $(\Lambda^p \fm)^{G_2}$ is
one-dimensional for $p=3,4$.  If we let $\varphi$ denote a nonzero
$\SO(7)$-invariant 3-form, then the $\SO(7)$-invariant 4-form is
proportional to $\star\varphi$.  Moreover it is also the case that
$d\varphi$ is proportional to $\star\varphi$, whence $d\star\varphi =
0$.  It follows by dimension that $\star\varphi \wedge \star\varphi =
0$, and that $\delta\star\varphi \neq 0$, since it is in fact
proportional to $\varphi$.  Therefore letting $F = \alpha
\dvol_{\AdS_4} + \beta \star\varphi$ with $\alpha,\beta \in \RR$, we
see that $dF=0$ and that both $-\half \star F \wedge F$ and $\delta F$
are proportional to $\varphi$, whence we get an identity relating
$\beta$ and $\alpha\beta$, which means that either $\beta=0$
(Freund--Rubin) or else $\alpha$ is fixed and $\beta$ free. The former
background is the standard Freund--Rubin background $\AdS_4 \times
S^7$, whereas the latter is the Englert solution
\cite{Englert:1982vs}.  It may be worth writing these solutions
explicitly in our conventions.

We have $\fg = \fso(3,2) \oplus \fso(7)$ with bases $J_{\mu\nu}$ for
$\fso(3,2)$ and $L_{ab}$ for $\fso(7)$.  We have $\fh = \fso(3,1)
\oplus \fg_2$, where $\fso(3,1)$ is spanned by $J_{\mu\nu}$, with
$\mu,\nu=1,2,3,4$, and $\fg_2$ is spanned by the 14 linear
combinations in equation \eqref{eq:g2inso7}.  This means that an
ordered basis $(X_0,X_1,\dots,X_\ten)$ for the complement of $\fh$ in
$\fg$ is given by the following elements of $\fg$ in the order given:
\begin{multline}
  J_{45}, J_{15}, J_{25}, J_{35}, L_{12} + L_{34} - L_{67}, L_{13} - L_{24} + L_{57}, L_{14} + L_{23} - L_{56},\\
  L_{15} - L_{37} + L_{46}, L_{16} + L_{27} - L_{45}, L_{17} - L_{26} + L_{35}, L_{25} + L_{36} + L_{47}~.
\end{multline}
We let $(\theta^0,\dots,\theta^\ten)$ denote the canonical dual basis
for $\fm^*$.  The most general $H$-invariant lorentzian inner product
on $\fm$ is given by
\begin{equation}
  g = \gamma_0 \left(-(\theta^0)^2 + (\theta^1)^2 + (\theta^2)^2 + (\theta^3)^2\right) + \gamma_1 \sum_{i=4}^\ten (\theta^i)^2~,
\end{equation}
where $\gamma_0>0$ and $\gamma_1>0$.  Similarly, the most general
invariant 4-form on $\fm$ is given by
\begin{equation}
  F = \varphi_1 \theta^{0123} + \varphi_2 \left(\theta^{4578} + \theta^{459\ten} + \theta^{4679} - \theta^{468\ten} + \theta^{567\ten} + \theta^{5689} - \theta^{789\ten}\right)~,
\end{equation}
which is closed for all $\varphi_1$ and $\varphi_2$.  The homothety
invariance of the field equations allows us to set $\gamma_0=1$, and
we will do so.  The Maxwell equations \eqref{eq:maxwell} then become
\begin{equation}
  \varphi_2 \left(\varphi_1 + \frac6{\sqrt{\gamma_1}}\right)=0~,
\end{equation}
which has two branches: one where $\varphi_2=0$ and $\varphi_1$
remains free, and one where $\varphi_1 = - 6/\sqrt{\gamma_1}$ and
$\varphi_2$ remains free.  The Einstein equations \eqref{eq:einstein}
become
\begin{equation}
  18\gamma_1^4 = 2 \gamma_1^4 \varphi_1^2 + 7 \varphi_2^2  \qquad\text{and}\qquad 81 \gamma_1^3 = \gamma_1^4 \varphi_1^2 + 5\varphi_2^2~.
\end{equation}

The branch where $\varphi_2=0$ corresponds to the original
Freund--Rubin background \cite{FreundRubin}, in which $\varphi_1=\pm
3$ and $\gamma_1 = 9$.  Reintroducing the scale
$\lambda\in\RR^\times$, we have
\begin{equation}
  \begin{aligned}[m]
    \lambda^{-2}g &= -(\theta^0)^2 + (\theta^1)^2 + (\theta^2)^2 + (\theta^3)^2 + 9 \sum_{i=4}^\ten (\theta^i)^2\\
    \lambda^{-3}F &= 3 \theta^{0123}~.
  \end{aligned}
\end{equation}

In the second branch, and reintroducing the scale, we have
\begin{equation}\label{eq:EnglertSoln=7}  % NEW LABEL (eq. (86))
  \begin{aligned}[m]
    \lambda^{-2}g &= -(\theta^0)^2 + (\theta^1)^2 + (\theta^2)^2 + (\theta^3)^2 + \tfrac{15}2 \sum_{i=4}^\ten (\theta^i)^2\\
    \lambda^{-3}F &= -2\sqrt{\tfrac65} \theta^{0123} \pm 3 \left(\tfrac{15}2\right)^{3/2} \left(\theta^{4578} + \theta^{459\ten} + \theta^{4679} - \theta^{468\ten} + \theta^{567\ten} + \theta^{5689} - \theta^{789\ten}\right)~.
  \end{aligned}
\end{equation}
This is Englert solution \cite{Englert:1982vs}, which is known not to
be supersymmetric.

\subsection{No new $n=6$ $\AdS_4$ backgrounds}
\label{sec:n=6}

As we saw in Section \ref{sec:so(6)}, there is unique subalgebra of
$\fso(6)$ of dimension $8$, namely $\fsu(3)$.  The reductive split
$\fso(6) = \fsu(3) \oplus \fp$ is such that $\fp$ is a reducible
representation of $\fsu(3)$, whose complexification $\fp\otimes_\RR
\CC = V_{[00]} \oplus V_{[10]}\oplus V_{[01]}$, where $[mn]$ are the
Dynkin labels of the representations, with $[00]$ corresponding to the
trivial one-dimensional representation and $[10]$ and $[01]$ the
fundamental and anti-fundamental three-dimensional representations,
respectively.  As a real representation, $\fp$ decomposes into the
direct sum of a the trivial one-dimensional representation and an
irreducible six-dimensional real representation whose complexification
is $V_{[10]}\oplus V_{[01]}$.  This means that there are two
parameters for the inner product on $\fp$, which together with the
radius of curvature of $\AdS_4$ makes three metric parameters.  There
is a four-dimensional space of invariant 4-forms: the volume form on
$\AdS_4$, the square of the invariant symplectic form on $\fp$ and two
more coming from the 3-forms on $V_{[10]}$ and on $V_{[01]}$ wedged
with any nonzero element of $V_{[00]}^*$.  Let us be more explicit.

We have $\fg = \fso(3,2) \oplus \fso(6)$ with bases $J_{\mu\nu}$ for
$\fso(3,2)$ and $L_{ab}$ for $\fso(6)$.  We have $\fh = \fso(3,1)
\oplus \fsu(3)$, where $\fso(3,1)$ is spanned by $J_{\mu\nu}$, with
$\mu,\nu=1,2,3,4$, and $\fsu(3)$ is spanned by the 8 linear
combinations in equation \eqref{eq:su3inso6}.  This means that an
ordered basis $(X_0,X_1,\dots,X_\ten)$ for the complement of $\fh$ in
$\fg$ is given by the following elements of $\fg$ in the order given:
\begin{multline}
  J_{45}, J_{15}, J_{25}, J_{35}, L_{12} + L_{34} + L_{56}, L_{13} - L_{24}, L_{14} + L_{23}, L_{15} - L_{26}, L_{16} + L_{25}, L_{35} - L_{46}, L_{36} + L_{45}~.
\end{multline}
We let $(\theta^0,\dots,\theta^\ten)$ denote the canonical dual basis
for $\fm^*$.  The most general $H$-invariant lorentzian inner product
on $\fm$ is given by
\begin{equation}
  g = \gamma_0 \left(-(\theta^0)^2 + (\theta^1)^2 + (\theta^2)^2 + (\theta^3)^2\right) + \gamma_1 (\theta^4)^2 + \gamma_2 \sum_{i=5}^\ten (\theta^i)^2~,
\end{equation}
where $\gamma_{0,1,2}>0$.  Similarly, the most general invariant
4-form on $\fm$ is given by
\begin{multline}
  F = \varphi_1 \theta^{0123} + \varphi_2 \left(\theta^{4579} - \theta^{458\ten} - \theta^{467\ten} - \theta^{4689}\right)\\
 + \varphi_3 \left(\theta^{457\ten} + \theta^{4589}+ \theta^{4679}- \theta^{468\ten}\right) + \varphi_4 \left(\theta^{5678} + \theta^{569\ten} + \theta^{789\ten}\right)~,
\end{multline}
which is closed for all values of $\varphi_i$.  The homothety
invariance of the field equations allows us to set $\gamma_0=1$, and
we will do so.  The Maxwell equations \eqref{eq:maxwell} then become
\begin{equation}
  \varphi_2 \left(\frac{6}{\sqrt{\gamma_1}}+\varphi_1\right) = 0 \qquad \varphi_3
   \left(\frac{6}{\sqrt{\gamma_1}}+\varphi_1\right) = 0 \qquad \varphi_4 \left(\frac{8
   \sqrt{\gamma_1}}{3 \gamma_2}+\varphi_1\right) = 0~,
\end{equation}
which has several branches:
\begin{enumerate}
\item $\varphi_2=\varphi_3=\varphi_4 =0$;
\item $\varphi_2=\varphi_3=0$, $\varphi_4 \neq 0$: whence $\varphi_1 = -\frac{8\sqrt{\gamma_1}}{3\gamma_2}$;
\item $\varphi_2^2 + \varphi_3^2 \neq 0$, $\varphi_4=0$: whence $\varphi_1 = -\frac{6}{\sqrt{\gamma_1}}$;
\item $\varphi_2^2 + \varphi_3^2 \neq 0$, $\varphi_4\neq 0$: whence $\varphi_1 = -\frac{6}{\sqrt{\gamma_1}}$ and $\gamma_2 = 4\gamma_1/9$.
\end{enumerate}
The Einstein equations \eqref{eq:einstein} become
\begin{equation}
  \begin{aligned}[m]
    0 &= 3 \gamma_1 \gamma_2^4 \varphi_1^2+6 \gamma_2 \varphi_2^2+6 \gamma_2 \varphi_3^2+9 \gamma_1 \varphi_4^2-144 \gamma_1 \gamma_2^3+16 \gamma_1^2 \gamma_2^2\\
    0 &= 2 \gamma_1 \gamma_2^4 \varphi_1^2+4 \gamma_2 \varphi_2^2+4 \gamma_2 \varphi_3^2+3 \gamma_1 \varphi_4^2-18 \gamma_1 \gamma_2^4\\
    0 &= \gamma_1 \gamma_2^4 \varphi_1^2+8 \gamma_2 \varphi_2^2+8 \gamma_2 \varphi_3^2-3 \gamma_1 \varphi_4^2-16 \gamma_1^2 \gamma_2^2
  \end{aligned}
\end{equation}
It is now a simple matter to specialise the Einstein equations to each
of the branches of solutions of the Maxwell equations.  We find three
kinds of solutions: the original Freund--Rubin solution, the Englert
solution and a circle's worth of solutions found by Pope and Warner
\cite{Pope:1984bd,Pope:1984jj}.  In detail, we have the following
results for the above four branches of solutions of the Maxwell
equations.

\begin{enumerate}
\item This is the Freund--Rubin background.  The only solution to the
  Einstein equations are $\varphi_1 = \pm 3$, $\gamma_1 = 9$ and
  $\gamma_2 = 4$.  Reintroducing the scale, we have
  \begin{equation}
    \begin{aligned}[m]
      \lambda^{-2}g &= -(\theta^0)^2 + (\theta^1)^2 + (\theta^2)^2 + (\theta^3)^2 + 9 (\theta^4)^2 + 4 \sum_{i=5}^\ten (\theta^i)^2\\
      \lambda^{-3}F &= 3 \theta^{0123}~.
    \end{aligned}
  \end{equation}

\item There are no real solutions to the Einstein equations.

\item Here $\varphi_1 = -\sqrt{3}$, $\varphi_2 + i \varphi_3 = 32
  \sqrt{\frac23} e^{i\alpha}$, $\gamma_1 = 12$ and $\gamma_2 =
  \frac83$, where $\alpha$ is an angle.  Reintroducing the scale, we
  have
  \begin{equation}
    \begin{aligned}[m]
      \lambda^{-2}g &= -(\theta^0)^2 + (\theta^1)^2 + (\theta^2)^2 + (\theta^3)^2 + 12 (\theta^4)^2 + \tfrac83 \sum_{i=5}^\ten (\theta^i)^2\\
      \lambda^{-3}F &= -\sqrt{3} \theta^{0123} + 32\sqrt{\tfrac23} \cos\alpha \left(\theta^{4579} - \theta^{458\ten} - \theta^{467\ten} - \theta^{4689}\right)\\
      & \qquad {} + 32\sqrt{\tfrac23} \sin\alpha \left(\theta^{457\ten} + \theta^{4589}+ \theta^{4679}- \theta^{468\ten}\right)~.
    \end{aligned}
  \end{equation}
  This is the solution found by Pope and Warner.

\item Here $\varphi_1 = - 2\sqrt{\frac65}$, $\varphi_2 + i \varphi_3 = 10 \sqrt{\frac{10}3} e^{i\alpha}$, $\varphi_4 = \pm \frac{20}3 \sqrt{\frac{10}3}$, $\gamma_1 = \frac{15}2$ and $\gamma_2 = \frac{10}3$, where $\alpha$ is again an angle.  Reintroducing the scale, we have
  \begin{equation}
    \begin{aligned}[m]
      \lambda^{-2}g &= -(\theta^0)^2 + (\theta^1)^2 + (\theta^2)^2 + (\theta^3)^2 + \tfrac{15}2 (\theta^4)^2 + \tfrac{10}3 \sum_{i=5}^\ten (\theta^i)^2\\
      \lambda^{-3}F &= -2\sqrt{\tfrac65} \theta^{0123} + 10\sqrt{\tfrac{10}3} \cos\alpha \left(\theta^{4579} - \theta^{458\ten} - \theta^{467\ten} - \theta^{4689}\right)\\
      & \qquad {} + 10\sqrt{\tfrac{10}3}  \sin\alpha \left(\theta^{457\ten} + \theta^{4589}+ \theta^{4679}- \theta^{468\ten}\right) \pm \tfrac{20}3 \sqrt{\tfrac{10}3} \left(\theta^{5678} + \theta^{569\ten} + \theta^{789\ten}\right)~.
    \end{aligned}
  \end{equation}
  This is again Englert's solution, but in a language where only the
  $\SO(6)$ symmetry is manifest.  This explains the fact that we get a
  circle of solutions.  The normaliser of $\SO(6)$ in $\SO(8)$
  contains an $\SO(2)$ subgroup (in fact, in the centraliser) and the
  circle is nothing but the orbit of this subgroup.  Each point in the
  circle is invariant under a different $\SO(7)$ subgroup of $\SO(8)$
  containing the same $\SO(6)$ subgroup.  These $\SO(7)$ subgroups are
  conjugate in $\SO(8)$ under the action of the normaliser of
  $\SO(6)$.  We will see below when discussing $n=5$ backgrounds that
  we get a 2-sphere's worth of Englert solutions, where the 2-sphere
  is the orbit of the centraliser of $\SO(5)$ in $\SO(8)$, which is an
  $\SO(3)$ subgroup.

\end{enumerate}

\subsection{Possible new $n=5$ $\AdS_4$ backgrounds}
\label{sec:n=5}

From the results of Section \ref{sec:so(5)} we have three
3-dimensional subalgebras of $\fso(5)$, all isomorphic to $\fso(3)$.
We can distinguish them by what they do to the five-dimensional
irreducible representation of $\fso(5)$.  One acts irreducibly, a
second $\fso(3)$ subalgebra decomposes the five-dimensional
representation as $2V_0 \oplus V_2$, where $V_n$ is the
($n+1$)-dimensional irreducible representation of $\fso(3)$, and the
third $\fso(3)$ subalgebra decomposes it as $V_0 \oplus V_3$.  If we
let $L_{ab}$ denote the standard basis for $\fso(5)$, then the three
$\fso(3)$ subalgebras are the following:
\begin{enumerate}
\item $\fso(3)_{\text{irr}}$ with basis given by equation \eqref{eq:so3irrinso5}.
\item $\fso(3) = \RR\left< L_{12}, L_{13}, L_{23}\right>$
\item $\fso(3)_+ = \RR\left< L_1^+, L_2^+, L_3^+\right>$
\end{enumerate}

\subsubsection{$\fso(3)_{\text{irr}}$ isotropy}
\label{sec:fso3_textirr}

The first case, where the subalgebra is $\fso(3)_{\text{irr}}$, is the
simplest.  The complement of $\fso(3)_{\text{irr}}$ in $\fso(5)$ is
irreducible, so there is (up to the overall homothety) one metric
parameter.  There is a two-dimensional space of closed invariant
4-forms, spanned by the volume form on $\AdS_4$ and a 4-form on the
riemannian factor.  The supergravity field equations reveal two
backgrounds, which can be shown to be the original Freund--Rubin and
Englert backgrounds.

\subsubsection{$\fso(3)$ isotropy}
\label{sec:fso3-isotropy}

In the second case, the isotropy subalgebra is $\fso(3)$, whose
complement in $\fso(5)$ decomposes as $2V_2 \oplus V_0$, whence there
are four metric parameters, which are reduced to three by the action
of the normaliser.  In particular, we can choose the inner product to
be diagonal relative to the following ordered basis for $\fm$:
\begin{equation}
  (X_0,X_1,\dots,X_\ten) = (J_{45}, J_{15}, J_{25}, J_{35}, L_{14}, L_{24}, L_{34}, L_{15}, L_{25}, L_{35}, L_{45})~.
\end{equation}
Indeed, in terms of the canonical dual bases for $\fm^*$, we can write
the invariant metric as
\begin{equation}\label{eq:gn=5so3}
  g = \gamma_0 \left(-(\theta^0)^2 + (\theta^1)^2 + (\theta^2)^2 + (\theta^3)^2 \right) + \gamma_1 \sum_{i=4}^6 (\theta^i)^2 + \gamma_2 \sum_{i=7}^9 (\theta^i)^2  + \gamma_3 (\theta^\ten)^2~,
\end{equation}
with $\gamma_{0,1,2,3}>0$.  Using the homothety invariance, we can set
$\gamma_0=1$ without loss of generality.  The space of invariant
closed 4-forms is 6-dimensional, whence the most general such $F$ is
\begin{multline}
  \label{eq:Fn=5so3}
  F = \varphi_1 \theta^{0123} + \varphi_2 \theta^{456\ten} + \varphi_3 \left(\theta^{4578} + \theta^{4679} + \theta^{5689} \right) + \varphi_4 \left(\theta^{459\ten} - \theta^{468\ten} + \theta^{567\ten} \right)\\
  + \varphi_5 \left(\theta^{489\ten} - \theta^{579\ten} +  \theta^{678\ten} \right) + \varphi_6 \theta^{789\ten}~.
\end{multline}

The Maxwell equations \eqref{eq:maxwell} become the following equations
\begin{equation}
  \begin{aligned}[t]
    0 &= \frac{\sqrt{\gamma_2} \varphi_1 \varphi_2}{\sqrt{\gamma_1}}+\frac{3 \varphi_5}{\sqrt{\gamma_3}}\\
    0 &= \frac{3 \varphi_4}{\sqrt{\gamma_3}}+\frac{\sqrt{\gamma_1} \varphi_1 \varphi_6}{\sqrt{\gamma_2}}\\
    0 &= \frac{2 \sqrt{\gamma_3} \varphi_3}{\sqrt{\gamma_1}}-\frac{\varphi_1 \varphi_3}{\sqrt{\gamma_2}}
 \end{aligned}
  \qquad\qquad
  \begin{aligned}[t]
   0 &= -\frac{\sqrt{\gamma_2} \varphi_1 \varphi_4}{\sqrt{\gamma_1}}+\frac{2 \gamma_2 \varphi_4}{\gamma_1
      \sqrt{\gamma_3}}-\frac{\gamma_1 \varphi_6}{\gamma_2 \sqrt{\gamma_3}}\\
    0 &= -\frac{\gamma_2 \varphi_2}{\gamma_1 \sqrt{\gamma_3}}-\frac{\sqrt{\gamma_1} \varphi_1 \varphi_5}{\sqrt{\gamma_2}}+\frac{2 \gamma_1 \varphi_5}{\gamma_2 \sqrt{\gamma_3}}
  \end{aligned}
\end{equation}
The first two equations on the left allow us to solve for
$\varphi_{4,5}$:
\begin{equation}
  \varphi_4 = - \frac{\sqrt{\gamma_1\gamma_3}}{3\sqrt{\gamma_2}}\varphi_1\varphi_6\qquad\text{and}\qquad
  \varphi_5 = - \frac{\sqrt{\gamma_2\gamma_3}}{3\sqrt{\gamma_1}}\varphi_1\varphi_2~.
\end{equation}
Inserting this in the remaining equations we are left with
\begin{equation}
  \begin{aligned}[m]
    0 &= \varphi_3 \left(\varphi_1 - \frac{2 \sqrt{\gamma_2\gamma_3}}{\sqrt{\gamma_1}}\right)\\
    0 &= \varphi_2 \left(\varphi_1^2 - \frac{2 \sqrt{\gamma_1} \varphi_1}{\sqrt{\gamma_2\gamma_3}} - \frac{3 \gamma_2}{\gamma_1 \gamma_3}\right)\\
    0 &= \varphi_6 \left(\varphi_1^2 - \frac{2 \sqrt{\gamma_2} \varphi_1}{\sqrt{\gamma_1\gamma_3}} - \frac{3 \gamma_1}{\gamma_2 \gamma_3}\right)~,
  \end{aligned}
\end{equation}
which leads to eight branches depending on whether $\varphi_{2,3,6}$
do or do not vanish.  For each such branch we have analysed the
Einstein equations \eqref{eq:einstein}, given by
\begin{equation}
  \begin{aligned}[m]
   0 &= \varphi_1 \varphi_2 \varphi_6 \left(3 (\gamma_1 + \gamma_2) - 2 \sqrt{\gamma_1\gamma_2\gamma_3} \varphi_1\right)\\
   0 &= -3 \gamma_2{}^3 \gamma_3 \gamma_1{}^3 \varphi_1{}^2-6 \gamma_1{}^3 \varphi_6{}^2-\gamma_2{}^2 \gamma_3 \gamma_1 \varphi_1{}^2 \varphi_2{}^2-9 \gamma_2 \gamma_3 \gamma_1 \varphi_3{}^2+3 \gamma_2{}^3 \varphi_2{}^2-9 \gamma_2{}^2 \gamma_1{}^4\\
   & \qquad {} + 54 \gamma_2{}^2 \gamma_3 \gamma_1{}^3+9 \gamma_2{}^4 \gamma_1{}^2-9 \gamma_2{}^2 \gamma_3{}^2 \gamma_1{}^2\\
   0 &= -3 \gamma_2{}^3 \gamma_3 \gamma_1{}^3 \varphi_1{}^2+3 \gamma_1{}^3 \varphi_6{}^2-\gamma_2 \gamma_3 \gamma_1{}^2 \varphi_1{}^2 \varphi_6{}^2-9 \gamma_2 \gamma_3 \gamma_1 \varphi_3{}^2-6 \gamma_2{}^3 \varphi_2{}^2+9 \gamma_2{}^2 \gamma_1{}^4\\
   & \qquad { }- 9 \gamma_2{}^4 \gamma_1{}^2-9 \gamma_2{}^2 \gamma_3{}^2 \gamma_1{}^2+54 \gamma_2{}^3 \gamma_3 \gamma_1{}^2\\
   0 &= -3 \gamma_2{}^3 \gamma_3 \gamma_1{}^3 \varphi_1{}^2-6 \gamma_1{}^3 \varphi_6{}^2-2 \gamma_2 \gamma_3 \gamma_1{}^2 \varphi_1{}^2 \varphi_6{}^2-2 \gamma_2{}^2 \gamma_3 \gamma_1 \varphi_1{}^2 \varphi_2{}^2+9 \gamma_2 \gamma_3 \gamma_1 \varphi_3{}^2\\
   & \qquad {} - 6 \gamma_2{}^3 \varphi_2{}^2-27 \gamma_2{}^2 \gamma_1{}^4+54 \gamma_2{}^3 \gamma_1{}^3-27 \gamma_2{}^4 \gamma_1{}^2+27 \gamma_2{}^2 \gamma_3{}^2 \gamma_1{}^2\\
   0 &= -6 \gamma_2{}^3 \gamma_3 \gamma_1{}^3 \varphi_1{}^2-3 \gamma_1{}^3 \varphi_6{}^2-\gamma_2 \gamma_3 \gamma_1{}^2 \varphi_1{}^2 \varphi_6{}^2-\gamma_2{}^2 \gamma_3 \gamma_1 \varphi_1{}^2 \varphi_2{}^2-9 \gamma_2 \gamma_3 \gamma_1 \varphi_3{}^2\\
   &\qquad {} - 3 \gamma_2{}^3 \varphi_2{}^2+54 \gamma_2{}^3 \gamma_3 \gamma_1{}^3~.
  \end{aligned}
\end{equation}

The end result is that beyond a known Freund--Rubin background $\AdS_4
\times V_2(\RR^5)$ and the Pope--Warner background, we obtain the
following backgrounds, where we have reintroduced the scale:
\begin{enumerate}
\item With $\sigma$ a sign, we have
  \begin{equation}\label{eq:N=5.1}
    \begin{aligned}[m]
      \lambda^{-2}g &= -(\theta^0)^2 + (\theta^1)^2 + (\theta^2)^2 + (\theta^3)^2 + \tfrac{39 - \sigma\sqrt{201}}{22} \sum_{i=4}^6 (\theta^i)^2 + \tfrac{27 + \sigma\sqrt{201}}{22} \sum_{i=7}^9 (\theta^i)^2  + (\theta^\ten)^2~,\\
      \lambda^{-3}F &= \sqrt{\tfrac{19 + \sigma \sqrt{201}}{5}} \theta^{0123} \pm \tfrac{2\sqrt{15\left(2155 - \sigma 31 \sqrt{201}\right)}}{121} \left(\theta^{4578} + \theta^{4679} + \theta^{5689} \right)~;
    \end{aligned}
  \end{equation}
  
\item A background we can only approximate numerically: $\varphi_2=
  \varphi_3 = \varphi_5 = 0$, together with
  \begin{equation}\label{eq:N=5.2}
    \begin{aligned}[m]
      \varphi_1 &= -1.3538010207764224\\
      \varphi_4 &= \pm 4.562584323795499\\
      \varphi_6 &= \pm 2.51893274180765
    \end{aligned}
    \qquad\qquad
    \begin{aligned}[m]
      \gamma_1 &= 2.0506059513936354\\
      \gamma_2 &= 0.5588242551644832\\
      \gamma_3 &= 4.390505589439397~,
    \end{aligned}
  \end{equation}
  and another background obtained from this by $\varphi_6
  \leftrightarrow \varphi_2$, $\varphi_4 \leftrightarrow \varphi_5$
  and $\gamma_1 \leftrightarrow \gamma_2$.

\item A circle's worth of backgrounds with metric
  \begin{equation}\label{eq:N=5.3g}
    \lambda^{-2} g = -(\theta^0)^2 + (\theta^1)^2 + (\theta^2)^2 +
    (\theta^3)^2 + \tfrac{3+\sqrt{6}}4 \sum_{i=4}^9 (\theta^i)^2 +
    \tfrac32 (\theta^\ten)^2~, 
  \end{equation}
  and 4-form
  \begin{multline}\label{eq:N=5.3F}
    \lambda^{-3} F = \sqrt{6} \theta^{0123} \pm \tfrac38
    \sqrt{\tfrac{29 + 12 \sqrt{6}}2} \left(\theta^{4578} +
      \theta^{4679} + \theta^{5689} \right)\\
    + \tfrac{3\sqrt{3 \left(3+\sqrt{6}\right)}}{8} \left( \cos\alpha
      \left( \theta^{456\ten} - \theta^{489\ten} + \theta^{579\ten} -
        \theta^{678\ten} \right) + \sin\alpha \left(\theta^{789\ten} -
        \theta^{459\ten} + \theta^{468\ten} - \theta^{567\ten}
      \right)\right)~,
  \end{multline}
  with $\alpha$ an angle.  We can write $F$ in a more transparent way
  as follows:
  \begin{multline}\label{eq:N=5.3cali}
    \lambda^{-3} F = \sqrt{6} \theta^{0123} \mp \tfrac3{16}
    \sqrt{\tfrac{29 + 12 \sqrt{6}}2} (\theta^{47} + \theta^{58} +
    \theta^{69})^{\wedge 2}\\
    + \tfrac{3\sqrt{3 \left(3+\sqrt{6}\right)}}{8} \Re
    \left(e^{i\alpha}(\theta^4+i\theta^7)\wedge
      (\theta^5+i\theta^8)\wedge (\theta^6+i\theta^9)\right)
    \wedge\theta^\ten~,
  \end{multline}
where we recognise the transverse Kähler calibration $\theta^{47} +
\theta^{58} + \theta^{69}$ and the transverse special lagrangian
calibration $\Re \left(e^{i\alpha}(\theta^4+i\theta^7)\wedge
  (\theta^5+i\theta^8)\wedge (\theta^6+i\theta^9)\right)$.

This background is obtained from the Freund--Rubin background $\AdS_4
\times V_2(\RR^5)$ \cite[Appendix~C]{CastellaniRomansWarner} by the
Englert procedure of constructing a 4-form out of the Killing spinors
\cite{Englert:1982vs,BEdWN}.  The angle $\alpha$ parametrises the
choice of the two Killing spinors out of which we make up the part of
$F$ with no legs along $\AdS_4$.  The background $\AdS_4 \times
V_2(\RR^5)$ has $\SU(3)$ holonomy.  This means that $V_2(\RR^5)$
admits a two-dimensional space of real Killing spinors.  Depending on
which two spinors we pick, we can construct a Kähler calibration and
one of the circle's worth of special lagrangian calibrations on the
codimension-one subbundle of the tangent bundle whose fibre at the
origin is spanned by $X_4,\dots,X_9$.  In addition, the tangent
representation of $\SU(3)$ leaves invariant one direction, which is
spanned by $X_\ten$ at the origin.  We now recognise the second term
of $F$ in the expression \eqref{eq:N=5.3cali} as the square of the
transverse Kähler calibration (itself a calibration) and the third
term as one of the transverse special lagrangian calibrations wedged
with the invariant form $\theta^\ten$.  This solution is also
mentioned in \cite[Appendix~C]{CastellaniRomansWarner} but not given
explicitly.  As usual in the Englert procedure, supersymmetry is
broken.  The form of the solution suggests that we should be able to
obtain it as well via the Pope--Warner procedure in
\cite{Pope:1984jj}, but we have not tried to do this.
\end{enumerate}

The first two backgrounds seem to be new.

\subsubsection{$\fso(3)_+$ isotropy}
\label{sec:fso3_+-isotropy}

In the final case, the isotropy algebra is $\fso(3)_+$, whose
complement in $\fso(5)$ decomposes into $3V_0 \oplus V_3$, with $3V_0$
corresponding to the $\fso(3)_-$ subalgebra of $\fso(5)$ and $V_3$
corresponding to the four-dimensional representation spanned by the
$L_{a5}$ with $a=1,2,3,4$.  This would seem to require 8 parameters to
describe the metric, but in fact we can use the action of the
normaliser $\SO(3)_-$ in order to diagonalise the inner product.
Indeed, defining the following ordered basis for $\fm$:
\begin{equation}
  (X_0,X_1,\dots,X_\ten) = (J_{45}, J_{15}, J_{25}, J_{35}, L^-_1, L^-_2, L^-_3, L_{15}, L_{25}, L_{35}, L_{45})~,
\end{equation}
and the canonical dual bases for $\fm^*$, we can write the invariant metric as
\begin{equation}
  g = \gamma_0 \left(-(\theta^0)^2 + (\theta^1)^2 + (\theta^2)^2 + (\theta^3)^2 \right) + \gamma_1 (\theta^4)^2 + \gamma_2 (\theta^5)^2 + \gamma_3 (\theta^6)^2 + \gamma_4 \sum_{i=7}^\ten (\theta^i)^2~,
\end{equation}
with $\gamma_{0,1,2,3,4}>0$.  Using the homothety invariance, we can
set $\gamma_0=1$ without loss of generality.  The space of invariant
closed 4-forms is 8-dimensional, with the most general such $F$ given
by
\begin{multline}
  \label{eq:Fn=5so3+}
  F = \varphi_1 \theta^{0123} + \varphi_2
  \left(\theta^{4578}-\theta^{459\ten}\right) + \varphi_3
  \left(\theta^{4579} + \theta^{458\ten} +  \theta^{4678} -
    \theta^{469\ten}\right)\\
  + \varphi_4 \left(\theta^{457\ten} - \theta^{4589} - \theta^{5678} + \theta^{569\ten}\right)
  + \varphi_5 \left(\theta^{4679} + \theta^{468\ten} \right)\\ + \varphi_6 \left(\theta^{467\ten} - \theta^{4689} -  \theta^{5679} - \theta^{568\ten}\right) + \varphi_7 \left(\theta^{567\ten} - \theta^{5689}\right) + \varphi_8 \theta^{789\ten}~.
\end{multline}

The Maxwell equations \eqref{eq:maxwell} are the following:
\begin{equation}\label{eq:MWEqn=5so3+}  % NEW LABEL (eq. (110))
  \begin{aligned}[m]
    0 &= \varphi_4 \left(\sqrt{\gamma_1 \gamma_2 \gamma_3} \varphi_1+\gamma_1+\gamma_3\right)\\
    0 &= \varphi_6 \left(\sqrt{\gamma_1 \gamma_2 \gamma_3} \varphi_1+\gamma_1+\gamma_2\right)\\
    0 &= \varphi_3 \left(\sqrt{\gamma_1 \gamma_2 \gamma_3} \varphi_1+\gamma_2+\gamma_3\right)\\
    0 &= \gamma_3 \varphi_2 + \gamma_2 \varphi_5 + \gamma_1 \varphi_7 + \sqrt{\gamma_1\gamma_2 \gamma_3} \varphi_1 \varphi_7 + \gamma_1 \gamma_2 \gamma_3 \gamma_4^{-2}\varphi_8\\
    0 &= \gamma_3 \varphi_2 - \gamma_2 \varphi_5 - \gamma_1 \varphi_7 - \sqrt{\gamma_1\gamma_2\gamma_3} \varphi_1 \varphi_5 +  \gamma_1 \gamma_2\gamma_3 \gamma_4^{-2}\varphi_8\\
    0 &= \gamma_3 \varphi_2 - \gamma_2 \varphi_5 + \gamma_1 \varphi_7 + \sqrt{\gamma_1\gamma_2\gamma_3} \varphi_1 \varphi_2 - \gamma_1\gamma_2\gamma_3 \gamma_4^{-2}\varphi_8\\
    0 &= \gamma_3 \varphi_2+\gamma_2 \varphi_5-\gamma_1 \varphi_7-\tfrac12 \sqrt{\gamma_1 \gamma_2 \gamma_3} \varphi_1\varphi_8
  \end{aligned}
\end{equation}

The Einstein equations are
\begin{align*}
  0 &= (\varphi_2 +\varphi_5) \varphi_3 + \varphi_4 \varphi_6\\
  0 &= \varphi_3 \varphi_4+(\varphi_5 -\varphi_7) \varphi_6\\
   0 &= (\varphi_2 -\varphi_7) \varphi_4 + \varphi_3 \varphi_6\\
  9 &= \varphi_1^2 + \frac{\varphi_8^2}{2 \gamma_4^4} + \frac{\gamma_3\varphi_2^2 + \gamma_2 \varphi_5^2 + \gamma_1 \varphi_7^2 + (\gamma_2+\gamma_3)\varphi_3^2 + (\gamma_1+\gamma_3)\varphi_4^2  + (\gamma_1+\gamma_2)\varphi_6^2}{\gamma_1 \gamma_2 \gamma_3 \gamma_4^2} \\
 3 \left(\frac2{\gamma_4} - 1\right) &= \frac{\varphi_8^2}{2\gamma_4^4} + \frac{(\gamma_1+\gamma_2+\gamma_3)}{\gamma_4^2}\\
 0 &= \frac{\gamma_1^2}{2 \gamma_2 \gamma_3}+\frac{\gamma_1^2}{\gamma_4^2}-\frac{1}{6} \varphi_1^2 \gamma_1+\frac{\varphi_4^2 \gamma_1}{3 \gamma_2 \gamma_3 \gamma_4^2}+\frac{\varphi_6^2 \gamma_1}{3 \gamma_2 \gamma_3 \gamma_4^2}+\frac{\varphi_7^2 \gamma_1}{3 \gamma_2 \gamma_3 \gamma_4^2}+\frac{\varphi_8^2 \gamma_1}{6 \gamma_4^4}-\frac{2 \varphi_2^2}{3 \gamma_2 \gamma_4^2}\\
 & \qquad -\frac{2\varphi_3^2}{3 \gamma_2 \gamma_4^2}-\frac{2 \varphi_3^2}{3 \gamma_3 \gamma_4^2}-\frac{2 \varphi_4^2}{3 \gamma_2 \gamma_4^2}-\frac{2 \varphi_5^2}{3 \gamma_3 \gamma_4^2}-\frac{2 \varphi_6^2}{3 \gamma_3 \gamma_4^2}-\frac{\gamma_3}{2 \gamma_2}-\frac{\gamma_2}{2 \gamma_3}+1\\
 0 &= \frac{\gamma_2^2}{2 \gamma_1 \gamma_3}+\frac{\gamma_2^2}{\gamma_4^2}-\frac{1}{6} \varphi_1^2 \gamma_2+\frac{\varphi_3^2 \gamma_2}{3 \gamma_1 \gamma_3 \gamma_4^2}+\frac{\varphi_5^2 \gamma_2}{3 \gamma_1 \gamma_3 \gamma_4^2}+\frac{\varphi_6^2 \gamma_2}{3 \gamma_1 \gamma_3 \gamma_4^2}+\frac{\varphi_8^2 \gamma_2}{6 \gamma_4^4}-\frac{2 \varphi_2^2}{3 \gamma_1 \gamma_4^2}\\
 & \qquad -\frac{2 \varphi_3^2}{3 \gamma_1 \gamma_4^2}-\frac{2 \varphi_4^2}{3 \gamma_1 \gamma_4^2}-\frac{2 \varphi_4^2}{3 \gamma_3 \gamma_4^2}-\frac{2 \varphi_6^2}{3 \gamma_3 \gamma_4^2}-\frac{2 \varphi_7^2}{3 \gamma_3 \gamma_4^2}-\frac{\gamma_3}{2
 \gamma_1}-\frac{\gamma_1}{2 \gamma_3}+1\\
 0 &= \frac{\gamma_3^2}{2 \gamma_1 \gamma_2}+\frac{\gamma_3^2}{\gamma_4^2}-\frac{1}{6} \varphi_1^2
 \gamma_3+\frac{\varphi_2^2 \gamma_3}{3 \gamma_1 \gamma_2 \gamma_4^2}+\frac{\varphi_3^2 \gamma_3}{3 \gamma_1 \gamma_2 \gamma_4^2}+\frac{\varphi_4^2 \gamma_3}{3 \gamma_1 \gamma_2 \gamma_4^2}+\frac{\varphi_8^2 \gamma_3}{6 \gamma_4^4}-\frac{2 \varphi_3^2}{3 \gamma_1 \gamma_4^2}\\
 & \qquad -\frac{2 \varphi_4^2}{3 \gamma_2 \gamma_4^2}-\frac{2 \varphi_5^2}{3 \gamma_1 \gamma_4^2}-\frac{2 \varphi_6^2}{3 \gamma_1 \gamma_4^2}-\frac{2 \varphi_6^2}{3 \gamma_2 \gamma_4^2}-\frac{2 \varphi_7^2}{3 \gamma_2 \gamma_4^2}-\frac{\gamma_2}{2 \gamma_1}-\frac{\gamma_1}{2 \gamma_2}+1
\end{align*}

First of all, let us remark that the equations have a symmetry of
order 3 which fixes $\gamma_4, \varphi_1, \varphi_8$ and transforms
the remaining parameters as
\begin{equation}
  \label{eq:order3symm}
  (\gamma_1,\gamma_2,\gamma_3,\varphi_2,\varphi_3,\varphi_4,\varphi_5,\varphi_6,\varphi_7) \mapsto
  (\gamma_2,\gamma_3,\gamma_1,-\varphi_7,\varphi_4,\varphi_6,\varphi_2,\varphi_3,-\varphi_5)~.
\end{equation}
The Einstein equations allows us to solve for $\varphi_1$ and
$\varphi_8$ and one sees quickly that $\gamma_4 < 2$.  The Maxwell
equations are then linear equations on the remaining
$\varphi_{2,3,4,5,6,7}$.  The generic solution sets them all to zero,
but then this sets $\varphi_8=0$ as well.  There are two solutions,
both of which have $\varphi_1 = \pm 3$ and $\gamma_1 = \gamma_2 =
\gamma_3$.  In one solution we have $\gamma_1 = \gamma_2 = \gamma_3 =
\gamma_4 = 1$, which corresponds to the original Freund--Rubin
background, whereas in the other case we have $\gamma_1 = \gamma_2 =
\gamma_3 = \frac{25}9$ and $\gamma_4 = \frac59$, which is the squashed
7-sphere solution of \cite{Awada:1982pk}.  Both of these backgrounds
have $\gamma_1 = \gamma_2 = \gamma_3$, but there are others in this
class.  In fact, one finds a 2-sphere's worth of Englert solutions as
well as the squashed Englert solution of
\cite{Englert:1983qe,Duff:1983nu,Bais:1983wc}.  In our conventions,
the squashed Englert solution looks like
\begin{equation}\label{eq:SquashedEnglertSoln=5so3+}  % NEW LABEL (eq. (112))
  \begin{aligned}[m]
    \lambda^{-2}g &= -(\theta^0)^2 + (\theta^1)^2 + (\theta^2)^2 + (\theta^3)^2 + \tfrac3{10} \sum_{i=4}^6 (\theta^i)^2 + \tfrac32 \sum_{i=7}^\ten (\theta^i)^2\\
    \lambda^{-3}F &= 2 \sqrt{\tfrac65} \theta^{0123} \pm 3 \left(\tfrac3{10}\right)^{3/2} \left( \theta^{4578}-\theta^{459\ten} + \theta^{4679} + \theta^{468\ten} - \theta^{567\ten} + \theta^{5689} + 5 \theta^{789\ten} \right)~,
  \end{aligned}
\end{equation}
where we recognise the second term in $F$ as the $G_2$-invariant
coassociative calibration built out of one of the Killing spinors of
the Freund--Rubin background.

It remains to look at the cases where the $\gamma_1,\gamma_2,\gamma_3$
are not all equal.  If all of $\varphi_{3,4,6}$ are different from
zero, then the Maxwell equations show that
$\gamma_1=\gamma_2=\gamma_3$, hence we must have at least one of
$\varphi_{2,3,6}$ equal to zero.  Due to the order-3 symmetry
\eqref{eq:order3symm} we can take $\varphi_3=0$ without loss of
generality, but then the first of the Einstein equations say that
$\varphi_4\varphi_6=0$ and whence we must have at least two of
$\varphi_{2,3,6}$ equal to zero.  This gives two cases to consider.
In the first case, $\varphi_3=\varphi_4=0$ and $\varphi_6\neq 0$,
whereas in the second case $\varphi_3=\varphi_4=\varphi_6=0$.

Let us consider the first case, with $\varphi_3=\varphi_4=0$ but
$\varphi_6\neq 0$.  This last condition has two immediate
consequences: the second Maxwell equation allows us to solve for
$\varphi_1$:
\begin{equation}
  \varphi_1 = -\frac{\gamma_1 + \gamma_2}{\sqrt{\gamma_1\gamma_2\gamma_3}}~;
\end{equation}
whereas the second Einstein equation forces $\varphi_7=\varphi_5$.
The fourth Maxwell equation allows us to solve for $\varphi_2$ in
terms of $\varphi_8$:
\begin{equation}
  \varphi_2 = - \frac{\gamma_1\gamma_2}{\gamma_4^2} \varphi_8~,
\end{equation}
and the fifth Einstein equation allows us to solve for $\varphi_8$ up
to a sign:
\begin{equation}
  \varphi_8 = \pm \sqrt{12 \gamma_4^3 - 6 \gamma_4^4 - 2 (\gamma_1 + \gamma_2 + \gamma_3) \gamma_4^2}~,
\end{equation}
whence
\begin{equation}
  \varphi_2 = \mp \gamma_1\gamma_2 \sqrt{12 \gamma_4^{-1} - 6 - 2 (\gamma_1 + \gamma_2 + \gamma_3) \gamma_4^{-2}}~.
\end{equation}
We now have to distinguish two cases, according to whether or not
$\gamma_1 = \gamma_2$.  If $\gamma_1 \neq \gamma_2$, we find that
there are no admissible solutions to the equations, whereas if
$\gamma_1 = \gamma_2$, we get precisely one admissible background:
\begin{equation}\label{eq:PopeWarnerSoln=5so3+}  % NEW LABEL (eq. (117))
  \begin{aligned}[m]
     \lambda^{-2}g &= -(\theta^0)^2 + (\theta^1)^2 + (\theta^2)^2 + (\theta^3)^2 + \tfrac23 \left( (\theta^4)^2 + (\theta^5)^2 + 2 (\theta^6)^2 \right) + \tfrac23 \sum_{i=7}^\ten (\theta^i)^2\\
     \lambda^{-3}F &= -\sqrt3 \theta^{0123} + 2 \left(\tfrac23\right)^{3/2} \cos \alpha \left(\theta^{4679} + \theta^{468\ten} + \theta^{567\ten} - \theta^{5689}\right)\\
     & \qquad + 2 \left(\tfrac23\right)^{3/2} \sin \alpha \left(\theta^{467\ten} - \theta^{4689} -  \theta^{5679} - \theta^{568\ten}\right)~,
  \end{aligned}
\end{equation}
which is seen to be the Pope--Warner solution.

Finally, it remains to analyse the case where $\varphi_3 = \varphi_4 =
\varphi_6=0$, but yet $\gamma_{1,2,3}$ are not all the same.  It is
not difficult to solve for $\varphi^2_i$ using the Einstein equations,
since these equations are linear in these variables.  One finds
\begin{equation}\label{eq:EinstEqn=5so3+}  % NEW LABEL (eq. (118))
  \begin{aligned}[m]
    \varphi_1^2 &= \frac{2(\gamma_1 + \gamma_2 + \gamma_3 + 15 \gamma_4^2 - 12 \gamma_4)}{\gamma_4^2} + \frac{2(\gamma_1^2 + \gamma_2^2 + \gamma_3^2) - (\gamma_1+\gamma_2+\gamma_3)^2}{\gamma_1 \gamma_2 \gamma_3}\\
    \varphi_2^2 &= 6 \gamma_1 \gamma_2 \gamma_4 (1-\gamma_4) +\gamma_1 \gamma_4^2+\gamma_2 \gamma_4^2-\gamma_3 \gamma_4^2 -\gamma_1 \gamma_2 \gamma_3\\
    \varphi_5^2 &= 6 \gamma_1 \gamma_3 \gamma_4 (1-\gamma_4) +\gamma_1 \gamma_4^2-\gamma_2 \gamma_4^2+\gamma_3 \gamma_4^2-\gamma_1 \gamma_2 \gamma_3\\
    \varphi_7^2 &= 6 \gamma_2 \gamma_3 \gamma_4 (1-\gamma_4) - \gamma_1 \gamma_4^2+\gamma_2 \gamma_4^2+\gamma_3 \gamma_4^2-\gamma_1 \gamma_2 \gamma_3\\
    \varphi_8^2 &= 2 \left(3 \gamma_4 (2- \gamma_4) - (\gamma_1 + \gamma_2 + \gamma_3)\right) \gamma_4^2
  \end{aligned}
\end{equation}
It follows easily from the last of these equations that $\gamma_4 < 2$
and that $\gamma_1 + \gamma_2 + \gamma_3 \leq 3$, with the bound
attained for $\gamma_4=1$ and $\varphi_8=0$.  Since now all
$\gamma_{1,2,3}$ are the same, we may exploit the order-3 symmetry
\eqref{eq:order3symm} of the equations in order to make a choice that
$\gamma_1 \neq \gamma_3$ and $\gamma_2 \neq \gamma_3$.  This leaves
open the possibility that $\gamma_1$ and $\gamma_2$ may be equal.

Alas, we have been unable to solve this system at the present
time.\footnote{Alexander~S.~Haupt has managed to complete this
  analysis and we include his results in
  Appendix~\ref{sec:addendum_fso3_+-isotropy}.  There are no new
  backgrounds.}  We know that there is a Freund--Rubin solution,
discovered in \cite{CastellaniRomansWarner}, and known to possess
$N=1$ supersymmetry and therefore also an associated Englert-like
background with the four-form $F$ constructed out of the Killing
spinor.  We expect that there are others, since preliminary numerical
investigations suggest that there should be a positive-dimensional
moduli space of solutions of these equations.

In summary, we find a number of novel $n=5$ AdS backgrounds whose
geometry is studied in more detail in the following section.

\section{The geometry of some $n=5$ backgrounds}
\label{sec:geometry-n=5}

In this section we discuss the geometry of the non-numerical $n=5$
backgrounds we discovered in Sections~\ref{sec:possible-new-n=5} and
\ref{sec:n=5}.

\subsection{$n=5$ anti de Sitter backgrounds}
\label{sec:ads-backgrounds}

In Section~\ref{sec:n=5}, we exhibited a number of new (to us)
homogeneous $\AdS_4$ backgrounds with isometry Lie algebra
$\fso(3,2)\oplus \fso(5)$.  In all cases the geometry is $\AdS_4
\times P^7$, where $P$ is a riemannian manifold homogeneous under the
action of $\SO(5)$.  The backgrounds in Section
\ref{sec:fso3-isotropy}, whose geometry is given by equations
\eqref{eq:N=5.1}, \eqref{eq:N=5.2} and \eqref{eq:N=5.3g}, all have
metric of the general form given by equation \eqref{eq:gn=5so3} with
$\gamma_{0,1,2,3}>0$.  Using the homothety invariance of the
supergravity field equations, we can set $\gamma_0=1$ without much
loss of generality.  This fixes the scale of the $\AdS_4$ factor and
the remaining metric freedom resides in the riemannian factor $P^7$
with a metric depending on $\gamma_{1,2,3}$.  This metric is Einstein
when $\gamma_1 = \gamma_2 = \frac23 \gamma_3$.  In that case, the
Einstein condition is $R_{ab} = \lambda g_{ab}$ with $\lambda =
\frac{9}{4\gamma_2}$.  For those values of the parameters the
supergravity field equations are not satisfied, which is to be
expected, since the four-form \eqref{eq:N=5.3F} is not of
Freund--Rubin type.

Let us now discuss the isometries of this family of geometries.  Using
the method described in Appendix~\ref{sec:isom-homog-space} it is
possible to show that the isometry Lie algebra of the general metric
\eqref{eq:gn=5so3} with $\gamma_{0,1,2,3}>0$ is generically indeed
$\fso(3,2) \oplus \fso(5)$, but if $\gamma_1 = \gamma_2$ it is
enhanced to $\fso(3,2) \oplus \fso(5)\oplus \fso(2)$.  This is the
case for the background described by equations \eqref{eq:N=5.3g} and
\eqref{eq:N=5.3F} (or \eqref{eq:N=5.3cali}).  The extra Killing vector
$\chi$, spans the centre of the isometry Lie algebra and, like any
Killing vector, is uniquely defined by its value and that of its
derivative at the origin.  In the notation of
Section~\ref{sec:fso3-isotropy},
\begin{equation}
\label{eq:centralKilling}
  \chi\bigr|_o = \frac{2\gamma_1}{\gamma_3} L_{45}~, \qquad \nabla_{L_{i4}} \chi \bigr|_o = L_{i5} \qquad\text{and}\qquad \nabla_{L_{i5}} \chi \bigr|_o = - L_{i4}~,
\end{equation}
where $i=1,2,3$ and all other derivatives vanish at the origin.  From
now on we will take $\gamma_1 = \gamma_2$.

Let us characterise Killing vector fields by pairs $K=(\xi,\Phi)$
where $\xi \in \fm$ is the value of the Killing vector field at the
origin and $\Phi = -\nabla\xi \in \fso(\fm)$ is (minus) its derivative
at the origin.  Let $(\be_1,\dots,\be_7)=(L_{14}, L_{24}, L_{34},
L_{15}, L_{25}, L_{35}, L_{45})$ be an ordered basis for $\fm$.  A
basis for the isometry Lie algebra of the metric on $P$ with the
choice $\gamma_1=\gamma_2$ is given by $K_a=(\xi_a,\Phi_a)$ for
$a=1,\dots,11$, where
\begin{equation}
  \xi_1=\xi_2=\xi_3=\xi_4 = 0 \qquad\text{and}\qquad (\xi_5,\dots,\xi_{11}) = (\be_1,\dots,\be_7)~,
\end{equation}
and
\begin{gather*}
    \Phi_1 = \begin{pmatrix}
  0 & 1 & 0 & 0 & 0 & 0 & 0 \\
  -1 & 0 & 0 & 0 & 0 & 0 & 0 \\
  0 & 0 & 0 & 0 & 0 & 0 & 0 \\
  0 & 0 & 0 & 0 & 1 & 0 & 0 \\
  0 & 0 & 0 & -1 & 0 & 0 & 0 \\
  0 & 0 & 0 & 0 & 0 & 0 & 0 \\
  0 & 0 & 0 & 0 & 0 & 0 & 0
\end{pmatrix}\qquad
\Phi_2 = \begin{pmatrix}
    0 & 0 & 1 & 0 & 0 & 0 & 0 \\
    0 & 0 & 0 & 0 & 0 & 0 & 0 \\
    -1 & 0 & 0 & 0 & 0 & 0 & 0 \\
    0 & 0 & 0 & 0 & 0 & 1 & 0 \\
    0 & 0 & 0 & 0 & 0 & 0 & 0 \\
    0 & 0 & 0 & -1 & 0 & 0 & 0 \\
    0 & 0 & 0 & 0 & 0 & 0 & 0
\end{pmatrix}\\
\Phi_3 = \begin{pmatrix}
  0 & 0 & 0 & 1 & 0 & 0 & 0 \\
  0 & 0 & 0 & 0 & 1 & 0 & 0 \\
  0 & 0 & 0 & 0 & 0 & 1 & 0 \\
  -1 & 0 & 0 & 0 & 0 & 0 & 0 \\
  0 & -1 & 0 & 0 & 0 & 0 & 0 \\
  0 & 0 & -1 & 0 & 0 & 0 & 0 \\
  0 & 0 & 0 & 0 & 0 & 0 & 0
\end{pmatrix}\qquad
\Phi_4 = \begin{pmatrix}
 0 & 0 & 0 & 0 & 0 & 0 & 0 \\
 0 & 0 & 1 & 0 & 0 & 0 & 0 \\
 0 & -1 & 0 & 0 & 0 & 0 & 0 \\
 0 & 0 & 0 & 0 & 0 & 0 & 0 \\
 0 & 0 & 0 & 0 & 0 & 1 & 0 \\
 0 & 0 & 0 & 0 & -1 & 0 & 0 \\
 0 & 0 & 0 & 0 & 0 & 0 & 0
\end{pmatrix}\\
\Phi_5 = \begin{pmatrix}
 0 & 0 & 0 & 0 & 0 & 0 & 0 \\
 0 & 0 & 0 & 0 & 0 & 0 & 0 \\
 0 & 0 & 0 & 0 & 0 & 0 & 0 \\
 0 & 0 & 0 & 0 & 0 & 0 & -\frac{\gamma _3}{2 \gamma _1} \\
 0 & 0 & 0 & 0 & 0 & 0 & 0 \\
 0 & 0 & 0 & 0 & 0 & 0 & 0 \\
 0 & 0 & 0 & \frac{1}{2} & 0 & 0 & 0
\end{pmatrix}\qquad
\Phi_6 = \begin{pmatrix}
 0 & 0 & 0 & 0 & 0 & 0 & 0 \\
 0 & 0 & 0 & 0 & 0 & 0 & 0 \\
 0 & 0 & 0 & 0 & 0 & 0 & 0 \\
 0 & 0 & 0 & 0 & 0 & 0 & 0 \\
 0 & 0 & 0 & 0 & 0 & 0 & -\frac{\gamma _3}{2 \gamma _1} \\
 0 & 0 & 0 & 0 & 0 & 0 & 0 \\
 0 & 0 & 0 & 0 & \frac{1}{2} & 0 & 0
\end{pmatrix}\\
\Phi_7 = \begin{pmatrix}
 0 & 0 & 0 & 0 & 0 & 0 & 0 \\
 0 & 0 & 0 & 0 & 0 & 0 & 0 \\
 0 & 0 & 0 & 0 & 0 & 0 & 0 \\
 0 & 0 & 0 & 0 & 0 & 0 & 0 \\
 0 & 0 & 0 & 0 & 0 & 0 & 0 \\
 0 & 0 & 0 & 0 & 0 & 0 & -\frac{\gamma _3}{2 \gamma _1} \\
 0 & 0 & 0 & 0 & 0 & \frac{1}{2} & 0
\end{pmatrix}\qquad
\Phi_8 = \begin{pmatrix}
 0 & 0 & 0 & 0 & 0 & 0 & \frac{\gamma _3}{2 \gamma _1} \\
 0 & 0 & 0 & 0 & 0 & 0 & 0 \\
 0 & 0 & 0 & 0 & 0 & 0 & 0 \\
 0 & 0 & 0 & 0 & 0 & 0 & 0 \\
 0 & 0 & 0 & 0 & 0 & 0 & 0 \\
 0 & 0 & 0 & 0 & 0 & 0 & 0 \\
 -\frac{1}{2} & 0 & 0 & 0 & 0 & 0 & 0
\end{pmatrix}\\
\Phi_9 = \begin{pmatrix}
 0 & 0 & 0 & 0 & 0 & 0 & 0 \\
 0 & 0 & 0 & 0 & 0 & 0 & \frac{\gamma _3}{2 \gamma _1} \\
 0 & 0 & 0 & 0 & 0 & 0 & 0 \\
 0 & 0 & 0 & 0 & 0 & 0 & 0 \\
 0 & 0 & 0 & 0 & 0 & 0 & 0 \\
 0 & 0 & 0 & 0 & 0 & 0 & 0 \\
 0 & -\frac{1}{2} & 0 & 0 & 0 & 0 & 0
\end{pmatrix}\qquad
\Phi_{10} = \begin{pmatrix}
 0 & 0 & 0 & 0 & 0 & 0 & 0 \\
 0 & 0 & 0 & 0 & 0 & 0 & 0 \\
 0 & 0 & 0 & 0 & 0 & 0 & \frac{\gamma _3}{2 \gamma _1} \\
 0 & 0 & 0 & 0 & 0 & 0 & 0 \\
 0 & 0 & 0 & 0 & 0 & 0 & 0 \\
 0 & 0 & 0 & 0 & 0 & 0 & 0 \\
 0 & 0 & -\frac{1}{2} & 0 & 0 & 0 & 0
\end{pmatrix}
\end{gather*}
and $\Phi_{11}=0$.

Let $S = K_3 + \alpha K_{11}$, where $\alpha \in \RR$ is a parameter.
Then $S$ is given by the data $(\xi,\Phi)$ with
\begin{equation}
  \label{eq:Reeb7.4.2}
  \xi = \alpha \be_7 \qquad\text{and}\qquad \Phi = \Phi_3~.
\end{equation}

We claim that for some choice of metric, $S$ is the Reeb vector field
of a Sasakian structure on $P$.  We recall that an odd-dimensional
riemannian manifold $(P,g)$ is Sasakian if and only if its metric cone
$(\Pti = \RR^+\times P, \gti = dr^2 + r^2 g)$, for $r>0$ the
coordinate on $\RR^+$, is Kähler.  Let us see what this means
intrinsically.

On $\Pti$ there is a hermitian structure $J$ which is compatible with
$\gti$ and is parallel relative to the riemannian connection $\nablat$
of $\gti$.  Let $\omega$ denote the associated Kähler form:
$\omega(X,Y) = \gti(JX,Y)$.  Let $E = r\frac{\partial}{\partial r}$
denote the Euler vector field.  Its derivative $\nablat E$ is the
identity endomorphism: $\nablat_X E = X$ for all vector fields $X$ on
the cone, whence it generates homotheties of $(\Pti,\gti)$.  Define a
1-form $\etati$ on the cone by $\etati = \iota_E \omega$.  In other
words, $\etati(X) = \omega(E,X) = \gti(JE,X)$.  A quick calculation
shows that $\etati$ scales with weight $2$ under the homotheties
generated by $E$.  Since $\iota_E \etati=0$, we see that
$r^{-2}\etati$ is \emph{basic}; that is, there is a 1-form $\eta$ on
$P$ such that $r^{-2}\etati = \pi^*\eta$ with $\pi: \Pti \to P$ the
natural projection $(r,p) \mapsto p$.  It is another relatively
straightforward calculation to show that $d\etati = 2 \omega$, whence
\begin{equation}
  \omega = \half d\etati = \half d (r^2 \pi^* \eta) = r dr \wedge \pi^*\eta + \half r^2 \pi^* d\eta~.
\end{equation}
Let $P$ be $(2n+1)$-dimensional.  Since $\omega$ is a Kähler form,
\begin{equation}
  \omega^{n+1} = n 2^{-n} r^{2n+1} \pi^*(\eta \wedge (d\eta)^n)
\end{equation}
is nowhere vanishing, which implies that $\eta \wedge (d\eta)^n$ is
nowhere vanishing and thus $\eta$ defines a contact structure on $P$.

Let us define the vector field $\Sti = JE$.  Since $\gti(\Sti,E) =
\gti(JE,E) = 0$, $\Sti$ restricts to a vector field on $P$, which we
may and will think of as the $r=1$ slice of $\Pti$.  It is an easy
calculation to show that $\Sti$ is a Killing vector field with norm
$r^2$, whence it restricts to a unit-norm Killing vector $S$ on $P$
and hence $\eta(S) = 1$.  The covariant derivative $\phi = -\nabla S$
of $S$ on $M$ defines a complex structure on the distribution $\eD$
orthogonal to the one spanned by $S$ itself.  Indeed, $\nablat \Sti =
\nablat JE = J \circ \nablat E = J$ and hence $\phi$ is defined by
declaring it to coincide with $J$ on the orthogonal complement to the
distribution spanned by $E$ and $\Sti$ and to annihilate $S$: $\phi
(S) =0$.  In other words, one can show that
\begin{equation}
  \phi^2 = -\id + S \otimes \eta~.
\end{equation}

The compatibility between the riemannian and complex structures on the
cone becomes the compatibility between the riemannian and contact
structures on $P$; namely,
\begin{equation}
  \label{eq:metriccontact}
  g(\phi X,\phi Y)= g(X,Y) - \eta(X)\eta(Y)~.
\end{equation}
This simply says that $\phi$ is an isometry on the distribution $\eD$.

Finally, the integrability condition of the hermitian structure (i.e.,
the vanishing of the Nijenhuis tensor) becomes a differential
condition on $\phi$:
\begin{equation}
 (\nabla_X \phi)(Y) = g(X,Y) S - g(S,Y) X~.
\end{equation}
We may rewrite the left-hand side as
\begin{equation}
  (\nabla_X \phi)(Y) = \nabla_X (\phi(Y)) - \phi(\nabla_X Y) = -\nabla_X \nabla_Y S + \nabla_{\nabla_X Y} S = -R(S,X)Y~,
\end{equation}
whence the integrability condition becomes an algebraic equation
involving the curvature tensor:
\begin{equation}
  \label{eq:integrability}
  R(S,X) Y + g(X,Y) S - g(S,Y) X = 0~.
\end{equation}

A Sasaki structure is actually too strict for our needs.  This derives
from the fact that Sasaki structures are not scale invariant, whereas
the supergravity field equations are.  We actually need a somewhat
more relaxed notion of Sasaki structure which says that
$(P,g,\eta,S,\phi)$ is only \emph{homothetic} to a Sasaki structure.
In other words, we (tentatively) say that $(P,g,\eta,S,\phi)$ defines
an \textbf{$r$-Sasaki structure}, if for some $r > 0$, $(P,r^2
g,\eta,S,\phi)$ is a Sasaki structure.  The name comes from the fact
that the $r$-slice of a Kähler cone has such a structure.  Since both
the riemannian connection $\nabla$ and the riemann curvature $R$ are
invariant under homotheties, $(P,g,\eta,S,\phi)$ is an $r$-Sasaki
structure if all equations of a Sasaki structure are obeyed, except
for the following changes:
\begin{enumerate}
\item the normalisation of $S$ is now $g(S,S) = r^{-2}$,
\item the metric compatibility condition \eqref{eq:metriccontact} is now
  \begin{equation}
    \label{eq:r-metriccontact}
    g(\phi X,\phi Y)= g(X,Y) - r^{-2} \eta(X)\eta(Y)~,
  \end{equation}
\item and the integrability condition \eqref{eq:r-integrability} is replaced by
  \begin{equation}
    \label{eq:r-integrability}
    R(S,X) Y + r^2 g(X,Y) S - r^2 g(S,Y) X = 0~.
  \end{equation}
\end{enumerate}

These conditions are easy to check for the homogeneous backgrounds of
interest.  We notice that the new integrability condition
\eqref{eq:r-integrability} is tensorial and only depends on the value
of the Reeb vector field $S$ at the origin and does so linearly.
Therefore in the expression \eqref{eq:Reeb7.4.2} for $S$, the
parameter $\alpha$ is not fixed by \eqref{eq:r-integrability}.
Indeed, it is not difficult to verify that for all $\alpha \neq 0$,
the integrability condition \eqref{eq:r-integrability} is satisfied
provided that $r^2 = \frac{\gamma_3}{4\gamma_1^2}$.  The parameter
$\alpha$ is fixed by normalising $S$ to $g(S,S)=r^{-2}$, which means
$\alpha = \frac{2\gamma_1}{\gamma_3}$.  Comparing with
\eqref{eq:centralKilling} we see that $S=\chi$ the generator of the
centre of the isometry Lie algebra!

In the case of the background with flux given by equation
\eqref{eq:N=5.3F}, the Reeb vector field $S$ does not preserve it,
whence the symmetry Lie algebra of the background is precisely
$\fso(5) \oplus \fso(3,2)$.  Let us briefly explain the calculation of
the Lie derivative $L_S F$ of $F$ along $S$.  Using the Cartan formula
and the fact that $F$ is closed, $L_S F = d i_S F $, so we need to
compute the exterior derivative of the 3-form $i_S F$.  We saw in
Section \ref{sec:comp-homog-spac}, particularly equation \eqref{eq:d},
that the exterior derivative of an invariant form is easy to compute
algebraically.  Now let $X$ be a Killing vector and let us see whether
$i_S F$ is invariant.  We calculate the Lie derivative of $i_S F$
along $X$ to obtain
\begin{equation}
  L_X i_S F = i_S L_X F + i_{[X,S]} F~.
\end{equation}
The first term in the RHS vanishes because $F$ is invariant and the
second term vanishes precisely because the Reeb vector field $S$ is
central, whence $[X,S]=0$ for all Killing vectors $X$.  This means
that we can use equation \eqref{eq:d} to compute $d i_S F$ and we find
that it is not zero.

\subsection{Other $n=5$ backgrounds}
\label{sec:other-backgrounds}

We now look in some detail at two of the backgrounds found in Section
\ref{sec:possible-new-n=5}.

\subsubsection{A supersymmetric Freund--Rubin background}
\label{sec:freund-rubin-backgr}

In this background, the geometry is $S^4 \times X^7$ with $X$ a
lorentzian Sasaki--Einstein manifold.  Since the background is
Freund--Rubin, this means \cite{FigLeiSim} that it is supersymmetric.
Explicitly, in terms of the basis given earlier in this section, and
reintroducing the scale $\lambda\neq 0$, we have
\begin{equation}
  \label{eq:susyFRLES}
  \begin{aligned}[m]
    \lambda^{-2} g &= - (\theta^0)^2 + \tfrac49 \left((\theta^1)^2 + \cdots + (\theta^4)^2\right) + \tfrac23 \left((\theta^5)^2 + \cdots + (\theta^\ten)^2\right)\\
    \lambda^{-3}F &= \tfrac89 \theta^{1234}~.
  \end{aligned}
\end{equation}

Let us now show that $X$ admits an invariant lorentzian
Sasaki--Einstein structure.  In fact, let us consider more generally
any homogeneous lorentzian geometry of the type $(\fso(3,2),\fso(3))$
with $\gamma_2 = \gamma_3 =: \gamma$.  This describes two of the three
backgrounds we have found.  Lorentzian Sasaki structures
$(M,g,S,\eta,\phi)$ are described by a lorentzian odd-dimensional
manifold $(M,g)$ with a timelike Killing vector $S$ normalised to
$g(S,S)=-1$, a contact structure $\eta$ with $\eta(S)=1$ and
endomorphism $\phi = - \nabla S$ with $\phi^2 = -\id + S \otimes \eta$
and subject to slight modifications of the metric compatibility and
integrability conditions \eqref{eq:metriccontact} and
\eqref{eq:r-integrability}, respectively; namely, the lorentzian
metric compatibility condition is now
\begin{equation}
  \label{eq:lor-metric-contact}
  g(\phi X,\phi Y)= g(X,Y) + \eta(X)\eta(Y)~, 
\end{equation}
whereas the integrability condition reads
\begin{equation}
  \label{eq:lor-integrability}
    R(S,X) Y - g(X,Y) S + g(S,Y) X = 0~.
\end{equation}
Similarly, we can consider \textbf{lorentzian $r$-Sasaki structures}
$(M,g,S,\eta,\phi)$, defined in such a way that $(M,r^2
g,S,\eta,\phi)$ is lorentzian Sasaki.  This means that now $S$ is
normalised to $g(S,S)= - r^{-2}$ and that the metric compatibility and
integrability conditions change to
\begin{equation}
  \label{eq:lor-r-metric-contact}
  g(\phi X,\phi Y)= g(X,Y) + r^{-2}\eta(X)\eta(Y)~,
\end{equation}
whereas the integrability condition reads
\begin{equation}
  \label{eq:lor-r-integrability}
    R(S,X) Y - r^2 g(X,Y) S + r^2 g(S,Y) X = 0~.
\end{equation}

The situation here is very similar to that of Section
\ref{sec:ads-backgrounds}.  When $\gamma_2 = \gamma_3$, there is an
enhancement of symmetry to $\fso(3,2) \oplus \fso(2)$, where the
central Killing vector $S$ is determined by the pair $(\xi,\Phi) \in
\fm \oplus \fso(\fm)$, with $\xi = 2 \gamma J_{45}$ and $\Phi$ given
relative to the ordered basis $(J_{i4},J_{i5},J_{45})$ by the matrix
\begin{equation}
  \begin{pmatrix}
    0 & 0 & 0 & 1 & 0 & 0 & 0 \\
    0 & 0 & 0 & 0 & 1 & 0 & 0 \\
    0 & 0 & 0 & 0 & 0 & 1 & 0 \\
    -1 & 0 & 0 & 0 & 0 & 0 & 0 \\
    0 & -1 & 0 & 0 & 0 & 0 & 0 \\
    0 & 0 & -1 & 0 & 0 & 0 & 0 \\
    0 & 0 & 0 & 0 & 0 & 0 & 0
  \end{pmatrix}~.
\end{equation}

We claim, as suggested by the choice of notation, that $S$ is the Reeb
vector field of a lorentzian $r$-Sasaki structure.  It is normalised
to $g(S,S) = - 4 \gamma^2$, whence we expect that $r=\frac1{2\gamma}$
and indeed a straightforward calculation shows that
\begin{equation}
  4 \gamma^2 R(S,X) Y - g(X,Y) S + g(S,Y) X = 0~.
\end{equation}

The nonzero components of the Levi-Civita connection for this metric
are readily calculated using the formulas \eqref{eq:levicivita} and
\eqref{eq:utensorn=5}:
\begin{equation}
  \begin{aligned}[m]
    \nabla_{J_{45}} J_{i4} &= \left(\tfrac1{2\gamma} - 1\right) J_{i5}\\
    \nabla_{J_{i4}} J_{45} &= \tfrac1{2\gamma} J_{i5}\\
    \nabla_{J_{i4}} J_{j5} &= \tfrac12 \delta_{ij} J_{45}\\
  \end{aligned}
  \qquad\qquad
  \begin{aligned}[m]
    \nabla_{J_{45}} J_{i5} &= \left(1-\tfrac1{2\gamma}\right) J_{i4}\\
    \nabla_{J_{i5}} J_{45} &= -\tfrac1{2\gamma} J_{i4}\\
    \nabla_{J_{i5}} J_{j4} &= -\tfrac12 \delta_{ij} J_{45}~.
  \end{aligned}
\end{equation}
Using the formulae \eqref{eq:riccin=5} in the case where $\gamma_2 =
\gamma_3 =: \gamma$, we see that $\Lambda = 1 - 4\gamma^2$ and
\begin{equation}
  \Ric(J_{45},J_{45}) = \frac3{2\gamma^2} \qquad\text{and}\qquad
  \Ric(J_{i4},J_{j4}) = \Ric(J_{i5},J_{j5}) = \delta_{ij} \left( \tfrac1{2\gamma} - 3\right)~,
\end{equation}
whence the metric is Einstein if and only if $-\frac3{2\gamma^2} =
\frac1\gamma (\frac1{2\gamma} -3)$, or equivalently if and only if
$\gamma = \frac23$.

In summary, we have shown that the background with $\gamma_1 =
\frac49$ and $\gamma_2 = \gamma_3 = \frac23$ is a Freund--Rubin
background of the form $S^4 \times X^7$, with $X$ (homothetic to) a
lorentzian Sasaki--Einstein manifold.  Lischewski
\cite{2014arXiv1409.2664L} has shown that this background admits
$N{=}2$ supersymmetry, a fact that can also be deduced in this case
from unpublished results \cite{FigLeiSim}.  Indeed, Killing spinors
take the form
\begin{equation}
  \varepsilon^{Ia} = \zeta^{Ia} \otimes \psi~,
\end{equation}
where $I=1,\dots,4$ and $a=1,2$, $\zeta^{Ia}$ are geometric Killing
spinors on $S^4$ and $\psi$ is a geometric Killing spinor on the
lorentzian Sasaki--Einstein manifold.  The spinor $\varepsilon$ is
subject to a symplectic Majorana condition
\begin{equation}
  \left(\varepsilon^{Ia}\right)^* = \Omega_{IJ}\epsilon_{ab} \varepsilon^{Jb}~,
\end{equation}
with $\Omega_{IJ}$ the $\Sp(2)$-invariant symplectic structure on the
space of Killing spinors of $S^4$, which by Bär's cone construction is
isomorphic as an $\Sp(2)$-module to the space of parallel spinors on
$\RR^5$, which is just the spinor irreducible representation of
$\Spin(5)\cong \Sp(2)$, hence a quaternionic representation.  With a
suitable normalisation of the Killing spinors, the Killing
superalgebra is given by
\begin{equation}
  [\varepsilon^{Ia},\varepsilon^{Jb}] = \Omega^{IJ} \epsilon^{ab}\chi~,
\end{equation}
where $\chi$ is the Reeb vector field of the lorentzian Sasaki
structure.  Therefore the $\SO(3,2) \times \SO(5)$ symmetry is
accidental and only the central $\SO(2)$ symmetry is induced by the
supersymmetry.

This solution looks like it could be obtained via a Wick rotation from
a background of the type $\AdS_4 \times X^7$, with $X$ a
Sasaki--Einstein 7-manifold.\footnote{We are grateful to James
  Lucietti for this suggestion.}  If that is the case, the $\AdS_4$
background must be a Freund--Rubin background and hence must be one of
the backgrounds classified in \cite{CastellaniRomansWarner} and
discussed here in Section~\ref{sec:fso3-isotropy}.  Let us try to
identify it.  The solution we found is described, as a homogeneous
space, by the data $(\fg,\fh) =
(\fso(5)\oplus\fso(3,2),\fso(4)\oplus\fso(3))$.  We may think of
$\fso(5)\oplus \fso(3,2)$ as a real Lie subalgebra of the complex Lie
algebra $\fgl(10,\CC)$ of $10\times 10$ complex matrices.  We are
after a homogeneous $\AdS_4$ Freund--Rubin background which, as a
homogeneous space is described algebraically by the data $(\fg',\fh')
= (\fso(3,2)\oplus \fso(5),\fso(3,1)\oplus \fso(3))$.  Therefore the
Wick rotation we are after is an element $\varpi \in \GL(10,\CC)$ such
that multiplying on both the left and the right by $\varpi$ maps
$(\fg,\fh)$ to $(\fg',\fh')$.  A little experimentation leads us to
the following diagonal matrix
\begin{equation}
  \varpi =
  \begin{pmatrix}
    I_3 & & & \\
    & i I_2 & & \\
    & & I_3 & \\
    & & & i I_2
  \end{pmatrix}\in \GL(10,\CC)~,
\end{equation}
where $I_n$ is the $n\times n$ identity matrix.  The element $\varpi$
thus defines a ``quadruple'' Wick rotation.  The map $X \mapsto \varpi
X \varpi$ sends the Lie subalgebra $\fg = \fso(5)\oplus\fso(3,2)
\subset \fgl(10,\CC)$ which preserves the inner product
\begin{equation}
  \eta =
  \begin{pmatrix}
    I_3 & & & \\
    & I_2 & & \\
    & & I_3 & & \\
    & & & - I_2
  \end{pmatrix}
\end{equation}
to the isomorphic Lie subalgebra $\fg' = \fso(3,2) \oplus \fso(5)$
which preserves the inner product
\begin{equation}
  \eta' =
  \begin{pmatrix}
    I_3 & & & \\
    & -I_2 & & \\
    & & I_3 & & \\
    & & & I_2
  \end{pmatrix}~.
\end{equation}
At the same it sends the subalgebra $\fh = \fso(4) \oplus \fso(3)$ of
$\fg$ to the subalgebra $\fh' = \fso(3,1)\oplus \fso(3)$ of $\fg'$.
It is not hard to show that the homogeneous space described by
$(\fg',\fh')$ admits a Freund--Rubin background, given relative to the
Wick-rotated basis by a similar expression to that of
\eqref{eq:susyFRLES}, namely
\begin{equation}
    \begin{aligned}[m]
    \lambda^{-2} g &= (\theta^0)^2 + \tfrac49 \left((\theta^1)^2 + (\theta^2)^2 +(\theta^3)^2 - (\theta^4)^2\right) + \tfrac23 \left((\theta^5)^2 + \cdots + (\theta^\ten)^2\right)\\
    \lambda^{-3}F &= \tfrac89 \theta^{1234}~.
  \end{aligned}
\end{equation}
The underlying geometry is $\AdS_4 \times X^7$, where $X^7$ is a
Sasaki--Einstein manifold with isometry Lie algebra $\fso(5)\oplus
\fso(2)$ because of the enhancement due to $\gamma_2 = \gamma_3$.  In
fact, it is possible to identify $X^7$ with the real Stiefel manifold
$V_2(\RR^5)$ of orthonormal $2$-frames in $\RR^5$ with the Einstein
metric, equivalently the unit tangent bundle to $S^4$.   This
background is discussed in \cite[Appendix~C]{CastellaniRomansWarner}
and is also discussed in \cite{SorokinKK}, which contains references
to earlier papers.  It is shown in \cite{CastellaniRomansWarner} that
the solution has $N=2$ supersymmetry, just as the Wick-rotated
background found here.

\subsubsection{A circle of backgrounds}
\label{sec:circle-backgrounds}

This background depends on a parameter $\alpha$ which shares the same
underlying geometry:
\begin{equation}
  \begin{aligned}
    \lambda^{-2}g &= - (\theta^0)^2 + \tfrac49 \left((\theta^1)^2 + \cdots + (\theta^\ten)^2\right)\\
    \lambda^{-3}F &= -\tfrac13 \theta^{1234} +\tfrac{1}{\sqrt{3}} \cos\alpha \left(\theta^{0567} - \theta^{059\ten} + \theta^{068\ten} - \theta^{0789}\right)\\
    &\qquad - \tfrac{1}{\sqrt{3}} \sin\alpha \left(\theta^{056\ten} - \theta^{0579} + \theta^{0678} - \theta^{089\ten}\right)\\
    &= -\tfrac13 \theta^{1234} + \tfrac{1}{\sqrt{3}} \theta^0 \wedge \Re \left(e^{i\alpha} (\theta^5 + i \theta^8) \wedge (\theta^6 + i \theta^9)  \wedge (\theta^7 + i \theta^\ten) \right)~.
  \end{aligned}
\end{equation}
The geometry is again $S^4 \times X^7$ with $X$ (homothetic to) a
lorentzian Sasaki manifold, but now it is not Einstein.  This does not
necessarily imply that it is not supersymmetric, since the background
is not of Freund--Rubin type: the 4-form has components in both
factors.  However Lischewski \cite{2014arXiv1409.2664L} has shown that
this background is not supersymmetric by an explicit calculation of
the holonomy algebra of the connection $D$.  This background does not
seem to be Wick-related to an $AdS_4$ background.  In the second
expression for $F$ we recognise a transverse special lagrangian
calibration, which suggests that this background is obtained from a
supersymmetric Freund--Rubin background via the Englert procedure.  It
seems likely that the supersymmetric Freund--Rubin background in
question is the background described in the previous section.
Finally, we remark that although there is an enhancement of the
isometry algebra by an additional central Killing vector, this is not
a symmetry of $F$, whence the symmetry Lie algebra of the background
remains isomorphic to $\fso(3,2)\oplus\fso(5)$.

\section{Summary of results and open problems}
\label{sec:summary-results-open}

We have presented the results of a systematic search for
eleven-dimensional supergravity backgrounds homogeneous under a Lie
group with Lie algebra $\fg_n := \fso(n) \oplus \fso(3,2)$ for
$n=5,6,7$.  The aim of this search is to explore the existence of new
candidate backgrounds with $N>4$ supersymmetry dual to
three-dimensional superconformal field theories.  It is known that
such backgrounds are homogeneous and the structure of the
superconformal algebra is such that the bosonic subalgebra is
isomorphic to $\fg_n$.  Since backgrounds with $N=8$ supersymmetry
have been classified, we have restricted ourselves to $n=5,6,7$;
although we have many partial results for $n=4$ which have not made it
to this paper.

Such homogeneous backgrounds come in two families: those with
underlying geometry $\AdS_4 \times P^7$ and the rest.  We find no new
backgrounds for $n=6,7$, but we find a number of possibly novel
backgrounds with $n=5$ of both types.  Curiously all backgrounds we
find are metrically products.

We find three new backgrounds with underlying geometry $\AdS_4 \times
P^7$, where $P$ is a homogeneous riemannian manifold $\SO(5)/\SO(3)$,
where $\SO(3)$ is the subgroup of $\SO(5)$ which leaves pointwise
invariant a plane in $\RR^5$.  One of the backgrounds can only be
approximated numerically.  Of the other two backgrounds, one of them
is discussed in detail in Section~\ref{sec:ads-backgrounds}, where it
is shown that it is (homothetic to) a Sasaki manifold, whence the
geometry has an enhanced isometry Lie algebra $\fso(3,2) \oplus
\fso(5) \oplus \fso(2)$, where $\fso(2)$ is generated by the Reeb
vector field of the Sasaki structure. The flux is not preserved by the
Reeb vector field, whence the background's symmetry is not enhanced.
In fact, this background is not new, since its existence was mentioned
in \cite[Appendix~C]{CastellaniRomansWarner}.  It can be identified
with the result of applying the Englert procedure to a background
$\AdS_4 \times V_2(\RR^5)$.  As a result it breaks all the
supersymmetry.  We have not analysed the supersymmetry of the other
two backgrounds.

We also have found backgrounds which do not have an $\AdS_4$ factor,
yet still have an $\fso(3,2)$ summand in the symmetry algebra.  We
have found three such backgrounds, all with underlying geometry $S^4
\times Q^7$ with $S^4$ the round 4-sphere and $Q$ a homogeneous
lorentzian manifold $\SO(3,2)/\SO(3)$ but with different kinds of
fluxes.  One of the backgrounds, discussed in detail in
Section~\ref{sec:freund-rubin-backgr} is of Freund--Rubin type since
the flux is proportional to the volume form on $S^4$.  In this case
$Q$ is (homothetic to) a lorentzian Sasaki--Einstein manifold and this
means that the background is supersymmetric, albeit only with $N=2$.
As shown in Section~\ref{sec:freund-rubin-backgr}, this background is
Wick-related to a Freund--Rubin background $\AdS_4 \times V_2(\RR^5)$
already known from classical times \cite{CastellaniRomansWarner}.
There is an enhancement of symmetry and the full isometry algebra is
$\fso(5) \oplus \fso(3,2) \oplus \fso(2)$, with the $\fso(2)$
generated by the Reeb vector field of the Sasaki structure.  We also
find find a circle's worth of backgrounds, described in Section
\ref{sec:circle-backgrounds}, which seems to be the result of applying
the Englert procedure to the Freund--Rubin background just mentioned.
If this is indeed the case, then the background preserves no
supersymmetry.  Here the geometry is lorentzian Sasaki and although
there is an enhancement of the isometry algebra to $\fso(5) \oplus
\fso(3,2) \oplus \fso(2)$, the Reeb vector field does not preserve the
rather complicated flux.  Finally, we also find a background which we
can only approximate numerically.  For this background there is no
enhancement of the symmetry and in particular $Q$ does not have a
homogeneous Sasaki structure.  This numerical background is given by
equation \eqref{eq:numericalS4bg} and we have yet to investigate
whether it preserves any supersymmetry.

\section*{Acknowledgments}

The work of JMF was supported in part by grants ST/G000514/1 ``String
Theory Scotland'' and ST/J000329/1 ``Particle Theory at the Tait
Institute'' from the UK Science and Technology Facilities Council and
that of MU in part by the Nuffield Undergraduate Research Bursary
URB/39258. We are grateful to both institutions for their support.
Twice during the long gestation of this project, JMF visited the
University of Tōkyō and he would like to thank Teruhiko Kawano for
organising the visits and for providing such a pleasant research
environment. We are also grateful to Christoph Nölle for some initial
collaboration on this project, as well as to Paul de Medeiros, Joan
Simón and especially James Lucietti for lending us their ears. We are
particularly grateful to Robert Bryant for his answer to a question on
MathOverflow, recapped in the Appendix, and on which some of the
calculations in Section \ref{sec:geometry-n=5} are based; and to
Andree Lischewski for developing an algorithm to check the
supersymmetry of a homogeneous background and for performing some
checks on the results of Section~8. Finally, we are grateful to Dima
Sorokin for making us aware of his early work on supergravity
compactifications.

The additional work carried out in
Appendix~\ref{sec:addendum_fso3_+-isotropy} was the direct result of
conversations with Alexander~Haupt during the MITP Topical Workshop
``Geometry, Gravity and Supersymmetry'' (GGSUSY2017), held at the
Mainz Institute for Theoretical Physics, to whom we are grateful for
their support and hospitality and for providing such a stimulating
research atmosphere.

\appendix

\section{Isometries of a homogeneous space}
\label{sec:isom-homog-space}

Let $(M,g)$ be a homogeneous riemannian manifold admitting a
transitive action of $G$ with generic stabiliser $H$, so that $M$ is
diffeomorphic to $G/H$.  This means that $G$ is a subgroup of the
isometry group of $(M,g)$, but it could very well be the case that $G$
is a proper subgroup.  The Lie algebra of the group of isometries can
be determined by solving the Killing vector equation on $(M,g)$.  A
Killing vector is determined uniquely by its value at a point and that
of its covariant derivative relative to the Levi-Civita connection.
Indeed, as shown in \cite{KostantHol,Geroch} and discussed in
\cite{FMPHom}, Killing vectors are in bijective correspondence with
parallel sections of $TM \oplus \fso(TM)$, with $\fso(TM) \cong
\Lambda^2T^*M$ the bundle of skewsymmetric endomorphisms of the
tangent bundle, relative to the connection defining the so-called
Killing transport:
\begin{equation}
  D_X
  \begin{pmatrix}
    \xi \\ A
  \end{pmatrix}
  =
  \begin{pmatrix}
    \nabla_X\xi + A(X)\\ \nabla_XA - R(X,\xi)
  \end{pmatrix}~.
\end{equation}
In a homogeneous space, since both $\nabla$ and $R$ are invariant
under isometries, it is possible to turn this into a linear system of
equations with constant coefficients, which can be succinctly
described by lifting the problem to the group $G$.  The following
treatment owes a lot to Robert Bryant \cite{MO75887} via
MathOverflow.

As usual we think of $(M,g)$ as described algebraically by a reductive
split $\fg = \fh \oplus \fm$ together with an $H$-invariant inner
product $\left<-,-\right>$ on $\fm$.  Let us choose bases $(X_a)$ for
$\fh$ and $(Y_i)$ for $\fm$.  The structure constants of $\fg$
relative to these bases are given by
\begin{equation}
  \label{eq:strconsts}
  [X_a, X_b] = f_{ab}{}^c X_c \qquad [X_a, Y_i] = f_{ai}{}^j Y_j \qquad [Y_i, Y_j] = f_{ij}{}^a X_a + f_{ij}{}^k Y_k~.
\end{equation}
Let $(\psi^a)$ and $(\theta^i)$ denote the left-invariant
Maurer--Cartan one-forms on $G$ dual to the chosen bases for $\fg$.
The structure equations are
\begin{equation}
  \label{eq:structeqns}
  d\psi^c = - \half f_{ab}{}^c \psi^a \wedge \psi^b - \half f_{ij}{}^c \theta^i \wedge \theta^j \qquad
  d\theta^k = -\half f_{ij}{}^k \theta^i \wedge \theta^j - f_{ai}{}^k \psi^a \wedge \theta^i~.
\end{equation}
The $H$-invariant inner product on $\fm$ has components
$\eta_{ij}:=\left<Y_i,Y_j\right>$ relative to the chosen basis and
$\eta_{ij}\theta^i\theta^j$ is the pullback to $G$ of the invariant
metric on $M=G/H$.  The invariance of the inner product means that
$f_{aij} = - f_{aji}$, where here and in the sequel, we lower indices
using $\eta_{ij}$, so that $f_{aij} = f_{ai}{}^k\eta_{kj}$.

Let $\omega^i{}_j$ denote the connection 1-form defined by
\begin{equation}
  \label{eq:connection1form}
  d\theta^i = - \omega^i{}_j \wedge \theta^j \qquad\text{and}\qquad \omega_{ij} = - \omega_{ji}~.
\end{equation}
The structure equations allow us to solve for $\omega^i{}_j$:
\begin{equation}
  \label{eq:omega}
  \omega^i{}_j = f_{aj}{}^i \psi^a - \half f_{jk}{}^i \theta^k + \half \eta^{il} \left(f_{ljk} + f_{lkj}\right)\theta^k~.
\end{equation}

The curvature 2-form
\begin{equation}
  \label{eq:curv2form}
  \Omega^i{}_j =  d\omega^i{}_j + \omega^i{}_k \wedge \omega^k{}_j
\end{equation}
can be shown to be horizontal, whence it can be expressed only in
terms of the $(\theta^i)$:
\begin{equation}
  \Omega^i{}_j = \half R^i{}_{jkl} \theta^k \wedge \theta^l~.
\end{equation}

A Killing vector field on $M=G/H$ lifts to a vector field on $G$ which
is defined by the data $(\xi,A)$, where
\begin{equation}
  \label{eq:KillingEqn}
  d\xi^i + \omega^i{}_j \xi^j = - A^i{}_j \theta^j \qquad\text{and}\qquad A_{ij} = - A_{ji}~.
\end{equation}
Differentiating this equation and using the structure equations
\eqref{eq:connection1form}, \eqref{eq:curv2form} and Killing's
equation \eqref{eq:KillingEqn} itself, we arrive at
\begin{equation}
  \label{eq:Killingtransport}
  (dA^i{}_j + [\omega,A]^i{}_j)\wedge \theta^j = - \Omega^i{}_j \xi^j~.
\end{equation}

The following simple result is very useful.

\begin{lem}
  Let $M^i{}_j$ be a matrix of 1-forms such that
  \begin{equation}
    M^i{}_j \wedge \theta^j = 0 \qquad\text{and}\qquad M_{ij}=-M_{ji}~.
  \end{equation}
  Then $M=0$.
\end{lem}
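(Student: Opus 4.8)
The plan is to turn the exterior identity $M^i{}_j\wedge\theta^j=0$ into an algebraic statement about the components of $M$ in the coframe, and then to observe that the two symmetry hypotheses are incompatible unless $M$ vanishes. First I would expand each $M^i{}_j$ in the global coframe $(\theta^i,\psi^a)$ on $G$, writing $M^i{}_j = M^i{}_{jk}\theta^k + M^i{}_{ja}\psi^a$. Wedging with $\theta^j$ and using that the $2$-forms $\theta^k\wedge\theta^j$ (for $k<j$) and $\psi^a\wedge\theta^j$ are linearly independent, the hypothesis $M^i{}_j\wedge\theta^j=0$ forces $M^i{}_{ja}=0$ — so $M^i{}_j$ is automatically horizontal, and no such hypothesis is needed — together with $M^i{}_{jk}=M^i{}_{kj}$, i.e.\ the components $M_{ijk}$ (lowering the upper index with $\eta$) are \emph{symmetric in their last two slots}. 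The second hypothesis $M_{ij}=-M_{ji}$ then says precisely that $M_{ijk}$ is \emph{antisymmetric in its first two slots}.

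The key step is the standard ``index-cycling'' argument showing that a three-index array which is symmetric in one adjacent pair and antisymmetric in the other adjacent pair must vanish. Alternately applying symmetry in the last two indices and antisymmetry in the first two gives the chain
\[
  M_{ijk} = M_{ikj} = -M_{kij} = -M_{kji} = M_{jki} = M_{jik} = -M_{ijk}~,
\]
whence $M_{ijk}=0$ for all $i,j,k$, and therefore $M=0$.

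I do not expect any genuine obstacle here: the content is entirely formal. The only points requiring a little care are (i) noting at the outset that contraction with $\theta^j$ annihilates any component of $M$ along the $\psi^a$, so the lemma holds as stated without an explicit horizontality assumption, and (ii) keeping the signs straight through the six equalities of the cycling argument.
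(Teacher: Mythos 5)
Your proof is correct and follows essentially the same route as the paper's: expand in components, extract symmetry in the last two indices from the wedge condition, combine with antisymmetry in the first two, and run the six-step cycling chain. Your additional observation — that wedging with $\theta^j$ already annihilates any $\psi^a$-components, so horizontality need not be hypothesised — is a small but genuine refinement, since in the application the $1$-form $\rho^i{}_j = dA^i{}_j + [\omega,A]^i{}_j$ is not manifestly horizontal, whereas the paper's proof tacitly starts from $M^i{}_j = M^i{}_{jk}\theta^k$.
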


\begin{proof}
  Write $M^i{}_j = M^i{}_{jk}\theta^k$.  The condition
  $M^i{}_j \wedge \theta^j = 0$ becomes
  $M^i{}_{jk}\theta^k \wedge \theta^j = 0$, which is equivalent to
  $M^i{}_{jk} = M^i{}_{kj}$.  Lowering the index with $\eta$, this is
  equivalent to $M_{ijk} = M_{ikj}$; but since $M_{ijk} = - M_{jik}$,
  we see that
  \begin{equation}
    M_{ijk} = - M_{jik} = - M_{jki} = M_{kji} = M_{kij} = - M_{ikj} =
    - M_{ijk} \quad\text{hence}\quad M_{ijk} = 0~.
  \end{equation}
\end{proof}

Equation \eqref{eq:Killingtransport} says that
$\Omega^i{}_j \xi^j + \rho^i{}_j \wedge \theta^j =0$, where
$\rho^i{}_j = dA^i{}_j + [\omega,A]^i{}_j$.  Using the lemma, we can
give an alternate expression for $\rho$ in terms of the curvature.
Indeed,
\begin{align*}
  \Omega^i{}_j \xi^j + \rho^i{}_l \wedge \theta^l &= \half R^i{}_{jkl}\xi^j \theta^k \wedge \theta^l + \rho^i{}_l \wedge \theta^l\\
  &= \left(\half R^i{}_{jkl}\xi^j \theta^k + \rho^i{}_l\right) \wedge \theta^l~,
\end{align*}
but the algebraic Bianchi identity says that $R^i{}_{jkl} = -
R^i{}_{ljk} - R^i{}_{klj}$, whence
\begin{align*}
  \half R^i{}_{jkl} \xi^j\theta^k \wedge \theta^l &= - \half (R^i{}_{ljk} + R^i{}_{klj}) \xi^j\theta^k \wedge \theta^l \\
  &= R^i{}_{lkj} \xi^j\theta^k \wedge \theta^l~.
\end{align*}
This implies that
\begin{equation}
 \left( R^i{}_{lkj} \xi^j\theta^k + \rho^i{}_l \right)\wedge \theta^l = 0~,
\end{equation}
but since $R_{ijkl} = - R_{jikl}$ and $\rho_{ij} = -\rho_{ji}$, the
lemma says that
\begin{equation}
  \label{eq:rho}
  \rho^i{}_l = R^i{}_{ljk} \xi^j\theta^k~.
\end{equation}
It is convenient to think of $\rho$ as a bilinear in $\xi$ and
$\theta$ and define
\begin{equation}
  \label{eq:rhobi}
  \rho(\xi,\theta)^i{}_l = R^i{}_{ljk} \xi^j\theta^k~.
\end{equation}
Therefore equation \eqref{eq:Killingtransport} together with the
lemma, imply that
\begin{equation}
  \label{eq:Killingtrans2}
  dA + [\omega,A] = \rho(\xi,\theta)~,
\end{equation}
where we have dropped the indices and interpreted this equation as a
matricial equation.

Differentiating equation \eqref{eq:Killingtrans2} and using the
various structure equations to eliminate the derivatives  $d\xi$,
$dA$, $d\omega$ and $d\theta$ from the expression, we arrive at
\begin{equation}
  \label{eq:Killingtrans3}
  [\Omega,A] - [\omega,\rho(\xi,\theta)] + \rho(\xi,\omega\wedge\theta) + \rho(\omega\xi,\theta) + \rho(A\theta,\theta)= 0~,
\end{equation}
where $[\Omega,A] = \Omega A - A \Omega$ and
$[\omega,\rho(\xi,\theta)] = \omega \wedge \rho(\xi,\theta) +
\rho(\xi,\theta) \wedge \omega$.  The beauty of equation
\eqref{eq:Killingtrans3} is that it is linear on $\xi,A$ with constant
coefficients!

Differentiating further and using the various structure equations
again to eliminate derivatives, yields new linear equations with
constant coefficients.  Eventually this process will terminate, in the
sense that no new equations are obtained.  When this happens, we are
left with a set of linear equations in $(\xi,A)$ whose solution space
is the Lie algebra of isometries of $(M,g)$ with Lie bracket given by
\begin{equation}
  \label{eq:Liebracket}
  [(\xi_1,A_1), (\xi_2,A_2)] = \left(A_1 \xi_2 - A_2 \xi_1, [A_1,A_2] - R(\xi_1,\xi_2)\right)~,
\end{equation}
as proved, for example, in \cite[§3]{FMPHom}.

\section{Addendum to Section~\ref{sec:fso3_+-isotropy}}
\label{sec:addendum_fso3_+-isotropy}
\begin{center}
  by Alexander S. Haupt
\end{center}

% \begin{flushright}
% 24 May 2017
% \end{flushright}

In section~\ref{sec:fso3_+-isotropy}, the subalgebra $\fso(3)_+$ of
$\fso(5)$ is chosen as the isotropy algebra.  This leads to a priori
four metric parameters $\gamma_{1,2,3,4}>0$ and eight real parameters
$\varphi_{1,2,3,4,5,6,7,8}$ describing the space of invariant closed
4-forms.  The section concludes with a partial analysis of the
remaining case where $\varphi_3 = \varphi_4 = \varphi_6=0$ (see
page~\pageref{eq:EinstEqn=5so3+}). The purpose of this appendix is to
analyse the remaining case in full generality, thereby closing a small
gap in the systematic search for eleven-dimensional supergravity
backgrounds homogeneous under a Lie group with Lie algebra
$\fso(5) \oplus \fso(3,2)$.

For the case where $\varphi_3 = \varphi_4 = \varphi_6=0$, it remains
to solve the Maxwell and the Einstein equations expressed
in~\eqref{eq:MWEqn=5so3+} and~\eqref{eq:EinstEqn=5so3+},
respectively. The first three equations in~\eqref{eq:MWEqn=5so3+} are
identically satisfied owing to $\varphi_3 = \varphi_4 =
\varphi_6=0$. The remaining nine equations in~\eqref{eq:MWEqn=5so3+}
and~\eqref{eq:EinstEqn=5so3+} can be turned into a system of
polynomial equations upon introducing a new auxiliary variable $A$
subject to the constraint $A^2 = \gamma_1 \gamma_2 \gamma_3$. In
total, this yields a system of ten polynomial equations in the ten
unknowns $(\gamma_{1,2,3,4}, \varphi_{1,2,5,7,8}, A)$ of degree at
most six,
\begin{align}
  0 &= A^2-\gamma _1 \gamma _2 \gamma _3 , \nonumber \\
  0 &= \gamma _3 \gamma _4^2 \varphi _2+\gamma _2 \gamma _4^2 \varphi _5+\gamma _1 \gamma _4^2 \varphi _7 \pm A \gamma _4^2 \varphi _1 \varphi _7 + A^2 \varphi _8 , \nonumber \\
  0 &= \gamma _3 \gamma _4^2 \varphi _2-\gamma _2 \gamma _4^2 \varphi _5-\gamma _1 \gamma _4^2 \varphi _7 \mp A \gamma _4^2 \varphi _1 \varphi _5 + A^2 \varphi _8 , \nonumber \\
  0 &= \gamma _3 \gamma _4^2 \varphi _2-\gamma _2 \gamma _4^2 \varphi _5+\gamma _1 \gamma _4^2 \varphi _7 \pm A \gamma _4^2 \varphi _1 \varphi _2 - A^2 \varphi _8 , \nonumber \\
  0 &= 2 \gamma _3 \varphi _2+2 \gamma _2 \varphi _5-2 \gamma _1 \varphi _7 \mp A \varphi _1 \varphi _8 , \label{eq:Polysn=5so3+} \\
  0 &= 2 A^2 (\gamma_1 + \gamma_2 + \gamma_3 + 15 \gamma_4^2 - 12 \gamma_4) + 2(\gamma_1^2 + \gamma_2^2 + \gamma_3^2) \gamma_4^2 - (\gamma_1+\gamma_2+\gamma_3)^2 \gamma_4^2 - A^2 \gamma_4^2 \varphi_1^2 , \nonumber \\
  0 &= 6 \gamma_1 \gamma_2 \gamma_4 (1-\gamma_4) +\gamma_1 \gamma_4^2+\gamma_2 \gamma_4^2-\gamma_3 \gamma_4^2 - A^2 - \varphi_2^2 , \nonumber \\
  0 &= 6 \gamma_1 \gamma_3 \gamma_4 (1-\gamma_4) +\gamma_1 \gamma_4^2-\gamma_2 \gamma_4^2+\gamma_3 \gamma_4^2 - A^2 - \varphi_5^2 , \nonumber \\
  0 &= 6 \gamma_2 \gamma_3 \gamma_4 (1-\gamma_4) - \gamma_1 \gamma_4^2+\gamma_2 \gamma_4^2+\gamma_3 \gamma_4^2 - A^2 - \varphi_7^2 , \nonumber \\
  0 &= 2 \left(3 \gamma_4 (2- \gamma_4) - (\gamma_1 + \gamma_2 + \gamma_3)\right) \gamma_4^2 - \varphi_8^2 , \nonumber
\end{align}
where the sign ambiguity of the terms linear in $A$ in the second to
fifth equations is due to the choice of the positive or negative
branch of the square root $A = \pm \sqrt{\gamma_1 \gamma_2
  \gamma_3}$. The system of polynomial
equations~\eqref{eq:Polysn=5so3+} is well-suited for a computer-based
Gr\"obner basis computation, with the polynomials on the right-hand
sides forming the input set. Using the computer algebra system Magma,
we compute a Gr\"obner basis with lexicographic monomial ordering,
where in addition the order of variables is taken to be
$(\varphi_1, \varphi_8, \varphi_2, \varphi_7, \varphi_5, \gamma_3,
\gamma_1, \gamma_2, A, \gamma_4)$. The computation, performed on a
compute-server with 24 Intel Xeon E5-2643 3.40 GHz processors and 512
GB of RAM, took 55 minutes to run and consumed about 545 MB of RAM.

The resulting Gr\"obner basis contains 561 polynomials with on average
132 terms per polynomial. The numerical coefficients range up to order
$10^{32}$. These numbers are independent of the sign ambiguity
stemming from the choice $A = \pm \sqrt{\gamma_1 \gamma_2
  \gamma_3}$. Regardless of the apparent complexity of the resulting
Gr\"obner basis, it is straightforward to find the vanishing locus of
these polynomials by virtue of the so-called elimination property
satisfied, under certain conditions, by Gr\"obner bases obtained with
respect to lexicographic monomial orderings. Restricting to the
physically relevant solutions where $\gamma_{1,2,3,4}>0$ and
$\varphi_{1,2,5,7,8} \in \mathbb{R}$ (this implies $\gamma_4 < 2$ and
$\gamma_1 + \gamma_2 + \gamma_3 \leq 3$, as noted
below~\eqref{eq:EinstEqn=5so3+}, as well as
$A\in\mathbb{R}\setminus\{0\}$), we find a priori seven types of
discrete solutions, as summarized in the following table.
\begin{center}\def\arraystretch{1.4}
\begin{tabular}[h]{|c|c|c|c|c|c|c|c|c|c|c|c|}\hline
 counter & $\gamma _1$ & $\gamma _2$ & $\gamma _3$ & $\gamma _4$ & $\varphi _1^2$ & $\varphi _2^2$ & $\varphi _5^2$ & $\varphi _7^2$ & $\varphi _8^2$ & $\varphi_i$-signs & $\#$ \\ \hline\hline
 $(1)$ & $1$ & $1$ & $1$ & $1$ & $9$ & $0$ & $0$ & $0$ & $0$ & $(\pm,0,0,0,0)$ & $2$ \\ \hline
 $(2)$ & $\frac{9}{25}$ & $\frac{9}{25}$ & $\frac{9}{25}$ & $\frac{9}{5}$ & $9$ & $0$ & $0$ & $0$ & $0$ & $(\pm,0,0,0,0)$ & $2$ \\ \hline
 $(3)$ & $\frac{3}{10}$ & $\frac{3}{10}$ & $\frac{3}{10}$ & $\frac{3}{2}$ & $\frac{24}{5}$ & $\frac{243}{1000}$ & $\frac{243}{1000}$ & $\frac{243}{1000}$ & $\frac{243}{40}$ & $(+,\pm,\pm,\mp,\pm)$ & $2$ \\ \hline
 $(4)$ & $\frac{5}{6}$ & $\frac{5}{6}$ & $\frac{5}{6}$ & $\frac{5}{6}$ & $\frac{24}{5}$ & $\frac{125}{216}$ & $\frac{125}{216}$ & $\frac{125}{216}$ & $\frac{125}{216}$ & \begin{tabular}[c]{@{}c@{}}$(-,\pm,\pm,\pm,\mp)$,\\ $(-,\pm,\mp,\pm,\pm)$,\\ $(-,\mp,\pm,\pm,\pm)$\end{tabular} & $6$ \\ \hline
 $(5)$ & $\frac{2}{3}$ & $\frac{2}{3}$ & $\frac{4}{3}$ & $\frac{2}{3}$ & $3$ & $0$ & $\frac{32}{27}$ & $\frac{32}{27}$ & $0$ & $(-,0,\pm,\pm,0)$ & $2$ \\ \hline
 $(6)$ & $\frac{4}{3}$ & $\frac{2}{3}$ & $\frac{2}{3}$ & $\frac{2}{3}$ & $3$ & $\frac{32}{27}$ & $\frac{32}{27}$ & $0$ & $0$ & $(-,\pm,\mp,0,0)$ & $2$ \\ \hline
 $(7)$ & $\frac{2}{3}$ & $\frac{4}{3}$ & $\frac{2}{3}$ & $\frac{2}{3}$ & $3$ & $\frac{32}{27}$ & $0$ & $\frac{32}{27}$ & $0$ & $(-,\pm,0,\pm,0)$ & $2$ \\ \hline
\end{tabular}
\end{center}
Here, the first column represents a counter in order to distinguish
the solutions and the last column contains the multiplicities of the
solutions originating from the sign choices stated in the penultimate
column. The values of the variables $\varphi_{1,2,5,7,8}$ are given by
the square roots of columns six to ten with the possible combinations
of positive and negative branches of the square roots listed in the
penultimate column. The sign choices for the values of the variables
$\varphi_{1,2,5,7,8}$ are understood to be correlated.

Note that solutions~$(6)$ and~$(7)$ reduce to solution~$(5)$ due to
the order-3 symmetry~\eqref{eq:order3symm}. This effectively reduces
the above list to the five cases $(1)$--$(5)$.  It is also worth
noting that the solutions are insensitive to the sign ambiguity
in~\eqref{eq:Polysn=5so3+} originating from the choice
$A = \pm \sqrt{\gamma_1 \gamma_2 \gamma_3}$. In addition, we remark
that intermediate steps of the calculation indicate the presence of
continuous families of solutions. It turns out, however, that the
one-parameter families of solutions are located in unphysical branches
of solution space where at least one of the four variables
$\gamma_{1,2,3,4}$ vanishes.

Comparing with the solutions already obtained in
section~\ref{sec:homogeneous-anti-de}, we conclude that solution~$(1)$
corresponds to the original Freund-Rubin background, whereas
solution~$(2)$ can be identified, upon redefining
$\gamma_{1,2,3,4} \to \gamma_{1,2,3,4}^{-1}$, with the squashed
7-sphere solution (cf. remarks below
\eqref{eq:order3symm}). Solution~$(3)$ is equal to the squashed
Englert solution~\eqref{eq:SquashedEnglertSoln=5so3+}. In addition,
solution~$(4)$ corresponds to the Englert
solution~\eqref{eq:EnglertSoln=7}, upon rescaling
$\theta^i \to 3 \theta^i$, $i=4,\ldots,9, \ten$. Finally,
solution~$(5)$ is seen to be the Pope-Warner
solution~\eqref{eq:PopeWarnerSoln=5so3+}.

In summary, the system of polynomial equations~\eqref{eq:Polysn=5so3+}
corresponding to the remaining case
$\varphi_3 = \varphi_4 = \varphi_6=0$ of
section~\ref{sec:fso3_+-isotropy} yields five distinct physically
relevant solutions, all of which can be mapped to solutions already
obtained in section~\ref{sec:homogeneous-anti-de}. This concludes our
analysis of the remaining case where
$\varphi_3 = \varphi_4 = \varphi_6=0$.

% \bibliographystyle{utphys}
% \bibliography{Sugra,Geometry,Algebra}

\providecommand{\href}[2]{#2}\begingroup\raggedright\endgroup

\end{document}